\documentclass[USenglish]{article}

\usepackage{fullpage}

\title{\bf Bi-Factor Approximation Algorithms for Hard-Capacitated \texorpdfstring{$k$}{k}-Facility Location Problems\footnote{An extended abstract of this manuscript appeared at SODA 2015.}}

\author{Jaros\l{}aw Byrka
	\thanks{Institute of Computer Science, University of Wroc\l{}aw, Poland} 
	\and Krzysztof~Fleszar
	\thanks{Department of Mathematical Engineering, Universidad de Chile, Chile. Supported by Conicyt PCI PII 20150140 and Millennium Nucleus Information and Coordination in Networks RC130003} 
	\and Bartosz Rybicki
	\thanks{Institute of Computer Science, University of Wroc\l{}aw, Poland. Supported by NCN 2012/07/N/ST6/03068.}
	\and Joachim Spoerhase
	\thanks{Lehrstuhl f\"ur Informatik~I, Universit\"at W\"urzburg, Germany}}
\date{}
\usepackage{graphicx}
\usepackage[utf8]{inputenc}
\usepackage[USenglish]{babel} 
\usepackage{amssymb,amsthm} 
\usepackage{mathtools}
\usepackage[ruled, vlined,nokwfunc]{algorithm2e}
\usepackage{enumerate}
\usepackage{tikz}
\usepackage[caption=false]{subfig}
\usepackage{csquotes}
\usepackage{hyperref}

\usepackage{enumitem}
\newlist{listRoman}{enumerate}{1}
\setlist*[listRoman,1]{label=(\roman*)}
\newlist{inlinelistArabic}{enumerate*}{1}
\setlist*[inlinelistArabic,1]{label=(\arabic*)}
\newlist{inlinelistAlph}{enumerate*}{1}
\setlist*[inlinelistAlph,1]{label=(\alph*)}
\newcommand{\kapitel}{paper\xspace}

\newcommand{\NP}{\ensuremath{\mathsf{NP}}\xspace}

	\newcommand{\PROBkfl}{\textsc{{$k$-Facility} Location}\xspace}
	\newcommand{\PROBfl}{\textsc{Facility Location}\xspace}
	\newcommand{\PROBkmed}{\textsc{{$k$-Median}}\xspace}
	\newcommand{\PROBkcen}{\textsc{{$k$-Center}}\xspace}

	\newcommand{\bigOh}{\mathcal{O}}
	\newcommand{\thSuffix}{-th\xspace}
	\newcommand{\Fig}{Figure}
	\newcommand{\fig}{Fig.}
	\newcommand{\floor}[1]{\left\lfloor #1 \right\rfloor}
	\newcommand{\ceil}[1]{\left\lceil #1 \right\rceil}
	\newcommand{\formulaPunctuationSpace}{~}
	\newcommand{\puthalf}{1/2}
	\newcommand{\elementsetminus}[1]{\setminus\{#1\}}%{-#1}
	\newcommand{\todopic}{
	{\begin{picture}(5,5)\thicklines\put(0,0){\line(1,0){5}}
	\put(0,0){\line(1,2){2.5}}\put(5,0){\line(-1,2){2.5}}
	\put(2.5,2){\makebox(0,0){{\rm !}}}\end{picture}}}
	\newcommand{\todo}[1]{{\setlength{\unitlength}{1mm}\todopic}
	\marginpar{{\setlength{\unitlength}{1mm}\todopic}}
	\footnote{{\setlength{\unitlength}{0.8mm}\todopic} #1}}
	\renewcommand{\todo}[1]{} 
	
	\newcommand\prob[1]{\mathbb{P}\left[{#1}\right]}
	\newcommand\expct[1]{\mathbb{E}\left[{#1}\right]}
	\newcommand{\vol}[1]{\operatorname{vol}(#1)}
	\newcommand{\fvol}[2]{\operatorname{vol}_{#1}(#2)}%{\operatorname{vol}(#2)_{#1}}
	\newcommand{\svol}[1]{\operatorname{vol}(#1)}
	\newcommand{\opt}{\operatorname{OPT}^*\!}
	\newcommand{\copen}[1]{\operatorname{c}_{#1}^1}
	\newcommand{\cclosed}[1]{\operatorname{c}_{#1}^0}

	\newcommand{\metricName}{d} 
	\newcommand{\styleOneDistance}[2]{#1(#2)}
	\newcommand{\dist}[2]{\styleOneDistance{\metricName}{#1,#2}}
	\newcommand{\dav}[1]{\metricName_{\textnormal{av}}(#1)}
	\newcommand{\s}[1]{\operatorname{s}(#1)}
	\newcommand{\numbering}{\operatorname{p}_{\metricName}}

	\newcommand{\ds}[1]{\metricName_{#1}}
	\newcommand{\Cs}[1]{\C_{#1}}
	\newcommand{\Fs}[1]{\F_{#1}}
	\newcommand{\sdist}[3]{\styleOneDistance{\ds{#1}}{#2,#3}}
	\newcommand{\edgeSdist}[2]{\styleOneDistance{\ds{#1}}{#2}}
	
	\newcommand{\C}{\ensuremath{\mathcal{C}}}
	\newcommand{\starCenterSet}{\C_{\textrm{sc}}}%{\C_{*}}
	\newcommand{\F}{\ensuremath{\mathcal{F}}}
	
	\newcommand{\Csbig}[1]{\Cs{#1}^{\textrm{b}}}
	\newcommand{\Cssmall}[1]{\Cs{#1}^{\textrm{s}}}

	\newcommand{\vC}{j} 
	\newcommand{\vSc}{s} 
	\newcommand{\supporting}{supporting\xspace}
	\newcommand{\iraised}{{\skew{2.5}\hat i}}%{{\hat \imath}}%{{\hat i}}
	\newcommand{\iraisedA}{{\skew{2.5}\hat i_1}}
	\newcommand{\iraisedB}{{\skew{2.5}\hat i_2}}
	\newcommand{\rFS}{r}
	\newcommand{\demand}{d}

\newcommand{\vectorStyle}[1]{\mathbf{#1}}
	\newcommand{\rest}[1]{#1'}
	\newcommand{\assignForST}{a}
	\newcommand{\x}{\vectorStyle{x}}
	\newcommand{\y}{\vectorStyle{y}}
	\newcommand{\xo}{\vectorStyle{x}^*}
	\newcommand{\yo}{\vectorStyle{y}^*}
	\newcommand{\ys}{\vectorStyle{z}}
	\newcommand{\rys}{\bar{\ys}} 
	\newcommand{\ry}{\bar{\y}} 
	\newcommand{\vv}{\vectorStyle{v}}
\newcommand{\componentStyle}[1]{#1}		
		
		\newcommand{\xComp}{\componentStyle{x}}
		\newcommand{\yComp}{\componentStyle{y}}
		\newcommand{\xoComp}{\componentStyle{x}^*}
		\newcommand{\yoComp}{\componentStyle{y}^*}
		\newcommand{\ysComp}{\componentStyle{z}}
		\newcommand{\rysComp}{\bar{\ysComp}} 
		 
		\newcommand{\vvComp}{\componentStyle{v}}

	\newcommand{\ShortTrees}{Short-Trees}
	\newcommand{\BinaryTrees}{Binary-Trees}
	\newcommand{\FacilityStars}{Facility-Stars}

	\newcommand{\lpDemand}{LP~demand}
	\newcommand{\LPfacLoc}{\mbox{Ck-FL~LP}}
	\newcommand{\LPkMed}{\mbox{Ck-MED~LP}}
	\newcommand{\starTail}{star cluster}
	\newcommand{\starTails}{star clusters}
	\newcommand{\StarTails}{Star Clusters}

	\newcommand{\budOpening}[1]{b_{#1}^{\mathrm{f}}}
	\newcommand{\budDirectConnection}[1]{b_{#1}^{\mathrm{d}}}
	\newcommand{\budRelativeConnection}[1]{b_{#1}^{\mathrm{r}}}
	\newcommand{\bsmall}{b^{\textrm{I}}}
	\newcommand{\bbig}{b^{\textrm{II}}}
	\newcommand{\relaxedSolution}{relaxed solution}
	\newcommand{\aRelaxedSolution}{a \relaxedSolution}
	\newcommand{\relaxedSolutions}{\relaxedSolution s}
									
	\newtheorem{theorem}{Theorem}
	\newtheorem{lemma}{Lemma}
	\newtheorem{corollary}{Corollary}
	\newtheorem{definition}{Definition}
	\newtheorem{observation}[theorem]{Observation}

\newcommand{\qedHereInAlign}{\tag*{\qedhere}}

\begin{document}

	\maketitle
					\begin{abstract}
						The~\PROBkfl problem is a generalization of the classical problems~\PROBkmed and \PROBfl.
						The goal is to select a subset
						of at most~${k}$ facilities that minimizes 
						the total cost of opened facilities 
						and 
						established connections between clients and
						opened facilities. 
						We consider the hard-capacitated version of the
						problem, where a single facility may only serve a limited number of
						clients and creating multiple copies of a facility is not allowed.  
						We construct approximation algorithms slightly violating
						the capacities based on rounding a fractional solution to the
						standard LP.
						
						It is well known that the standard LP (even in the case of uniform
						capacities and opening costs) has unbounded integrality gap if we only allow violating
						capacities by a factor smaller than~${2}$, or if we only allow
						violating the number of facilities by a factor smaller than~${2}$. 
						In this \kapitel, we present the first constant-factor approximation algorithms for the hard-capacitated variants of the problem.
						For uniform capacities, we obtain a~${(2+\varepsilon)}$-capacity violating
						algorithm with approximation ratio~${\bigOh(1/\varepsilon^2)}$; our result has not yet been improved.
						Then, for non-uniform capacities, we consider the case of~\PROBkmed, which is equivalent to~\PROBkfl with uniform opening cost of the facilities. 
						Here, we obtain a~${(3+\varepsilon)}$-capacity violating
						algorithm with approximation ratio~${\bigOh(1/\varepsilon)}$.

						Our algorithms first use the clustering of Charikar et al.\ to 
						partition the facilities into sets where the total fractional opening in each set is at
						least~${1-1/\ell}$ for some fixed~${\ell}$.  Then we exploit the
						technique of Levi, Shmoys, and Swamy developed for the capacitated
						\PROBfl problem, which is to locally group the demand from
						clients to obtain a system of single-demand-node instances.  Next,
						depending on the setting, 	
						we either 
						use a dedicated routing tree on 
						the demand nodes (for non-uniform opening cost), or
						we
						work with
						stars of facilities (for non-uniform capacities),
						to redistribute the
						demand that cannot be satisfied locally within the clusters.
					\end{abstract}
					
					\section{Introduction}\label{sec:introduction} 
					In metric location problems, the input consists of a set~${\C}$ of clients, a set~${\F}$ of facilities and a metric distance function~${\metricName}$ on~${\C \cup \F}$. 
					The goal is to select a subset~${\F' \subseteq \F}$ of facilities, 
					and an assignment of clients to the selected 
					facilities, that together minimize a certain problem-specific cost function. 
					One can think of~${\F}$ being a set of potential facility locations, whereas~${\F'}$ contains locations where we decided to open (build) facilities.		
					
					In the~\PROBkmed setting, we search for a subset~${\F' \subseteq \F}$ of cardinality at most~${k}$ and want to minimize the total cost of assigning clients in~${\C}$ to facilities in~${\F'\!}$, where the cost of assigning a client~${\vSc \in \C}$ to a facility~${i \in \F'}$ equals their metric distance~${\dist{\vSc}{i}}$.		
					The~\PROBkmed problem is a classical \NP-hard problem appearing in a number of realistic optimization scenarios. Consider, for example, the location of actual facilities such as voting points during elections, 
					or power plants in an electrical grid. 
					It also appears in the context of clustering data, where one wishes to partition objects into a fixed number of groups containing similar items. 
					
					Similar to~\PROBkmed is the~\PROBkcen problem, 
					where a subset of~${k}$ facilities is selected but the objective is to minimize the maximum distance between a client and its assigned facility.  
					Another related setting is the \PROBfl problem, where instead of the strict constraint of opening at most~${k}$ facilities, we pay a certain cost~${f_i}$ for opening a facility in location~${i \in \F}$.			
					A common generalization of~\PROBkmed and \PROBfl is~\PROBkfl, where there are both, the location specific facility opening cost and the upper bound of~${k}$ on the number of open facilities.
					Note that~\PROBkmed is equivalent to~\PROBkfl with uniform opening costs\footnote{To reduce~\PROBkfl to~\PROBkmed, guess the number of opened facilities in the optimal solution and use this number as~${k}$. In the other direction, set all opening costs to~${0}$.}.
					
					In this \kapitel, we consider the capacitated versions of~\PROBkmed and~\PROBkfl. 
					In this generalization, each facility~${i\in\F}$ has 
					a capacity~${u_i}$ that constrains us 
					to assign at most~${u_i}$ clients to~${i}$.
					If all capacities are the same, we call such a location problem~\emph{uniform}, and, if there are no restrictions on the capacities, we call such problems~\emph{general} or \emph{non-uniform}. 
					We focus on the versions with \emph{hard capacities}, where 
					each facility may be opened at most once, 
					and with \emph{splittable demand}, where a single client may be served from more than one facility. 
					In the simple case of unit demand clients and integral capacity of facilities, the splittability of demands is not important as we discuss in Section~\ref{sec:bundles}.
					The case of unit demand clients carries the essence of 
					capacitated location problems with splittable demand, and, hence, for the simplicity of the argument, we will only consider unit demands.  
					The case of hard capacities is a generalization of the case of \emph{soft capacities}, 
					where one may open multiple copies of the same facility. 
					We will call such location problems \emph{hard-capacitated} and \emph{soft-capacitated}, respectively. 
					In the setting of uniform capacities, the soft- and hard-capacitated versions of~\PROBkmed are equivalent up to a constant factor in the approximation ratio~\cite{ShiLi2015}.
					
										All these mathematical formulations of location problems, although modeling essentially the same clustering task, behave very differently in the context of approximation. 
					
					Best understood is the~\PROBkcen problem, for which a simple and best possible~${2}$-approximation algorithm was given by Hochbaum and Shmoys~\cite{Hochbaum_Shmoys1985}.
					In recent past, Cygan et al.~\cite{Cygan2012} gave a constant-factor approximation algorithm for the capacitated version of the~\PROBkcen problem. 
					The approximation ratio was subsequently improved to~${9}$ by an algorithm of An et al.~\cite{An_Bhaskara_Svensson2015}
					that is based on a natural linear program (LP) relaxation of capacitated~\PROBkcen. 
					This result narrows down the integrality gap of the natural LP relaxation to either one of the three integers\footnote{Cygan et al.~\cite{Cygan2012} give a simple argument that it suffices to consider tree-metrics on unweighted graphs where the optimum solution has length~${1}$. Then any solution has an integral value.}~${7}$,~${8}$, or~${9}$. The best-known lower bound on the approximation factor is~${3}$~\cite{Cygan2012}. 
					
					After a long line of research, the approximability of the uncapacitated \PROBfl problem 
					has been nearly resolved. The~${1.488}$-approximation algorithm of Li~\cite{ShiLi2013} almost closed the gap with the approximability lower bound of~${1.463}$ by Guha and Khuller~\cite{Guha_Khuller1999}. The approximability of the capacitated variant 
					is much less clear. We know that the soft-capacitated problem admits a~${2}$-approximation by Jain et al.~\cite{Jain2003}, which matches the integrality gap of the standard LP. 
					However, the integrality gap of the standard LP 
					for hard-capacitated \PROBfl is unbounded and, for a while, the only successful approach has been local search, 
					which yields a~${3}$-approximation for uniform capacities~\cite{Aggarwal2013} and a~${5}$-approximation for general capacities~\cite{Bansal2012}. 
					Recently, An, Singh and Svensson~\cite{An_LP_Approx_CFL_2014} were successful in obtaining an LP relaxation that yielded a constant-factor approximation algorithm. By this, they answered one of the ten open questions posed in a textbook of Wiliamson and Shmoys~\cite{Williamson_Shmoys2011}.   
					Of interest for our results is 
					an LP-based~${5}$-approximation algorithm for the case with uniform opening costs that was given by Levi~et~al.~\cite{Levi2012}. 
					We will partly 
					build on their techniques in the construction of our algorithm for capacitated~\PROBkmed. 
					
					Despite the simple formulation,~\PROBkmed appears to be the most difficult to handle of the problems above. 
					The first constant-factor approximation algorithm for the uncapacitated~\PROBkmed was achieved by Charikar et al.~\cite{CharikarkMedConst1999} and had an approximation ratio of~${6\frac{2}{3}}$.
					For a long time, the best approximation ratio was~${3+\varepsilon}$ for any positive~${\varepsilon}$, which was obtained by a local-search method~\cite{Arya2001}. 
					Then, not long ago, Charikar and Li~\cite{Charikar2012} gave a~${3.25}$-approximation algorithm by directly rounding the fractional solution to the standard LP. 
					Next, Li and Svensson gave an LP-based algorithm~\cite{Li_Svensson2013} with approximation ratio~${{(1+\sqrt{3} + \varepsilon)}\approx 2.73 + \varepsilon}$, in which they turn a pseudo-approximation algorithm opening a few too many facilities into an algorithm opening at most~${k}$ facilities. Eventually, two ingredients of this algorithm were optimized by Byrka et al.~\cite{Byrka_budgeted2015corrected,Byrka_budgeted2015} to obtain a~${2.675}$-approximation algorithm for~\PROBkmed.  
					
					Until recently, all constant-factor approximation algorithms for capacitated~\PROBkmed were based on the standard LP.
					Since the standard LP 
					has an unbounded integrality gap, it forces to relax some of the constraints. 
					A natural relaxation is to either allow a violation of the capacities by a small factor (we call the factor \emph{capacity violation}), or to allow opening slightly more than~${k}$ facilities. 
					Note that in the well-known integrality gap example~\cite{DemirciL16}, 
					an integral solution must either violate the capacities by at least a factor of~${2-\varepsilon}$ 
					or open at least~${(2-\varepsilon)k}$ facilities 
					in order to have the connection cost within a constant of the optimal solution cost to the standard LP, 
					even for uniform soft capacities. 
					
					The relaxation led to constant-factor approximation algorithms 
					where the factor violating the relaxed constraint is bounded by a constant.
					Charikar et al.~\cite{CharikarkMedConst1999} obtained such a bi-factor approximation algorithm for the setting of uniform soft capacities.
					They presented a~${16}$-approximation algorithm by violating the capacities by a factor of~${3}$.
					Later, Chuzhoy and Rabani~\cite{Chuzhoy2005} gave the first constant-factor approximation algorithm for the non-uniform soft-capacitated case, 
					bounding the capacity violation and the approximation ratio by two-digit constants.			
	Only recently further progress was made. Aardal et al.~\cite{AardalBGL15} designed a~${(7 + \varepsilon)}$-approximation algorithm for the case of 
	general hard capacities using at most~${2k + 1}$ facilities and respecting all capacity constraints.

					\paragraph*{Our results.} 
					We present two algorithms for hard-capacitated~\PROBkfl that are based on the standard LP, one with general opening costs, and one with general capacities. 
					Our aim is to \emph{not} violate the number of open facilities and, simultaneously, to keep the capacity violation as low as possible. 
					
					First, in Section~\ref{sec:k_FL}, we present an algorithm for uniform~\PROBkfl that is still the best known one in its setting. 
					Its capacity violation of at most~$2+\varepsilon$, for any positive~$\varepsilon$, meets the lower bound enforced by the integrality gap example.
					We note that the presentation in our extended abstract~\cite{ByrkaCapKmed2015} had some inaccuracies, as pointed out by Grover et al.~\cite{Grover16,GroverPrivate17}. In parallel to our preparation of this journal version, Grover et al. were able to achieve a slightly higher violation factor of~$3$ avoiding the issues in our extended abstract~\cite{ByrkaCapKmed2015}. Independently of them, we fixed the issue by making a distinction between \emph{strict} and \emph{relaxed} solutions of stars instances in Section~\ref{sec:bundles}. We could also improve the approximation factor by a constant in comparison to the extended abstract. In particular, we obtain the following result: 	 
					\begin{theorem}
						\label{thm:main_2_eps}
						For any~${\ell}$ with~${\ell\geq 2}$,  there is an approximation algorithm 
						for the 
						uniform hard-capacitated~\PROBkfl problem 
						that computes 
						a solution of cost~${8(\ell+1)^2\cdot\opt}$ which violates
						the capacities by a factor at most~${2+3/(\ell-1)}$, where~${\opt}$ is the
						cost of an optimum solution to the standard LP relaxation.
					\end{theorem}
					
					Next, we examine the non-uniform~\PROBkfl problem with uniform opening costs. Recall that this problem is equivalent to non-uniform~\PROBkmed.
					In Section~\ref{sec_ckm}, 
					we 
					describe 
					the first constant-factor approximation algorithm for the hard-capacitated variant of this problem, and
					achieve a capacity violation at most~${3+\varepsilon}$ for any sufficiently small positive~$\varepsilon$.
					More
					specifically, we prove the following.	
			\begin{theorem}\label{thm:main_3_eps}
				For any~${\varepsilon}$ with~${0<\varepsilon\le 1}$, there is an approximation algorithm 
				for the non-uniform hard-capacitated~\PROBkmed problem that computes 
				a solution of cost~${540/\varepsilon \opt +  144\opt}$ which violates 
				the capacities by a factor at most~${3+\varepsilon}$, where~${\opt}$ is the
				cost of an optimum solution to the standard LP relaxation.
			\end{theorem}

					Both our results for~\PROBkfl are built on the idea of Levi et al.~\cite{Levi2012} to decompose the instance into single-demand-node instances.
					We exploit this in Section~\ref{sec:bundles} where we present the tools used by our algorithms.
			
\paragraph*{Subsequent Work.}~ 
Since the publication of our extended abstract~\cite{ByrkaCapKmed2015}, new results were announced. 
Li~\cite{ShiLi2015} introduces a novel LP relaxation for uniform hard-capacitated~\PROBkmed. This allows him to open only~${k (1+\varepsilon)}$ facilities while respecting all capacity constraints. He further develops the LP relaxation and generalizes the result to the case of non-uniform soft capacities~\cite{LiSODA2016}.  
Byrka et al.~\cite{ByrkaIpco2016} use the LP relaxation for uniform hard capacities to open at most~${k}$ facilities and to violate the capacities only by~${1+\varepsilon}$.
The same outcome is achieved by Demirci and Li~\cite{DemirciL16} for the non-uniform hard-capacitated case.
We believe that our results are still of interest as they are based on the substantially simpler standard LP relaxation. Besides that analyzing this relaxation is an interesting question in its own right, the resulting algorithms might also be advantageous in practical applications. Also our approximation ratio has a better asymptotic dependence on $1/\epsilon$, which may lead to better solutions for medium violation factors.

			\section{\StarTails{} and Star Instances}
			\label{sec:bundles}
			Given a capacitated~\PROBkfl instance~${(\C, \F, k, \metricName, u)}$, we will
			partition the facilities of~${\F}$ into \emph{\starTails{}} (similar to
			Charikar and Li~\cite{Charikar2012}). For this, we first solve the
			following natural LP relaxation denoted by~\LPfacLoc{}, where the variable~${\yComp_i}$ encodes the
			opening value (opening) of the facility~${i}$, and the variable~${\xComp_{i\vC}}$ encodes the assignment
			of the client~${\vC}$ to the facility~${i}$.  
			The variable~${\xComp_{i\vC}}$ can also be viewed as the \emph{\lpDemand{}} of the client~${\vC}$ that is send to the facility~${i}$.
			Recall that we consider unit demands, that is, the total \lpDemand{} of the client~${\vC}$ is~${\sum_{i\in \F} \xComp_{i\vC} = 1}$.
			Throughout this \kapitel, we fix an
			integral parameter~${\ell\geq 2}$ and an optimal fractional solution~${(\xo,\yo)}$ to~\LPfacLoc{} and denote its objective value by~${\opt}$. 
			
			\begin{alignat*}{3}
			&\textrm{minimize }&\sum_{i\in \F, \vC\in \C} &\dist{i}{\vC} \xComp_{i\vC} + \sum_{i \in \F} \yComp_i f_i\quad\quad\quad\quad\quad~
			\end{alignat*}
			\begin{alignat}{5}
			&\textrm{subject to }~&\sum_{i\in \F} \yComp_i &\le k; && \label{lp:volume_k}\tag{LP-1}\\
			&&\sum_{i\in \F} \xComp_{i\vC} &= 1  &&\quad\textrm{for each } \vC\in\C; \label{lp:demand_of_client}\tag{LP-2}\\\
			&&\xComp_{i\vC} &\le \yComp_i   &&\quad\textrm{for each } i\in\F, \vC\in\C; \label{lp:opening_ge_demand}\tag{LP-3}\\
			&&\sum_{\vC\in \C} \xComp_{i\vC} &\le u_i \yComp_i  &&\quad\textrm{for each } i\in\F; \label{lp:capacity_ge_total_demand}\tag{LP-4}\\
			&&\xComp_{i\vC},\yComp_i&\ge0  &&\quad\textrm{for each } i\in\F, \vC\in\C\formulaPunctuationSpace.  \nonumber
			\end{alignat}
		
			A \emph{solution with capacity violations} to~\LPfacLoc{} is a solution that satisfies the weaker version of~\LPfacLoc{} where we drop Constraint~\eqref{lp:capacity_ge_total_demand}. 
			In such a solution, the \emph{capacity violation of a facility~${i\in\F}$} 
			is~${\sum_{\vC\in \C} \xComp_{i\vC}}/({u_i \yComp_i})$.  
			We call such a solution also a \emph{solution with capacity violation~${\gamma}$} if~${\gamma\ge \max_{i\in\F} \gamma_i}$. 
	
			As noted in Section~\ref{sec:introduction}, in order to find a solution, it suffices to compute a feasible integral opening vector for the facilities and a possibly fractional assignment of the clients to the open facilities.
			\begin{lemma}\label{lem:min_cost_flow}
				Given a subset~${\F'\subseteq\F}$ of open facilities for which an assignment of the clients exists, we can efficiently compute such an assignment with minimum cost and splittable demands. Moreover, if the capacities are integral, we can obtain a minimum-cost assignment where no demand is split. 
			\end{lemma}
			\begin{proof}
				Given~${\F'\!}$, we fix the corresponding facility openings in~\LPfacLoc{} and solve the LP to obtain a minimum-cost assignment that possibly is fractional.
				If the capacities are integral and we wish to obtain an integral assignment, we model our problem as a minimum-cost flow problem. 
				For this, we take the complete bi-partite graph with the partite sets~${\C}$ and~${\F'\!}$, orient all edges from~${\C}$ to~${\F'}$ and set their capacities to~${1}$ (or any larger integer value) and their costs corresponding to their length in the metric~${\metricName}$.
				Then we introduce a source node that we connect to every client in~${\C}$ via an edge of cost~${0}$ and capacity~${1}$, and, similarly, we introduce a sink node to which we connect every facility~${i\in\F'}$ via an edge of cost~${0}$ and capacity~${u_i}$. 
				We set the required flow to the number of clients.
				 
				Since all capacities and the flow are integral, there is a minimum-cost flow that is integral and we find it efficiently~\cite{Tardos1985}. Hence, each client is \enquote{assigned} by the flow to exactly one facility. 
				
			\end{proof}
			In order to upper bound the connection cost of assignments returned by Lemma~\ref{lem:min_cost_flow}, we will provide possibly suboptimal, fractional assignments of the clients to the open facilities. By upper bounding these, we obtain an upper bound for the assignment obtained by the lemma.
			
			\paragraph{Preliminaries}
			Before obtaining an integral opening value for every facility, our algorithms will 
			operate on smaller subsets of facilities with possibly fractional openings. 
			To ease the description of these procedures, we introduce some helpful notation.	
				\begin{definition}
					An \emph{opening vector}~${\ys}$ for a subset~${\F'\subseteq\F}$ of facilities contains an opening value~${\ysComp_i\in[0,1]}$ 
					for each facility~${i\in\F'}$ and it contains not other values. 
					We say, a facility~${i\in\F'}$ is 
					\begin{itemize}
						\item \emph{closed in~${\ys}$} if~${\ysComp_i=0}$, 
						\item \emph{\supporting in~${\ys}$} if~${\ysComp_i\in(0,1]}$,
						\item \emph{fractional in~${\ys}$} if~${\ysComp_i\in(0,1)}$,
						\item and \emph{open in~${\ys}$} if~${\ysComp_i=1}$. 
					\end{itemize}
					We define the \emph{volume~${\vol{\ys}}$} of~${\ys}$ as~${\sum_{i\in\F'}\ysComp_i}$, 
					and, for any~${\F''\subseteq\F'\!}$, we use~${\fvol{\ys}{\F''}}$ to denote~${\sum_{i\in\F''}\ysComp_i}$.
					We call~${\ys}$ \emph{almost integral} if at most one~${i\in \F'}$ is fractional in~${\ys}$.
					
					Let~${\F',\F''\subseteq \F}$ be two disjoint sets and~${\ys'}$ an opening vector for~${\F'}$ and~${\ys''}$ an opening vector for~${\F''}$. 
					The \emph{union} of~${\ys'}$ and~${\ys''}$ is an opening vector~${\ys}$ for~${\F'\cup\F''}$ with~${\ysComp_i=\ysComp'_i}$ for each~${i\in\F'\!}$, 
					and~${\ysComp_i=\ysComp''_i}$ for each~${i\in\F''}$.
				\end{definition}

				\begin{definition}\label{def:facVol}
					For any set~${\F'\subseteq\F}$ of facilities, we define its \emph{volume} as~${\fvol{\yo}{\F'}}$. 
				\end{definition}

			In the metric~${\metricName}$, a node can have the same distance to multiple nodes. To avoid ambiguity, 
			we could arbitrarily define one of the multiple nodes to be its \emph{closest} node.
			However, our algorithms will need a stronger property: 
			We will have to avoid cycles of length more than two where, for each node of the cycle, its closest node is its neighbor in a fixed orientation. 
			We can achieve this by assigning to every edge~${\{\vSc,\vC\}\subseteq\C\cup\F}$ a distinct priority~${\numbering(\{\vSc,\vC\})}$.
			Now, informally speaking, the closest node~${\vC}$ to a node~${\vSc}$ is the node with the smallest distance to~${\vSc}$ and, among all the nodes with the smallest distance  to~${\vSc}$, it is the node whose edge connecting to~${\vSc}$ has the smallest value in~${\numbering}$. 
			Suppose there is a cycle as described above, then all its edges have the same distance and exactly one of the edges has the smallest priority value. Both its endpoints are thus closest to each other, implying that the cycle is of length~${2}$; a contradiction.
			We define the notion of~\emph{closeness} more precisely.
			\begin{definition}\label{def:closeness}

				Let~${A\subseteq\C\cup\F}$ be a non-empty set and let~${\vSc\in\C\cup\F}$. 
				If~${\vSc\in A}$, then the \emph{closest} node \emph{in~${A}$ to~${\vSc}$} is~${\vSc}$.
				Otherwise,
				let \[{A^{\textrm{min}}_\vSc = \{\vC\in A\mid\ \nexists \vC'\in A : \dist{\vSc}{\vC'}<\dist{\vSc}{\vC}\}}\formulaPunctuationSpace.\]
				The \emph{closest} node \emph{in~${A}$ to~${\vSc}$} is~${\arg\min_{\vC\in A^{\textrm{min}}_\vSc} \numbering(\{\vSc,\vC\})}$.
If~${\vC}$ is the closest node in~${A}$ to~${\vSc}$, we also say: \emph{$\vC\in A$ is closest to~${\vSc}$}.
			\end{definition}

			 			\paragraph{Graphs on Clients and Facilities.}
			 			To simplify the description of our algorithms, we will build directed acyclic graphs based on either the clients or the facilities. 
			 			First, we fix some notations and then present a procedure that we will use to construct forests of rooted in-trees.
			 			
			 			\begin{definition}
			 				For any graph~${G}$, its node set is denoted by~${V(G)}$ and its edge set is denoted by~${E(G)}$. 
			 				Let~${(\vSc,\vC)}$ be an edge of a directed acyclic graph~${G}$.
			 				We call~${(\vSc,\vC)}$ an \emph{outgoing edge of~${\vSc}$} and an \emph{incoming edge of~${\vC}$}. We also call~${\vSc}$ a \emph{son} of~${\vC}$, 
			 				and~${\vC}$ a \emph{father} of~${\vSc}$. 
			 				Sons of the same father are called~\emph{brothers}.
			 				The \emph{indegree} of a node is the number of its incoming edges, and the \emph{outdegree} of a node is the number of its outgoing edges.
			 				Moreover, any node in~${G}$ with outdegree~${0}$ is called a \emph{root}.
			 			\end{definition}
			 			
			 			Below, we present a procedure that, given two disjoint subsets~${A,B\subseteq \F\cup\C}$, 
			 			constructs a directed forest, where each node in~${A}$ has either a directed edge to its closest distinct node in~${A \cup B}$, or is a root (recall Definition~\ref{def:closeness}). We will show that its components are in-trees. The procedure assumes that~${A}$ is not empty and~${A\cup B}$ contains at least two elements. 
			 			
			 			\begin{procedure}
			 				\caption{\ShortTrees($A$,~${B}$)}
			 				Create~${G}$ with initially~${V(G)=A}$ and~${E(G)=\emptyset}$\; 
			 				\ForEach{$\vSc \in A$}{
			 					select~${\vSc'\in A \cup B \elementsetminus{\vSc}}$ closest to~${\vSc}$\; 
			 					add~${\vSc'}$ to~${V(G)}$ if not already contained\;
			 					add the directed edge~${(\vSc,\vSc')}$ to~${E(G)}$\;
			 				}
			 				\ForEach{$(\vSc, \vSc'), (\vSc', \vSc) \in E(G)$}{
			 					remove the edge~${(\vSc, \vSc')}$ from~${G}$\;
			 				}
			 				\Return~${G}$\;
			 			\end{procedure}
			 			
			 			Let~${G}$ be the output of Procedure \ShortTrees($A$,~${B}$) on any disjoint subsets~${A,B\subseteq \F\cup\C}$.
			 			We establish the following properties of~${G}$.

			 			\begin{lemma}\label{decrease_edge_length}
			 				Let~${\vSc,\vSc',\vSc'' \in V(G)}$. If~${(\vSc, \vSc'), (\vSc', \vSc'') \in E(G)}$, 
			 				then~${\dist{\vSc}{\vSc'} \ge \dist{\vSc'}{\vSc''}}$. 
			 			\end{lemma}
			 			\begin{proof}
			 				We have~${\vSc\ne \vSc''}$ because the algorithm removes one edge from
			 				each cycle of length~${2}$.  By the construction of the edge set,~${\vSc''}$ is the closest node 
			 				in~${V(G)}$ 
			 				to~${\vSc'}$. 
			 				Thus, we have~${\dist{\vSc'}{\vSc''}\leq \dist{\vSc}{\vSc'}}$.  
			 			\end{proof}
			 			
			 			\begin{lemma}\label{lem:forest-intrees}
			 				The graph~${G}$ 
			 				is a forest
			 				of in-trees and~${A\subseteq V(G)}$.
			 			\end{lemma}
			 			\begin{proof}
			 				By construction,~${A\subseteq V(G)}$.
			 				Furthermore, every node in the graph 
			 				has outdegree at most~${1}$.
			 				Suppose there is a cycle. Then,
			 				by Lemma~\ref{decrease_edge_length}, all edges have the same length.
			 				Recall Definition~\ref{def:closeness} and consider the edge~${(\vSc, \vSc')}$ on the cycle that yields the smallest priority value~${\numbering(\{\vSc, \vSc'\})}$. 
			 				By the definition of closeness, both endpoints~${\vSc}$ and~${\vSc'}$ are the closest nodes to each other. By the construction of the edge set, the cycle must contain~${(\vSc, \vSc')}$ and~${(\vSc', \vSc)}$;
			 				a contradiction as cycles of length~${2}$ have been removed by the procedure.
			 			\end{proof}
			 			
			 			\begin{lemma}\label{lem:root-son-distance}
			 				Every root in~${G}$ has at least one son, and one of its sons is the closest node in~${A \cup B}$ to the root.
			 				Furthermore, all nodes in~${B\cap V(G)}$ are roots.
			 			\end{lemma}
			 			\begin{proof}
			 				During the construction of~${G}$, every node receives an outgoing edge to its closest node. 
			 				Since the root of an in-tree does not contain any outgoing edge in the returned graph, its outgoing edge is removed at the end of the construction. 
			 				Given the condition on which an edge is deleted, the node closest to the root is one of its sons. 
			 				
			 				The second claim follows from the fact that~${A}$ and~${B}$ are disjoint and that we therefore never add an outgoing edge to a node in~${B}$.
			 			\end{proof}

			\paragraph{\StarTails{}.} 	
			Central for our algorithms is the following quantity.
			For every client~${\vC\in\C}$, we define~${\dav{\vC} = \sum_{i\in\F} \xoComp_{i\vC} \dist{i}{\vC}}$ as \emph{its connection cost in~${(\xo,\yo)}$}. In general, connection cost refers to the cost~${w \cdot l}$ of sending~${w}$ units of demand along a distance~${l}$. 
				
			We select a
			subset~${\starCenterSet\subseteq\C}$ of clients that are \emph{far away} from each
			other with respect to their connection cost. Beginning
			with~${\starCenterSet=\emptyset}$ and~${\C'=\C}$, we select a client~${\vSc\in\C'}$
			with minimum~${\dav{\vSc}}$ (ties are broken arbitrarily) and remove it
			from~${\C'}$ along with every client~${\vC}$ with~${\dist{\vSc}{\vC}\le2\ell\dav{\vC}}$, 
			and add~${\vSc}$ to~${\starCenterSet}$. We repeat this procedure until~${\C'}$
			is empty. We call the clients in~${\starCenterSet}$ \emph{star centers}.

			\begin{lemma}\label{lem:center-distance}
				The following holds:
				\begin{enumerate}
					\item \label{lem:center-distances:center-center}
					For any~${\vSc,\vSc'\in\starCenterSet}$,~${\vSc\not=\vSc'}$, it holds~${\dist{\vSc}{\vSc'}> 2\ell\max\{ \dav{\vSc},\dav{\vSc'}\}}$. 
					\item \label{lem:center-distances:client-center} 
					For any~${\vC\in\C\setminus\starCenterSet}$, there is a client~${\vSc\in\starCenterSet}$ with~${\dist{\vSc}{\vC} \le 2\ell\dav{\vC}}$.
					\item \label{lem:center-distances:dav} 
					For any~${\vSc\in\starCenterSet}$ and any~${\vC\in\C}$, it holds~${\dav{\vSc}\le 2\dav{\vC}}$ if either~${\dist{\vSc}{\vC}\le \min\{\dist{\vSc}{\vSc'}\mid \vSc'\in\starCenterSet,\vSc'\ne \vSc\}/2}$ or~${|\starCenterSet|=1}$.
				\end{enumerate}
			\end{lemma}
			\begin{proof}
				Let~${\vSc,\vSc'\in\starCenterSet}$. 
				Without loss of generality,~${\vSc}$ was added before~${\vSc'}$ into~${\starCenterSet}$ and thus we have~${\dav{\vSc}\le\dav{\vSc'}}$. Since~${\vSc'}$ was not removed from~${\C'}$ when~${\vSc}$ was added to~${\starCenterSet}$, we also have~${2\ell\dav{\vSc'}<\dist{\vSc}{\vSc'}}$, and Claim~\ref{lem:center-distances:center-center} follows.
				
				Next, let~${\vC\in\C\setminus\starCenterSet}$. As~${\vC\not\in\starCenterSet}$, it was removed from~${\C'}$ when a client~${\vSc}$ was added to~${\starCenterSet}$. Thus, we have~${\dist{\vSc}{\vC}\le2\ell\dav{\vC}}$ and Claim~\ref{lem:center-distances:client-center} follows.
				
				Now, let~${\vSc\in\starCenterSet}$ and~${\vC\in\C}$. 
				If~${|\starCenterSet|=1}$, then~${\vSc}$ is the only star center in~${\starCenterSet}$. Then we know that~${\vC}$ was removed from~${\C'}$ when~${\vSc}$ was added to~${\starCenterSet}$ (and possibly~${\vSc=\vC}$).
				Hence, our greedy construction of~${\starCenterSet}$ implies~${\dav{\vSc}\le\dav{\vC}}$, and the claim follows.
				Otherwise, $|\starCenterSet|\ge 2$ and let~${R=\min\{\dist{\vSc}{\vSc'}\mid \vSc'\in\starCenterSet,\vSc'\ne \vSc\}/2}$. 
				By assumption,~${\dist{\vSc}{\vC}\le R}$.
				As a consequence of Claim~\ref{lem:center-distances:center-center},~${R>0}$. 
				Thus,~${\dist{\vSc}{\vC}<2R}$, and either~${\vC=\vSc}$ or~${\vC\in\C\setminus\starCenterSet}$.
				In the first case, the claim follows immediately.
				In the second case, Claim~\ref{lem:center-distances:client-center} gives us a client~${\vSc'\in\starCenterSet}$ with~${\dist{\vSc'}{\vC}\le2\ell\dav{\vC}}$. 
				If~${\vSc=\vSc'}$ and~${\vSc}$ is the only such client for~${\vC}$, then~${\vC}$ was removed from~${\C'}$ when~${\vSc}$ was added to~${\starCenterSet}$.
				As discussed above, this event implies the claim. 
				Otherwise, there is a star center~${\vSc'\in\starCenterSet}$ with~${\dist{\vSc'}{\vC}\le2\ell\dav{\vC}}$ and~${\vSc'\ne\vSc}$.
				By the minimality of~${R}$ and the triangle inequality, we have 
				\[{2R\le \dist{\vSc'}{\vSc}\le \dist{\vSc'}{\vC}+\dist{\vC}{\vSc}\le 2\ell\dav{\vC} + R}\formulaPunctuationSpace.\]	
				Consequently,~${R\le 2\ell\dav{\vC}}$. 
				Claim~\ref{lem:center-distances:center-center} implies~${\ell\dav{\vSc}<R}$ and, hence,  
				altogether, we get~${\ell\dav{\vSc} < 2\ell\dav{\vC}}$ and Claim~\ref{lem:center-distances:dav} follows.
			\end{proof} 
			
			\begin{definition}
				For each star center~${\vSc\in\starCenterSet}$, we define the \starTail{}~${\F_\vSc\subseteq\F}$ as the set of all facilities that are closest to~${\vSc}$, that is,
				\[{\F_\vSc = \{i\in\F \mid \vSc \textrm{ is the closest node in } \starCenterSet \textrm{ to } i \}}\formulaPunctuationSpace.\] 
			\end{definition}
			Recall that by our definition of closeness (Definition~\ref{def:closeness}), each facility has a unique closest star center. Therefore, the \starTails{} partition all facilities in~${\F}$. 

	\begin{figure}
\centering
		\includegraphics{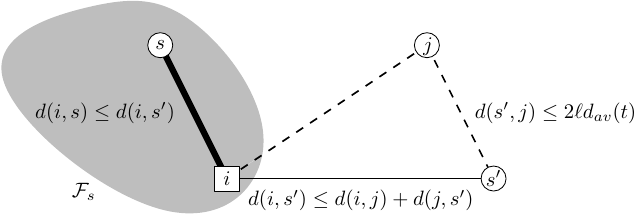}
		\caption{
			Let~${\vSc\in\starCenterSet}$, $\vC\in\C$, $i\in\F_\vSc$, and~${\vSc'\in\starCenterSet}$ be the closest star center to~${i}$.
			The distance~${\dist{i}{\vSc}}$ (thick edge) is bounded by the distance~${\dist{i}{\vSc'}}$ (solid edge), which in turn is bounded by the detour over~${\vC}$ (dashed path).}
		\label{fig:triangle_inequalities} 
	\end{figure}
			\begin{lemma}\label{lem:distance-facility}
				For any~${\vSc\in\starCenterSet}$, ${\vC\in\C}$ and~${i\in\F_\vSc}$, the following holds:
				\begin{enumerate}
					\item \label{lem:distance-facility:fac-center} $\dist{i}{\vSc}~\le~ \dist{i}{\vC}+2\ell\dav{\vC}$
					\item \label{lem:distance-facility:center-center} $\dist{\vSc}{\vC}\le2\dist{i}{\vC}+2\ell\dav{\vC}$
				\end{enumerate}
			\end{lemma}		
			\begin{proof}
				Let~${\vSc'\in\starCenterSet}$ be the star center closest to~${\vC}$ (possibly~${\vSc'\in\{\vSc,\vC\}}$). 
				If~${\vC\not\in\starCenterSet}$, then, by Lemma~\ref{lem:center-distance}.\ref{lem:center-distances:client-center}, we have~${\dist{\vSc'}{\vC}\le2\ell\dav{\vC}}$. 
				Otherwise,~${\vSc'=\vC}$, and the same inequality holds. 
				Consider \fig~\ref{fig:triangle_inequalities}. Since~${i}$ belongs to~${\F_{\vSc}}$, we have  
				\[\dist{i}{\vSc}\leq
				\dist{i}{\vSc'}\leq 
				\dist{i}{\vC}+\dist{\vC}{\vSc'}\leq 
				\dist{i}{\vC}+2\ell \dav{\vC} \] 
				and
				\[\dist{\vSc}{\vC}\leq
				\dist{\vSc}{i}+\dist{i}{\vC}\leq 2\dist{i}{\vC}+2\ell \dav{\vC}\formulaPunctuationSpace.\qedHereInAlign\] 
			\end{proof}

						Next, we bound the volume of facilities that are close to their star centers.
						\begin{lemma}\label{lem:minvolume-R}
							For any positive~${R}$ and any~${\vSc\in\starCenterSet}$, let~${\F_\vSc^R=\{i\in\F_\vSc\mid\dist{i}{\vSc}\le R\}}$.
							We have \[{\fvol{\yo}{\F_\vSc^R}\ge 1-\dav{\vSc}/R}\formulaPunctuationSpace.\]
						\end{lemma}
						\begin{proof}
							Note that at most a portion~${\dav{\vSc}/R}$ of the \lpDemand{}~${\sum_{i\in \F} \xoComp_{i\vSc}}$ of~${\vSc}$ can be served by the facilities in~${\F\setminus\F_\vSc^R}$. 
							Otherwise,
							\[{\dav{\vSc}=\sum_{i\in\F} \xoComp_{i\vSc} \dist{i}{\vSc}\ge R \cdot\!\! \sum_{i\in\F\setminus\F_\vSc^R} \xoComp_{i\vSc}  > R \cdot \dav{\vSc}/R = \dav{\vSc}}\formulaPunctuationSpace;\] 
							a contradiction. 
							Hence at least a portion~${1-\dav{\vSc}/R}$ of the
							\lpDemand{} of~${\vSc}$ is served by the facilities in~${\F_\vSc^R}$. 
							Hence, by Constraint~\eqref{lp:opening_ge_demand},
							\[{\fvol{\yo}{\F_\vSc^R} = \sum_{i\in\F_\vSc^R} \yoComp_i \ge \sum_{i\in\F_\vSc^R} \xoComp_{i\vSc} \ge 1-\dav{\vSc}/R}\formulaPunctuationSpace.\] 
						\end{proof}

						Our last result implies the following bound on the volume of \starTails{}, which has also been shown by Charikar et al.~\cite{CharikarkMedConst1999}.
						\begin{corollary}[\cite{CharikarkMedConst1999}]\label{cor:minvolume-bundle}
							The volume of every \starTail{} is at least~${1-1/\ell}$.
						\end{corollary}	
						\begin{proof}
							By Lemma~\ref{lem:center-distances:center-center}, any facility~${i\in\F}$ lying within the radius~${\ell\dav{\vSc}}$ around a star center~${\vSc\in\starCenterSet}$ belongs to~${\F_\vSc}$. 
							If~${\dav{\vSc}=0}$, then, by the definition of~${\dav{\vSc}}$, all facilities serving~${\vSc}$ must have distance~${0}$ to~${\vSc}$. 
							Thus,~${\fvol{\yo}{\F_\vSc^R}\ge 1}$. 
							If~${\dav{\vSc}>0}$, then we set~${R=\ell\dav{\vSc}}$ and apply Lemma~\ref{lem:minvolume-R}.
						\end{proof}

				In our algorithms and in their analyses, we will charge portions of~${\opt}$ to the star centers. These portions are described by the following quantities.
				\begin{definition}
					For every star center~${\vSc\in\starCenterSet}$, we define the \emph{budgets}
					\begin{itemize}
						\item~${\budOpening{\vSc} = \sum_{i\in\F_{\vSc}} \yoComp_i f_i}$ as the \emph{opening cost budget},
						\item~${\budDirectConnection{\vSc} = \sum_{i\in\F_{\vSc}} \sum_{\vC\in\C} \xoComp_{i\vC} \dist{i}{\vC}}$ as the \emph{direct connection cost budget}, and
						\item~${\budRelativeConnection{\vSc} = \sum_{i\in\F_{\vSc}} \sum_{\vC\in\C} \xoComp_{i\vC} \dav{\vC}}$ as the \emph{relative connection cost budget}.
					\end{itemize}
				\end{definition}
				The idea for these quantities is to distribute the total opening and connection cost 
				among the star centers.
				Then these cost portions can be used by each star center to upper bound some local solutions. Consequently, the total cost of all these solutions will be bounded from above by a function of~${\opt}$. 
				For the first two budgets, we split the total cost evenly between the facilities.
				For a star center~${\vSc\in\starCenterSet}$,
				the budget~${\budOpening{\vSc}}$ 
				is the total opening cost of the facilities in~${\F_{\vSc}}$, 
				and~${\budRelativeConnection{\vSc}}$ 
				is the total connection cost of transporting demand to facilities in~${\F_{\vSc}}$.
				For the third budget, we split the total connection cost proportional to the amount of demand that is served by the facilities of the \starTails{}. To see this, consider the contribution of any client~${\vC\in\C}$ to~${\budRelativeConnection{\vSc}}$. It is~${\dav{\vC} \sum_{i\in\F_{\vSc}} \xoComp_{i\vC}}$. This quantity corresponds to the connection cost~${\dav{\vC}}$ of~${\vC}$ weighted by the \lpDemand{} that~${\vC}$ sent to~${\F_{\vSc}}$.
				Our observations about the quantities are confirmed by the following lemma. 
				\begin{lemma}\label{lem:budget-equations}
					The following holds:
					\begin{enumerate}
						\item \label{lem:budget-equations-opt}  $\sum_{\vSc\in\starCenterSet} \budOpening{\vSc} + \budDirectConnection{\vSc} = \opt$.
						\item \label{lem:budget-equations-connection}  $\sum_{\vSc\in\starCenterSet} \budRelativeConnection{\vSc} \le \opt$.
					\end{enumerate}
				\end{lemma}
				\begin{proof} Both statements follow directly from the definitions of the budgets and the fact that the \starTails{} partition~${\F}$: 
					\begin{enumerate}
						\item
						\hfill$\begin{aligned}[t]
						\sum_{\vSc\in\starCenterSet} \budOpening{\vSc} + \budDirectConnection{\vSc}\enskip&=&& \sum_{\vSc\in\starCenterSet}\sum_{i\in\F_{\vSc}} \left( \yoComp_if_i + \sum_{\vC\in \C}\xoComp_{i\vC} \dist{i}{\vC}\right)\\ 
						&=&&\sum_{i \in \F} \yoComp_i f_i + \sum_{\vC\in \C}\sum_{i\in\F}\xoComp_{i\vC}\dist{i}{\vC}\\ 
						&=&& \opt\\~
						\end{aligned}$\hfill\null
						\item
						\hfill$\begin{aligned}[t]
						\sum_{\vSc\in\starCenterSet} \budRelativeConnection{\vSc}&=&\enskip& \sum_{\vSc\in\starCenterSet}\sum_{i\in\F_{\vSc}}\sum_{\vC\in \C}\xoComp_{i\vC} \dav{\vC}\\ 
						&=&&\sum_{\vC\in\C}\sum_{i\in\F}\xoComp_{i\vC}\dav{\vC} \\ 
						&=&& \sum_{\vC\in\C}\dav{\vC}\underbrace{\sum_{i\in\F}\xoComp_{i\vC}}_{=1}\\
						&=&& \sum_{\vC\in \C}\sum_{i\in\F}\xoComp_{i\vC}\dist{i}{\vC}\\
						&\le&& \opt
						\end{aligned}$\hfill\null
					\end{enumerate}
				\end{proof}

			\paragraph{Star instances.}
			We will now introduce a notion that will help us to locally modify
			facility openings. Following the approach of Levi et
			al.~\cite{Levi2012}, we ``move'' the total demand served by
			facilities of a \starTail{} $\F_\vSc$ to its center~${\vSc}$. We call the
			so-obtained single-demand-node instances \emph{star instances}.  We
			will see that extreme point solutions to such star instances
			have a particular nice structure. This will enable us later to round
			these solutions to integral solutions with small constant capacity
			violation at the expense of only a constant factor in the connection and opening cost.  On the other hand, we will show that we can
			decompose the original instance into a collection of star instances
			and that we pay only a constant factor in approximation when
			applying this reduction.

			\begin{definition}
			Every star center~${\vSc\in\starCenterSet}$ defines one star instance~${S_\vSc}$.
			It consists of 
			\begin{itemize}
				\item the set~${\F_\vSc}$ that we obtained when computing the star centers,
				\item the \emph{demand} $w_\vSc = \sum_{i\in\F_{\vSc}}\sum_{\vC\in\C}\xoComp_{i\vC}$\,,
				\item the \emph{strict budget}~${\bsmall_\vSc = \budOpening{\vSc}+\budDirectConnection{\vSc}+2\ell \budRelativeConnection{\vSc}}$\,, 
				\item and the \emph{relaxed budget}~${\bbig_\vSc = 2\budOpening{\vSc}+2\budDirectConnection{\vSc}+2\ell \budRelativeConnection{\vSc}}$\,. 
			\end{itemize}

			A star instance asks for a \emph{solution}, which is either a \emph{strict solution} or a \emph{\relaxedSolution}. 
			
		A \emph{strict solution} to the star instance~${S_\vSc}$ is an opening vector~${\ys}$ for the facilities in~${\F_\vSc}$ that satisfies the following constraints:

						\begin{alignat}{3}
						\label{con:star_demand}
						\sum_{i \in \F_\vSc} u_i \ysComp_i&\geq&~& w_\vSc\\
						\label{con:star_strict_budget}
						\sum_{i \in \F_\vSc} (f_i + \dist{i}{\vSc}u_i)\ysComp_i&\leq && \bsmall_\vSc
						\end{alignat}  
						
		A \emph{\relaxedSolution{}} to the star instance~${S_\vSc}$ is an opening vector~${\ys}$ for the facilities in~${\F_\vSc}$ that contains exactly one \supporting facility~${\iraised\in\F_\vSc}$ and satisfies Constraint~\eqref{con:star_demand} as well as the following constraint:
						\begin{alignat}{3}
							\label{con:star_budget} 
						f_\iraised\ysComp_\iraised + \dist{\iraised}{\vSc} w_\vSc &\leq &~&  \bbig_\vSc
						\end{alignat}  
		\end{definition}
		
We will use \relaxedSolutions{} in the context of almost integral opening vectors of volume at most~${1}$. In such cases, any restricted solution is also a relaxed solution. 
		
		We defined star instances only on star centers.
		Therefore, when referring to a star instance~${S_\vSc}$, or to one of its quantities~${w_\vSc}$,~${\bsmall_\vSc}$, or~${\bbig_\vSc}$, we will implicitly assume that~${\vSc\in\starCenterSet}$.
		We will also call~${w_\vSc}$ the \emph{demand} of the star center~${\vSc}$, opposed to the \lpDemand{} of~${\vSc}$ that refers to the unit demand of~${\vSc}$ in the context of~\LPfacLoc{}.
		
		Throughout the \kapitel, Constraints~\eqref{con:star_demand}--\eqref{con:star_budget} will refer to the constraints defined by a star instance, whereas Constraints~\eqref{lp:volume_k}--\eqref{lp:capacity_ge_total_demand} refer to the constraints defined by~\LPfacLoc{}.
		
			As a corollary of Lemma~\ref{lem:budget-equations}, we immediately get the following bounds on the total budgets of all star centers. 

			\begin{corollary}\label{cor:total_budget}
				The following holds:
				\begin{enumerate}
					\item~${\sum_{\vSc\in\starCenterSet} \bsmall_\vSc \le (1+2\ell)\opt}$.
					\item~${\sum_{\vSc\in\starCenterSet} \bbig_\vSc \le (2+2\ell)\opt}$.
				\end{enumerate}
			\end{corollary}

			In the following central lemma, we show how to compute a strict solution to a star instance of bounded volume.	
			\begin{lemma}\label{lem:compute_strict_solution}
				For any star instance~${S_{\vSc}}$, we can compute a strict solution~${\ys}$ to~${S_\vSc}$ 
				such that~${\vol{\ys} \leq \fvol{\yo}{\F_{\vSc}}}$ holds.
			\end{lemma}
			\begin{proof}
				The demand~${w_{\vSc}}$ is 
				given by~${\sum_{i\in\F_{\vSc}}\sum_{\vC\in\C}\xoComp_{i\vC}}$. 
				For each facility~${i\in\F_{\vSc}}$, 
				we charge~${\sum_{\vC\in\C}\xoComp_{i\vC}}$ units of~${w_\vSc}$ to~${i}$
				by
				opening it by an amount of~${\ysComp_i=(\sum_{\vC\in\C}\xoComp_{i\vC})/u_i}$.  
				Then we have~${\ysComp_i \le \yoComp_i}$ by Constraint~\eqref{lp:capacity_ge_total_demand}. 
				Thus~${\ys}$ is an opening vector satisfying 
				Constraint~\eqref{con:star_demand}
				and~${\vol{\ys} \leq \fvol{\yo}{\F_{\vSc}}}$.

				Now we prove that also Constraint~\eqref{con:star_strict_budget} is
				satisfied. 
				By Lemma~\ref{lem:distance-facility}.\ref{lem:distance-facility:fac-center}, for every client~${\vC\in\C}$ and every facility~${i\in\F_{\vSc}}$, it holds~${\dist{i}{\vSc}\le \dist{i}{\vC}+2\ell \dav{\vC}}$. So we have  
				\begin{alignat*}{2}  
				&&~&\sum_{i\in\F_{\vSc}}(f_i + \dist{i}{\vSc}u_i)\ysComp_i 
				\\&=&& \sum_{i\in\F_{\vSc}}\ysComp_i f_i   + \sum_{i\in\F_{\vSc}}\sum_{\vC\in\C} \dist{i}{\vSc}\xoComp_{i\vC}
				\\&\leq&& \sum_{i\in\F_{\vSc}} \yoComp_if_i + \sum_{i\in\F_{\vSc}}\sum_{\vC\in\C}(\dist{i}{\vC}+2\ell \dav{\vC})\xoComp_{i\vC}
				\\&=&& \bsmall_{\vSc} \formulaPunctuationSpace .
				\end{alignat*}
			\end{proof}
			
			Note that the strict solution provided by Lemma~\ref{lem:compute_strict_solution}
			may have volume strictly smaller than that of the underlying \starTail{}
			and consequently also smaller than~${1-1/\ell}$ (see Corollary~\ref{cor:minvolume-bundle}). 
			However, in our algorithm for~\PROBkfl with uniform capacities, we will be interested in solutions of volume at least~${1-1/\ell}$.  
			Therefore, we prove the following lemma.
			
			\begin{lemma} \label{lem:relaxedSolution} 
				For any a star instance~${S_\vSc}$ with uniform capacities that admits a strict solution~${\ys}$ with volume~${\vol{\ys}\le\min\{1,\fvol{\yo}{\F_\vSc}\}}$, 
				we can construct \aRelaxedSolution{}~${\ys'}$ to~${S_\vSc}$ 
				with~${\vol{\ys'}=\min\{1,\fvol{\yo}{\F_\vSc}\}}$ where the only \supporting facility~${\iraised\in\F_\vSc}$ satisfies~${\dist{\iraised}{\vSc}\le\ell\dav{\vSc}}$. 
			\end{lemma}
			\begin{proof}
				We first construct an opening vector~${\ys'}$ with the required volume and distance properties and then show that it is \aRelaxedSolution{} to~${S_\vSc}$. 
				
				Let~${R=\ell\dav{\vSc}}$. 
				Let~${\F_\vSc^R}$ be the set~${\{i \in \F \mid \dist{i}{\vSc}\le R \}}$ of facilities lying within the radius~${R}$ around~${\vSc}$.
				First, we show that~${\F_\vSc^R}$ is not empty, which is always implied by~${\fvol{\yo}{\F_\vSc^R}> 0}$.
				If~${\dav{\vSc}=0}$, then we have~${\fvol{\yo}{\F_\vSc^R}\ge 1}$ as discussed before. 
				Otherwise,~${R>0}$ and Lemma~\ref{lem:minvolume-R} assures
				\[{\fvol{\yo}{\F_\vSc^R}\ge 1-\dav{\vSc}/R=1-1/\ell \ge 1/2}\formulaPunctuationSpace;\] the last inequality follows from~${\ell\ge 2}$. 
				Consequently~${\F_\vSc^R}$ is not empty.
				
				Among the facilities in~${\F_\vSc^R}$, pick a facility~${\iraised}$ that minimizes the opening cost~${f_{\iraised}}$. 
				We define the following opening vector~${\ys'}$ for~${\F_\vSc}$.
				We set~${\ysComp'_\iraised=\min\{1,\fvol{\yo}{\F_\vSc}\}}$
				and, for every other facility~${i\in\F_\vSc}$, we set~${\ysComp'_i=0}$.
				Thus, we have the two properties~${\vol{\ys'}=\min\{1,\fvol{\yo}{\F_\vSc}\}}$ and~${\dist{\iraised}{\vSc}\le\ell\dav{\vSc}}$. 
				
				To this end, we show that~${\ys'}$ is \aRelaxedSolution{}.
				Since the solution~${\ys}$ fulfills Constraint~\eqref{con:star_demand}, we have
				\[{u \vol{\ys}= \sum_{i \in \F_\vSc} u \ysComp_i\geq w_\vSc}\formulaPunctuationSpace.\]
				Thus, also~${\ys'}$ fulfills Constraint~\eqref{con:star_demand}, 
				as
				\[{\vol{\ys'}=\min\{1,\fvol{\yo}{\F_\vSc}\}\ge \vol{\ys}}\formulaPunctuationSpace.\]
				
				To show Constraint~\eqref{con:star_budget},
				we first show~${\ysComp'_\iraised \cdot f_\iraised \le \budOpening{\vSc}}$ and then~${\dist{\iraised}{\vSc} w_\vSc \leq 2\budDirectConnection{\vSc}+2\ell \budRelativeConnection{\vSc}}$.
					
				Recall that~${\budOpening{\vSc}=\sum_{i\in\F}{f_i \yoComp_i}}$. 
				By using the minimality of~${\iraised}$, we infer
				\[{\budOpening{\vSc}\ge \sum_{i\in\F_\vSc^R}{f_i \yoComp_i} \ge f_{\iraised} \sum_{i\in\F_\vSc^R}{\yoComp_i} = f_\iraised \cdot \fvol{\yo}{\F_\vSc^R}}\formulaPunctuationSpace.\] 
				Above we have shown~${\fvol{\yo}{\F_\vSc^R} \ge 1/2}$. 
				Thus,~${\budOpening{\vSc} \ge f_\iraised/2}$ and our bound on~${\ysComp'_\iraised \cdot f_\iraised}$ holds as~${f_\iraised \ge \ysComp'_\iraised \cdot f_\iraised}$.
				
				Next, we show~${\dist{\iraised}{\vSc} w_\vSc \leq 2\budDirectConnection{\vSc}+2 \ell\budRelativeConnection{\vSc}}$. 	
				In fact, we will show the stronger inequality~${R w_\vSc \leq 2\budDirectConnection{\vSc}+2 \ell\budRelativeConnection{\vSc}}$, which implies the first one as~${\dist{\iraised}{\vSc}\le R}$.
				Let~${P}$ denote the set~${\F_\vSc \times \C}$ of pairs. 
				By our definitions,
				 \[{R w_\vSc = \sum_{(i,\vC)\in P} \xoComp_{i\vC}R }\] 
				and \[{2\budDirectConnection{\vSc}+2 \ell\budRelativeConnection{\vSc} = \sum_{(i,\vC)\in P} \xoComp_{i\vC} (2\dist{i}{\vC}+2\ell\dav{\vC})}\formulaPunctuationSpace.\] 
				Observe that each pair~${(i,\vC)\in P}$ contributes an amount~${\xoComp_{i\vC}R}$ to the left side of the inequality and an amount~${\xoComp_{i\vC} (2\dist{i}{\vC}+2\ell\dav{\vC})}$ to the right side of the inequality.
				Therefore, to show the inequality, it suffices to show
				\[{R \le 2\dist{i}{\vC}+2\ell\dav{\vC}}\] for each~${(i,\vC)\in P}$.
				Consider any~${(i,\vC)\in P}$. We distinguish two cases.
				
				In the first case,~${\dist{\vSc}{\vC}< R}$.  If~${|\starCenterSet|\ge 2}$, then, by Lemma~\ref{lem:center-distance}.\ref{lem:center-distances:center-center},~${R\le \dist{\vSc'}{\vSc}/2}$ for any~${\vSc'\in\starCenterSet}$ with~${\vSc'\ne \vSc}$.
				Hence, we have \[{\dist{\vSc}{\vC}< \frac{1}{2} \cdot \min\{\dist{\vSc}{\vSc'}\mid \vSc'\in\starCenterSet,\vSc'\ne \vSc\}}\formulaPunctuationSpace.\]
				Thus, independently of the size of~${|\starCenterSet|}$, Lemma~\ref{lem:center-distance}.\ref{lem:center-distances:dav} implies~${\dav{\vSc}\le2\dav{\vC}}$. Thus,~${R=\ell\dav{\vSc}\le 2\ell\dav{\vC}}$.
				
				In the second case,~${R\le\dist{\vSc}{\vC}}$. 
				By Lemma~\ref{lem:distance-facility}.\ref{lem:distance-facility:center-center},~${\dist{\vSc}{\vC}\le 2\dist{i}{\vC}+2\ell \dav{\vC}}$ and the claim follows.				
			\end{proof}
				
	The next lemma shows that we can always assume that 
	a strict solution to a star instance has at most two fractional facilities. 
	\begin{lemma}\label{lem:at_most_two_fractional}
		For any strict solution~${\ys}$ to a star instance~${S_\vSc}$, we can construct a strict solution~${\ys'}$ to~${S_\vSc}$ 
		which has at most \emph{two} fractional facilities and satisfies~${\vol{\ys'}\leq \vol{\ys}}$. 
	\end{lemma}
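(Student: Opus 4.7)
The plan is to view the question as a linear program and use a standard basic-feasible-solution argument. Consider the LP
\[
\min \sum_{i\in\F_j} \ys_i \quad \text{subject to (\ref{lp_star:demand}), (\ref{lp_star:budget}), (\ref{lp_star:betw_zero_and_one}).}
\]
Since the given solution~$\ys$ is feasible for this LP, the LP has an optimum, and the polytope is bounded and has at least one vertex, so there exists an optimal basic feasible solution~$\ys'$ with $\sum_i \ys'_i \le \sum_i \ys_i$, i.e.\ $\vol(\ys') \le \vol(\ys)$.

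Next I would count the number of fractional variables of such a vertex~$\ys'$. Setting $n \coloneqq |\F_j|$, the polytope is defined by the two ``global'' inequalities (\ref{lp_star:demand}) and (\ref{lp_star:budget}) together with the $2n$ bound inequalities $\ys_i \ge 0$ and $\ys_i \le 1$. At a vertex of an $n$-dimensional polytope, $n$ linearly independent constraints must be tight. At most two of these tight constraints can be among (\ref{lp_star:demand}) and (\ref{lp_star:budget}), so at least $n-2$ of the bound constraints are tight at~$\ys'$. Every variable with a tight bound constraint is either~$0$ or~$1$, so at most two coordinates of~$\ys'$ can lie strictly inside~$(0,1)$, which is precisely the claim.

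The main (and only mild) obstacle is to make sure the LP is well defined and that an optimal vertex exists; this is immediate because the feasible region is a nonempty (it contains~$\ys$) bounded polytope in~$\mathbb{R}^n$ and the objective is linear, so standard LP theory guarantees an optimal basic feasible solution with the counting property above.
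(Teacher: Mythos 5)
Your proof is correct and follows essentially the same route as the paper: consider the volume-minimizing LP over the star constraints, take an optimal extreme point (basic feasible) solution, and count tight constraints to conclude at most two coordinates are fractional. The paper phrases the counting slightly differently (it restricts attention to the $l$ positive coordinates and argues that at least $l-2$ of the upper-bound constraints $\ys_i\le 1$ are tight), but the underlying vertex-counting argument is identical to yours.
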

	\begin{proof}    
		Let~${S_\vSc}$ be any star instance with a strict solution~${\ys}$.
		Consider the LP for the star instance with Constraints~\eqref{con:star_demand},~\eqref{con:star_strict_budget} and the \emph{additional constraints}
		\[{0\le\ysComp_i} \textrm{ and } {\ysComp_i\le 1} \textrm{ for each } {i\in\F_\vSc}\] (which implicitly hold for opening vectors), and the objective 
		\[\textrm{minimize }{\sum_{i \in \F_\vSc} \ysComp_i}\formulaPunctuationSpace.\]
		Clearly~${\ys}$ is a feasible solution to this LP 
		with objective~${\vol{\ys}}$.  
		Now consider an optimal extreme point
		solution~${\ys'}$ to this LP. Of course,~${\vol{\ys'}\leq\vol{\ys}}$ is satisfied. 
		The number of variables in the LP is~${|\F_\vSc|}$.  Since~${\ys'}$ is an extreme point solution,  
		at least~${|\F_\vSc|}$ many of the LP constraints are tight.
		This means that at least~${|\F_\vSc|-2}$ of the additional constraints are tight.
		As for each~${i\in\F_\vSc}$, at most one of its two additional constraints are tight ($\ysComp_i=0$ and~${\ysComp_i=1}$ are mutually exclusive events), 
		there are at most two facilities in~${\F_\vSc}$ whose both additional constraints are not tight.
	\end{proof}
	
	For the case of uniform capacities, we can even assume that a strict solution contains at most one fractional facility. 
	
	\begin{lemma} \label{lem:almostIntegral} 
		For any strict solution~${\ys}$ to a star instance~${S_\vSc}$ with uniform capacities, we can construct an almost integral strict solution~${\ys'}$ to~${S_\vSc}$ 
		with~${\vol{\ys'}=\vol{\ys}}$.
	\end{lemma}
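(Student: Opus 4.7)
The plan is to give a direct iterative argument rather than invoking Lemma~\ref{lem:at_most_two_fractional} (whose conclusion is only $\vol(\ys')\le\vol(\ys)$, not equality). The key observation that makes the uniform-capacity case easier is that when $u_i=u$ for all $i\in\F_j$, Constraint~\eqref{lp_star:demand} becomes $u\cdot\vol(\ys)\ge w_j$, so feasibility of the demand constraint depends only on the total volume and not on how that volume is distributed among the facilities. Consequently, any operation that preserves $\vol(\ys)$ automatically preserves demand feasibility.

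Given $\ys$, I would repeatedly apply the following exchange step until at most one fractional variable remains. Pick any two indices $a,b\in\F_j$ with $\ys_a,\ys_b\in(0,1)$, and assume without loss of generality that $f_a+u\cdot\dist{a}{j}\le f_b+u\cdot\dist{b}{j}$. Define $\ys'_a\coloneqq\ys_a+\epsilon$, $\ys'_b\coloneqq\ys_b-\epsilon$, with $\epsilon\coloneqq\min\{1-\ys_a,\,\ys_b\}>0$, and leave all other coordinates unchanged. By construction, $\vol(\ys')=\vol(\ys)$, the box constraints~\eqref{lp_star:betw_zero_and_one} remain satisfied, and the choice of $\epsilon$ forces either $\ys'_a=1$ or $\ys'_b=0$, i.e.\ at least one more integral coordinate than before.

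It remains to check the budget constraint~\eqref{lp_star:budget}. The change in the left-hand side equals $\epsilon\bigl((f_a+u\cdot\dist{a}{j})-(f_b+u\cdot\dist{b}{j})\bigr)\le 0$ by the choice of direction, so \eqref{lp_star:budget} is preserved. Demand is preserved because $\vol$ is preserved and capacities are uniform. Thus $\ys'$ is a feasible solution to the star instance, with the same volume as $\ys$ and strictly fewer fractional coordinates. Iterating this step terminates in at most $|\F_j|$ rounds with an almost integral solution of the required volume.

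The only subtle point is the direction of the exchange: if we shifted mass the wrong way, we could violate the budget. Uniform capacities are crucial because they decouple demand feasibility from the distribution of mass; in the non-uniform case the same exchange could destroy \eqref{lp_star:demand}, which is precisely why the analogous statement is not claimed there.
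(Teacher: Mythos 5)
Your proof is correct and rests on the same key observation as the paper's sketch: with uniform capacities, demand feasibility depends only on total volume, so mass can be shifted greedily toward the facility minimizing $f_i + u\cdot\dist{i}{j}$ without breaking Constraints~\eqref{lp_star:demand} or~\eqref{lp_star:budget}. You phrase this as an iterative pairwise exchange while the paper does a single greedy redistribution, but the two are equivalent and the underlying argument is the same.
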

	\begin{proof} 
		Let~${S_\vSc}$ be any star instance with a strict solution~${\ys}$.
		The idea is to take the volume of~${\ys}$ and transfer it greedily to the facility~${i\in\F_\vSc}$
		that minimizes~${\dist{i}{\vSc}u + f_i}$ among all facilities that are not yet open.
		
		In more detail, we compute an opening vector~${\ys'}$ for~${\F_\vSc}$ of volume equal to~${\vol{\ys}}$  
		that minimizes~\[{\sum_{i \in \F_\vSc} (f_i + \dist{i}{\vSc}u_i)\ysComp'_i}\formulaPunctuationSpace.\]
		For this, we define for each facility~${i\in\F_\vSc}$ its \emph{weight} as~${\dist{i}{\vSc}u + f_i}$
		and order the facilities non-decreasingly by their weights.
		Then we create an opening vector~${\ys'}$ for~${\F_\vSc}$, where we set the first~${\floor{ \vol{\ys} }}$ facilities to~${1}$, the~${(\floor{ \vol{\ys} }+1)}$\thSuffix facility (if it exists) to~${\vol{\ys}-\floor{ \vol{\ys} }}$, and all remaining facilities to~${0}$.
		Thus, we have an almost integral opening vector with~${\vol{\ys'}=\vol{\ys}}$ and 
		\[\sum_{i \in \F_\vSc} (f_i + \dist{i}{\vSc}u_i)\ysComp'_i\le \sum_{i \in \F_\vSc} (f_i + \dist{i}{\vSc}u_i)\ysComp_i \le \bsmall_\vSc\formulaPunctuationSpace.\]
		Thus, Constraints~\eqref{con:star_demand} and~\eqref{con:star_strict_budget} are fulfilled and~${\ys'}$ is a strict solution to~${S_\vSc}$.
	\end{proof}

			For technical reasons, we
			prove the following two lemmas which are crucial for our later analysis.
			The statement of the first lemma is a little surprising. For any star instance with \aRelaxedSolution{} of volume~$z$, we can take a fraction~$z$ of its demand and send it to its closest star center. Independently of the distance, we can bound the connection cost with the budget of the star instance. We show this claim by observing that, in the~\LPfacLoc{} solution, at least a fraction~$z$ of the \lpDemand{} of the corresponding star center was sent to facilities that were very far away. The corresponding connection cost sufficiently sized the budget of the star instance. 
			The proof follows the ideas of the proof of Lemma~\ref{lem:relaxedSolution}.

			\begin{lemma}\label{lem:technical-transportation}
				Assume~${|\starCenterSet|\ge 2}$.
				Consider any star instance~${S_{\vSc}}$ with uniform capacities and any \relaxedSolution{} $\ys$ to~${S_{\vSc}}$ with~${\vol{\ys}=\fvol{\yo}{\F_{\vSc}}}$. Let~${\iraised}$ be the single \supporting facility in~${\ys}$ and let~${\vSc'\in\starCenterSet\elementsetminus{\vSc}}$ be the distinct star center closest to~${\vSc}$.
				We have \[{(1-\ysComp_{\iraised})w_{\vSc}\dist{\vSc}{\vSc'}\leq 4(\budDirectConnection{\vSc}+\ell\budRelativeConnection{\vSc})}\formulaPunctuationSpace.\]
			\end{lemma}
			\begin{proof} 
				Let~${P=\F_{\vSc} \times \C}$. 
				Then \[{(1-\ysComp_{\iraised})w_{\vSc}\dist{\vSc}{\vSc'} = \sum_{(i,\vC)\in P}\xoComp_{{i}\vC}(1-\ysComp_{\iraised})\dist{\vSc}{\vSc'}}\] and \[{4(\budDirectConnection{\vSc}+\ell\budRelativeConnection{\vSc}) = \sum_{({i},\vC)\in P} \xoComp_{{i}\vC} 4(\dist{i}{\vC}+\ell \dav{\vC})}\formulaPunctuationSpace.\] 
				To show the inequality, it suffices to show~${(1-\ysComp_\iraised)\dist{\vSc}{\vSc'} \le 4(\dist{i}{\vC}+\ell \dav{\vC})}$ for each~${(i,\vC)\in P}$.
				Consider any~${(i,\vC)\in P}$.
				Similarly, as in the proof of Lemma~\ref{lem:relaxedSolution},
				we distinguish the two cases
				where~${\dist{\vSc}{\vC}}$ is smaller or larger than a value~${R}$, respectively. 
				We set~${R = \dist{\vSc}{\vSc'}/2}$, thus~${R>0}$.
				
				First, assume~${\dist{\vSc}{\vC}\leq R}$. Then, by Lemma~\ref{lem:center-distance}.\ref{lem:center-distances:dav},~${\dav{\vSc}\le2\dav{\vC}}$.
				Note that every facility within radius~${R}$ around~${\vSc}$ belongs to~${\F_{\vSc}}$. 
				Applying Lemma~\ref{lem:minvolume-R}, we obtain that the volume of~${\F_{\vSc}}$ is at least~${1-\dav{\vSc}/R}$. 
				Given~${\ysComp_\iraised=\vol{\ys}}$ and our assumption~${\vol{\ys}=\fvol{\yo}{\F_{\vSc}}}$, 
				we have~${1-\ysComp_\iraised\le\dav{\vSc}/R}$.
				Hence,~${(1-\ysComp_\iraised)\dist{\vSc}{\vSc'}}$ is bounded from above by~${\dav{\vSc}/R\cdot 2R=2\dav{\vSc}\leq 4\dav{\vC}}$.

				Secondly, suppose~${R \le \dist{\vSc}{\vC}}$. By Lemma~\ref{lem:distance-facility}.\ref{lem:distance-facility:center-center}, $\dist{\vSc}{\vC}\le2\dist{\iraised}{\vC}+2\ell \dav{\vC}$.
				We therefore have 
				\begin{alignat*}{2}
				(1-\ysComp_\iraised)\dist{\vSc}{\vSc'}&\leq 2R\leq 2\dist{\vSc}{\vC}
				\\&\leq 4(\dist{\iraised}{\vC}+\ell \dav{\vC})\formulaPunctuationSpace.\qedHereInAlign
				\end{alignat*}
			\end{proof}
			
			The next lemma is essential to our decomposition of the problem to star instances. It will justify our assumption that
			 all the demand of the clients is accumulated in the star centers.
			
			\begin{lemma}\label{lem:move-to-bundle-centers}
				We can distribute 
				the \lpDemand{} of the clients among the star 
				centers such that each star 
				center~${\vSc}$ 
				receives precisely~${w_{\vSc}}$ units of demand and such that the total connection cost
				is at most~${(2\ell+2)\opt}$.  
			\end{lemma}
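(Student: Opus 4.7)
The plan is to route the demand along the fractional LP solution itself: for each client $j\in\C$ and each facility $i\in\F$ with $\xo_{ij}>0$, I send $\xo_{ij}$ units of demand from~$j$ to the star center $j'(i)\in\C'$ whose bundle $\U_{j'(i)}$ contains~$i$.

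First I would verify that this routing is conservative. The total demand received by a star center $j'\in\C'$ equals $\sum_{i\in\F_{j'}}\sum_{j\in\C}\xo_{ij}=w_{j'}$ by the definition of~$w_{j'}$, while each client~$j$ sends out $\sum_{i\in\F}\xo_{ij}=1$ units, matching its unit demand. Hence the routing is feasible.

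Next I would bound the transportation cost. The key move is to split via the facility~$i$: by the triangle inequality, $\dist{j}{j'(i)}\le\dist{j}{i}+\dist{i}{j'(i)}$, and the second term must be controlled by an auxiliary bundle center close to~$j$. Specifically, Lemma~\ref{lem:center-distance}(2) guarantees a $j''\in\C'$ with $\dist{j}{j''}\le 2\ell\,\dav{j}$ whenever $j\notin\C'$; when $j\in\C'$ we simply take $j''=j$. By the definition of bundles, $j'(i)$ is the \emph{nearest} bundle center to~$i$, so
\begin{equation*}
\dist{i}{j'(i)}\le\dist{i}{j''}\le\dist{i}{j}+\dist{j}{j''}\le\dist{i}{j}+2\ell\,\dav{j}.
\end{equation*}
Combining, every routed unit satisfies $\dist{j}{j'(i)}\le 2\dist{i}{j}+2\ell\,\dav{j}$. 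Summing over $j\in\C$ and $i\in\F$, the total transportation cost is at most
\begin{equation*}
\sum_{j\in\C}\sum_{i\in\F}\xo_{ij}\bigl(2\dist{i}{j}+2\ell\,\dav{j}\bigr)=2\sum_{j\in\C}\dav{j}+2\ell\sum_{j\in\C}\dav{j}\,\cdot\,1\le(2\ell+2)\opt,
\end{equation*}
using $\sum_{i}\xo_{ij}=1$ together with the fact that the fractional connection cost $\sum_{j}\dav{j}$ is a lower bound on $\opt$.

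The main obstacle is the term $\dist{i}{j'(i)}$: a naive detour through~$i$ leaves this distance uncontrolled, since $j'(i)$ is defined purely by proximity to~$i$ and not to~$j$. Overcoming this requires using the two sides of the bundle construction in tandem, namely the nearest-center property of facilities with respect to their bundle center combined with the separation guarantee of Lemma~\ref{lem:center-distance}(2) for clients; this is precisely what transfers the charge from $\dist{i}{j'(i)}$ back to the client's own fractional connection cost $\dav{j}$ at the price of an additive $2\ell\,\dav{j}$ term per routed unit.
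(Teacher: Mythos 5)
Your proof is correct and matches the paper's own argument almost line for line: you use the same routing (send $\xo_{ij}$ units from $j$ to the bundle center owning $i$), the same triangle-inequality detour through an auxiliary center $j''$ near $j$ furnished by Lemma~\ref{lem:center-distance}(2), and the same final summation yielding $(2\ell+2)\opt$. The only addition is your explicit handling of the case $j\in\C'$ (take $j''=j$), which the paper leaves implicit.
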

			\begin{proof}
				
				Let~${\vC\in\C}$ be an
				arbitrary client and~${i}$ be a facility lying in a star instance~${S_{\vSc}}$ for
				some~${\vSc\in\starCenterSet}$.  
				We ship precisely~${\xoComp_{i\vC}}$
				units of flow from~${\vC}$ to~${\vSc}$. 
				By Lemma~\ref{lem:distance-facility}.\ref{lem:distance-facility:center-center}, we upper bound the distance~${\dist{\vSc}{\vC}}$ by~${2\dist{i}{\vC} + 2\ell\dav{\vC}}$.
				 Performing this operation for any
				client-facility pair~${\vC\in\C}$,~${i\in\F_{\vSc}}$, we ensure that the star center~${\vSc}$ 
				collects~${\sum_{\vC\in\C}\sum_{i\in\F_{\vSc}}\xoComp_{i\vC}}$ units of demand, which is precisely~${w_{\vSc}}$.  
				
				The total cost of transporting~${w_{\vSc}}$ to the star center~${\vSc}$ is at most
				
				\begin{alignat*}{2}
					&&~&\sum_{\vC\in\C}\sum_{i\in\F_{\vSc}}\xoComp_{i\vC}(2\dist{i}{\vC}+2\ell  \dav{\vC})\\
					&=&&2\budDirectConnection{\vSc}+2 \ell\budRelativeConnection{\vSc}\formulaPunctuationSpace.
				\end{alignat*}
			
				Thus, the total cost of this flow over all star instances is bounded from above by
				\[2\sum_{\vSc\in\starCenterSet} \budDirectConnection{\vSc}+ 2\ell\sum_{\vSc\in\starCenterSet} \budRelativeConnection{\vSc}\]
				which is upper bounded by~${(2\ell+2)\opt}$ according to Lemma~\ref{lem:budget-equations}. 
				
			\end{proof}

						\subsection{The Dependent Rounding Approach}
						\label{app:dep-round}
						In our algorithm for hard-capacitated~\PROBkfl, we will apply
						the dependent rounding approach of Gandhi~et~al.~\cite{Gandhi2006} that is based on pipage rounding~\cite{Ageev2004}.
						For the sake of completeness, we give now an overview of this approach and state some properties that we will use.
						
						The dependent rounding procedure
						iteratively rounds a given vector~${\y\in [0,1]^N\!}$, for any number~${N}$ of components, 
						until all components are in~${\{0,1\}}$. 
						It works as follows.
						Suppose
						the current version of the rounded vector is~${\vv}$; initially,~${\vv}$ is~${\y}$.
						We have~${\vv=(\vvComp_1, \ldots, \vvComp_N)\in[0,1]^N}$.
						When we describe the random choice made in a step below, this choice
						is made independently of all such choices made thus far.
						If every component of~${\vv}$ lies in~${\{0,1\}}$, we are done, so let us assume that
						there is at least one component~${\vvComp_i\in(0,1)}$. The first (simple) case
						is that there is exactly
						one fractional component~${\vvComp_i}$; we round~${\vvComp_i}$ in the natural way---to~${1}$ 
						with probability~${\vvComp_i}$, and to~${0}$ with complementary probability~${1 - \vvComp_i}$; 
						letting~${V_i}$ denote the rounded version of~${\vvComp_i}$, we note
						that 
						\begin{equation*}
						\expct{V_i} = \vvComp_i
						\end{equation*}
						holds.
						This simple step is called a \emph{Type I iteration}, and it
						completes the rounding process. The remaining case is that of a
						\emph{Type II iteration}: there are at least two components of~${\vv}$ that
						lie in~${(0,1)}$. In this case, we choose two such components,~${\vvComp_i}$ and~${\vvComp_j}$, 
						in an arbitrary manner. 
						Let~${\varepsilon}$ and~${\delta}$ be the positive constants 
						such that:
						\begin{inlinelistAlph}~\item $\vvComp_i + \varepsilon$ and~${\vvComp_j - \varepsilon}$ lie in~${[0,1]}$, with at least
						one of these two quantities lying in~${\{0,1\}}$, 
						and~\item~${\vvComp_i - \delta}$ and~${\vvComp_j + \delta}$ lie in~${[0,1]}$, with at least
						one of these two quantities lying in~${\{0,1\}}$. 
						\end{inlinelistAlph}
						Such
						strictly-positive~${\varepsilon}$ and~${\delta}$ exist and are trivial to
						compute. We then update~${(\vvComp_i, \vvComp_j)}$ to a random pair~${(V_i, V_j)}$
						as follows:
						\begin{itemize}
							\item with probability~${\delta/(\varepsilon + \delta)}$, set~${(V_i, V_j) = (\vvComp_i + \varepsilon, ~\vvComp_j - \varepsilon)}$;
							\item with the complementary probability~${\varepsilon/(\varepsilon + \delta)}$, set~${(V_i, V_j) = (\vvComp_i - \delta, ~\vvComp_j + \delta)}$.
						\end{itemize}
						
						The main properties of \emph{Type II iteration} 
						that we need are:
						\begin{listRoman}
							\item~${\prob{V_i + V_j = \vvComp_i + \vvComp_j}  =  1}$;
							\item~${\expct{V_i} = \vvComp_i}$  and~${\expct{V_j} = \vvComp_j}$. 
						\end{listRoman}
						
						We iterate the iteration above until we obtain a rounded vector with at most one component in~${(0,1)}$. Since each
						iteration rounds at least one additional variable, we need at most~${N}$
						iterations.
						
						Note that the description above does not specify the order in which
						the elements are rounded. Observe that we may use a predefined laminar family
						of subsets to guide the rounding procedure. That is, we may first apply
						Type II iterations to elements of the smallest subsets, then continue
						applying Type II iterations for smallest subsets among those still containing more than
						one fractional entry, and eventually round the at most one remaining fractional entry
						with a Type I iteration. The following lemma shows that by executing the dependent rounding
						procedure in this manner, we almost preserve the sum of entries within each of the subsets
						of our laminar family. 
						
						For any subset~${L\subseteq\{1,\ldots,N\}}$ of component indices and any vector~${\vv\in [0,1]^N\!}$, with~${\vv=(\vvComp_1, \ldots, \vvComp_N) }$, we let~${\fvol{\vv}{L}}$ denote the total volume~${\sum_{j\in L} \vvComp_j}$ of the components of~${\vv}$ whose indices are in~${L}$. 
						\begin{lemma}\label{lem:dependent-rounding-sum-preservation}
							Let~${\y=(\yComp_1,\ldots,\yComp_N)\in [0,1]^N\!}$, and let~${L_1\subset \dots \subset L_l}$ be any laminar sequence of subsets 
							of the component indices~${\{1, \dots, N\}}$ of~${\y}$.
							In the order~${i=1,\dots,l}$, we repeatedly run the Type II iteration on the components of~${\y}$ that are given by~${L_i}$ until at most one 
							element of~${L_i}$ points to a fractional component.
							Let~${\ry\in [0,1]^N\!}$ be the resulting rounded vector. 
							
							The following holds: 
							For each~${i}$ with~${1 \le i \le l}$, 
there are at least~${\floor{\fvol{\y}{L_i}}}$ elements of~${L_i}$ that have value~${1}$ in~${\ry}$.
						\end{lemma}
						\begin{proof}
							Consider any~${i\in\{1,\dots,l\}}$ and the moment when the Type II iteration has just finished rounding elements in~${L_i}$. Let~${\vv\in [0,1]^N\!}$, with~${\vv=(\vvComp_1,\ldots,\vvComp_N)}$, be the current version of the rounded vector.
							Until this moment, only elements of~${L_i}$ had been considered, given the laminar property~${L_j \subseteq L_i}$ for every~${j}$ and~${i}$ with~${1\le j\le i}$. Thus, by the main properties of Type II iteration, the volume of~${L_i}$ preserved completely and we have~${\fvol{\vv}{L_i}=\fvol{\y}{L_i}}$. Recall that at most one element~${j}$ in~${L_i}$ has value~${\vvComp_j\in(0,1)}$. Thus there are  exactly~${\floor{\fvol{\y}{L_i}}}$ elements of~${L_i}$ that have value~${1}$ in~${\vv}$. Since the dependent rounding procedure never changes integral values, at least a portion~${\floor{\fvol{\y}{L_i}}}$ of the volume of~${L_j}$ remains until the end.
						\end{proof}

			\section{Algorithm for Uniform Hard-Capacitated~\PROBkfl}
			\label{sec:k_FL}
			
			In this section, we prove Theorem~\ref{thm:main_2_eps}. 
			For any positive~${\varepsilon}$, we show that we can 
			efficiently compute a solution to the hard-capacitated~\PROBkfl problem with capacity violation~${2+\varepsilon}$ and cost bounded by a factor~${\bigOh(1/\varepsilon^2)}$ of the cost of an optimum LP solution.
			
			In what follows, we fix the parameter~${\ell}$ to any value at least~${2}$ (see
			Section~\ref{sec:bundles}) 
			and let~${u}$ denote the uniform capacity of the facilities.
			Given~${\ell}$ and the optimum solution~${(\xo,\yo)}$ to~\LPfacLoc{}, we obtain the set~${\starCenterSet}$ of star centers and, for each star center~${\vSc\in\starCenterSet}$, its corresponding star instance~${S_\vSc}$, as described in Section~\ref{sec:bundles}. 
			By the following lemma, we will assume that there are at least two star centers. 
			
				\begin{lemma}
					If~${\starCenterSet}$ contains only one star center~${\vSc}$, we can efficiently compute an approximation with capacity violation~${2}$ and connection cost~${(4\ell+6)\opt}$ to the underlying~\PROBkfl instance.
				\end{lemma}
				\begin{proof}
	
				By Lemmas~\ref{lem:compute_strict_solution} and~\ref{lem:almostIntegral}, we compute an almost integral strict solution~${\ys}$ to~${S_\vSc}$. If its volume is at least~${1}$, we round any fractional facility down and take the rounded vector as our solution. Otherwise, by Lemma~\ref{lem:relaxedSolution}, we compute \aRelaxedSolution{}~${\ys'}$ and claim that the single \supporting facility is open. We take it as our solution. Lemma~\ref{lem:min_cost_flow} gives us an assignment of the clients to the open facilities.
				
				We show that our solution is feasible and satisfies the bound on the connection cost.
				First, assume~${\vol{\ys}\ge 1}$. Then our solution contains~${\floor{\vol{\ys}}}$ open facilities and we have~${\floor{\vol{\ys}}\ge 1}$. 
				If we scale the capacities up by~${2}$, the total capacity of our solution will be~${2u \cdot \floor{\ys}}$. 
				This is enough to serve the total demand, which is at most~${u \vol{\ys}}$; see Constraint~\eqref{con:star_demand}.
				Next, assume~${\vol{\ys}< 1}$. 
				By Lemma~\ref{lem:relaxedSolution}, our \relaxedSolution{}~${\ys'}$ has volume~${\vol{\ys'}=\min\{1,\fvol{\yo}{\F_\vSc}\}}$.
				Since~${\vSc}$ is the only star center, we have~${\F_\vSc=\F}$. 
				By Constraints~\eqref{lp:demand_of_client} and~\eqref{lp:opening_ge_demand},~${\fvol{\yo}{\F_\vSc}\ge 1}$. 
				Thus, we obtain~${\vol{\ys'}=\min\{1,\fvol{\yo}{\F_\vSc}\}=1}$. 
				Hence, there is a single open facility and it can serve the total demand without capacity violation.
				In the latter case, the cost of distributing~${w_\vSc}$ and opening the facilities is at most~${\bbig_\vSc}$ by Constraint~\eqref{con:star_budget}.
				In the former case, it is at most~${2\bsmall_\vSc}$ given the capacity blow-up of~${2}$ and Constraint~\eqref{con:star_strict_budget}.
				Hence, observing~${\bbig_\vSc\le 2\bsmall_\vSc}$ and using Corollary~\eqref{cor:total_budget}, the distribution and opening cost is bounded by~${(2\ell+4)\opt}$. 
				The cost to move all the demand of the clients to~${\vSc}$ is at most~${(2\ell+2)\opt}$ by Lemma~\ref{lem:move-to-bundle-centers}.
				The claim follows.
			\end{proof}
			\subsection{Constructing a Star Forest.}  
			In the first step of our algorithm
			for the uniform capacitated~\PROBkfl problem,
			we decomposed the instance into a set of star instances.
			Subsequently, we
			introduce the concept of a \emph{star tree} which imposes a suitable
			structure on the star instances and facilitates our description of the
			algorithm as the demand will be routed only along edges of the star
			tree.  We show that in order to obtain a bi-factor approximation
			algorithm for capacitated~\PROBkfl it suffices to
			appropriately ``round'' a star tree.

			A star tree is build up of so called~\emph{stars}, which are star instances with solutions. 
			We will distinguish between stars whose solutions have small or big volume and consider only those that have a lower bound on their volume.
			\begin{definition}\label{def:star}
				A~star~${(S_\vSc,\ys)}$ consists of a star instance~${S_\vSc}$ and an almost integral solution~${\ys}$ to~${S_\vSc}$.
				The \emph{demand of a star}~${(S_\vSc,\ys)}$ is the demand~${w_\vSc}$ of~${S_\vSc}$, and
				the \emph{volume of a star}~${(S_\vSc,\ys)}$ is the volume of~${\ys}$. 
				A star is~\emph{small} if it has volume is is at most~1 and at least~${1-1/\ell}$, and~${\ys}$ is \aRelaxedSolution{}. 
				A star is~\emph{big} if it is has volume greater than~${1}$ and~${\ys}$ is a strict solution. 
			\end{definition}
			
					Note that any small star has exactly one \supporting facility.  
						Also note that each star contains at most one fractional facility.

			\begin{definition}
				\label{def:star_tree_definition}
				A star tree is any rooted in-tree~${T}$ whose node set~${\Cs{T}}$ is a subset of~${\starCenterSet}$ and that is associated with following components satisfying the following properties.
				The components are  
				the set~${\Fs{T}}$ defined as~${\bigcup_{\vSc\in\Cs{T}} \F_\vSc}$,
				and 
				a metric~${\ds{T}}$ on~${\Cs{T}\cup\Fs{T}}$
				where~${\vSc\in\Cs{T}}$ and~${i\in\F_\vSc}$ imply~${\sdist{T}{i}{\vSc}=\dist{i}{\vSc}}$. 
				The properties are:
				\begin{listRoman}
					\item\label{prop:small-or-big-star} Each~${\vSc\in \Cs{T}}$ is associated with a small or a big star~${(S_\vSc,\ys)}$. 
					\item \label{prop:in-degree-bound} Each~${\vSc\in \Cs{T}}$ has indegree at least~${2}$ and the root~${r}$ has indegree exactly~${1}$.
					\item\label{prop:dec-length} For any consecutive edges~${(\vSc,\vSc'),(\vSc',\vSc'')}$, 
					we have~${\sdist{T}{\vSc}{\vSc'}\geq\sdist{T}{\vSc'}{\vSc''}}$.
					\item \label{prop:small-dist}
						Consider any~${\vSc,\vSc'\in\Cs{T}}$, $\vSc\ne\vSc'$, where~${\vSc}$ is associated with a small star. 
						Let~${\iraised}$ be the single \supporting facility of~${\vSc}$.
						We have~${\sdist{T}{\iraised}{\vSc} \le \sdist{T}{\vSc}{\vSc'}/2}$. 
					\item\label{prop:rerouting} 
						Consider any~${\vSc\in\Cs{T}}$ associated with a small star~${(S_{\vSc},\ys)}$.
						Let~${\iraised}$ be the single \supporting facility of~${\vSc}$.
						If~${\vSc}$ is the root of~${T}$, let~${\vSc'}$ be the single son of~${\vSc}$, otherwise, let~${\vSc'}$ be the father of~${\vSc}$.
					We have
					\[(1-\ysComp_{\iraised})w_{\vSc}\sdist{T}{\vSc}{\vSc'}\leq 8(\budDirectConnection{\vSc}+\ell \budRelativeConnection{\vSc})\formulaPunctuationSpace.\]

				\end{listRoman}
				The \emph{budget} $b(T)$ of the star tree~${T}$ is~${\sum_{\vSc \in \Cs{T}} \bbig_\vSc}$. 
				The \emph{volume} $\svol{T}$ of the star tree is given by the sum of the volumes of its stars.

			A \emph{solution to~${T}$ with capacity violation~${\gamma}$} is a set~${\F'\subseteq \Fs{T}}$ and an assignment~${(\assignForST_{i\vSc})_{i\in\F',\vSc\in\Cs{T}}}$ 
			satisfying the following constraints:
			\begin{eqnarray*}
				\sum_{i \in \F'} \assignForST_{i\vSc}&\geq& w_\vSc ~~~ \textrm{for each } \vSc\in\Cs{T} \\ 
				\sum_{\vSc\in\Cs{T}} \assignForST_{i\vSc}&\leq & \gamma u ~~ \textrm{for each }  i\in\F' \\ 
				\nonumber
				\assignForST_{i\vSc} & \geq & 0 ~~~~~\textrm{for each } i\in\F',\vSc\in \Cs{T} \formulaPunctuationSpace. 
			\end{eqnarray*}
			The \emph{cost} of the solution is \[{\sum_{i\in\F'} f_i + \sum_{i\in\F'}\sum_{\vSc\in\Cs{T}}\assignForST_{i\vSc}\sdist{T}{i}{\vSc}}\formulaPunctuationSpace.\]  
		\end{definition}
		
						Summarized, a star tree is a binary tree whose nodes are associated with small or big stars.
						The edge lengths are non-increasing towards the root which has degree~${1}$.
						Furthermore, every small star~${(S_\vSc,\ys)}$ has its single \supporting facility~${\iraised}$ relatively close to its center.
						Moreover, we can afford sending a fraction~${1-\ysComp_\iraised}$ of the demand of~${\vSc}$ to its closest neighbor. 
						We can interpret the last property also as follows:
						If the probability is~${1-\ysComp_\iraised}$ that~${\vSc}$ sends all its demand to its closest neighbor, and otherwise it does not send any demand, 
						then we can bound the expected connection cost by a constant multiple of its budget. 
						A star tree also defines a set of facilities which is the set of all facilities of its stars.
						By the construction of the \starTails{}, 
						each facility of a star tree appears in exactly one of stars of the tree.
						
\begin{definition}
A \emph{star forest} $H$		
is a collection of disjoint star trees. The \emph{budget}~${b(H)}$ and the \emph{volume}~${\svol{H}}$ 
of a star forest~${H}$ are given by the sum of budgets and the sum volumes of its star trees, respectively.  
A \emph{solution} to a star forest provides a solution to each of its star trees and
additionally satisfies the constraint that the total number of open facilities is no
more than~${\ceil{\svol{H}}}$. 
The \emph{cost} of a solution to a star forest~${H}$ is the total cost of the solutions that it provides to each star tree of~${H}$. 
\end{definition}

		    \paragraph{Creating Star Trees.}
		    The motivation for considering star trees is given by the following theorem.
		    It states that in order to get a constant-factor
		    approximation algorithm for uniform capacitated~\PROBkfl, 
		    it is sufficient to appropriately ``round'' a star forest.
		    \begin{theorem}
		    	\label{thm:central_theorem}
		    	If there is an efficient algorithm that computes for a given star
		    	forest~${H}$ a solution of cost at most~${c\cdot b(H)}$
		    	for some constant~${c>0}$ with capacity violation~${\gamma}$, 
		    	then there is a~${(2\ell+2)(c+1)}$-approximation algorithm for capacitated~\PROBkfl with
		    	capacity violation~${\gamma}$.
		    \end{theorem}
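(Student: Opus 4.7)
The plan is to assemble all the pieces developed in this section and feed them into the assumed black-box. First, I would solve Ck-FL LP to obtain an optimal fractional solution $(\xo,\yo)$ of value $\opt$. Applying the bundle construction from Section~\ref{sec:bundles} gives the bundle centers $\C'$ and, via Lemma~\ref{lem:budget_bound_opt} combined with Lemma~\ref{lem:almostIntegral}, an almost integral star solution~$\ys_j$ for each star instance~$S_j$ with $\vol(\ys_j)\le\vol(\F_j)$. Running \ShortTrees$(\C')$ followed by \BinaryTrees$(G)$ yields a forest~$H$ whose components are binary center trees. By the discussion around Lemma~\ref{lem:doubling-length-star-trees} (in-degrees, decreasing edge lengths, the rerouting inequality) together with Properties~\eqref{prop:star_instance}--\eqref{prop:min-vol} arising from the bundle/star construction, $H$ satisfies all of the conditions of Definition~\ref{star_tree_definition} and hence is a star forest.

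Next, I would invoke the assumed algorithm on $H$ to obtain a set $\F'\subseteq\F$ together with an assignment $z$ of the star-center demands to $\F'$ of cost at most $c\cdot b(H)$ and with capacity violation $\gamma$. The cardinality constraint is respected because
\[
 |\F'|\ \le\ \vol(H)\ =\ \sum_{j\in\C'}\vol(\ys_j)\ \le\ \sum_{j\in\C'}\vol(\F_j)\ \le\ \sum_{i\in\F}\yo_i\ \le\ k,
\]
where the second-to-last inequality uses that the bundles partition $\F$.

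To lift this solution back to the original instance, I would use Lemma~\ref{lem:move-to-bundle-centers} to route the unit demand of every client to the star centers at total cost at most $(2\ell+2)\opt$, and then concatenate with the star-center-to-facility routing given by $z$. By the triangle inequality the composed assignment is a feasible assignment of the original clients to $\F'$, and its total service plus opening cost is at most $(2\ell+2)\opt + c\cdot b(H)$. Lemma~\ref{lem:total_budget} bounds $b(H)\le (1+2\ell)\opt$, so the overall cost is at most $(2\ell+2)\opt + c(2\ell+1)\opt = (2\ell+2+c(2\ell+1))\opt$. The capacity of each facility is only constrained when it actually serves demand inside $z$, so the violation is exactly the $\gamma$ inherited from the black-box.

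The only delicate point is the cost accounting: the black-box works with the (partly doubled) metric $\ds$ on star trees, but the demand actually delivered by a facility $i$ to a star center $j$ is charged using the original metric $\dist{i}{j}$ inside the star instance, which is exactly what $b_j^c$ budgets. This makes combining the client-to-center transport of Lemma~\ref{lem:move-to-bundle-centers} with the star-forest solution via the triangle inequality clean, and is the main obstacle to verify carefully; everything else is bookkeeping with the bounds already established.
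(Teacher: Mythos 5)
Your proposal is correct and follows essentially the same two-phase flow argument as the paper: route client demand to star centers via Lemma~\ref{lem:move-to-bundle-centers} at cost $(2\ell+2)\opt$, then hand off to the black-box star-forest solution whose cost is bounded via Lemma~\ref{lem:total_budget} by $c(2\ell+1)\opt$, with the feasibility and capacity-violation claims carrying over. The extra bookkeeping you include (the explicit volume chain bounding $|\F'|\le k$ and the remark that $\ds$ dominates $\metricName$ on tree edges) is consistent with, just more spelled out than, what the paper states.
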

		    
			Before we prove the theorem, we first 
			describe how to build a star forest~${H}$ from a
			solution to~\LPfacLoc{} where the total volume~${\svol{H}}$ is bounded from above by~${k}$. 
			We begin by defining the node set and proving Property~\ref{prop:small-or-big-star}. 
			
			As the node set of the forest,
			we take the set~${\starCenterSet}$ of all star centers. 
			Since the \starTails{} defined by the star centers partition the set~${\F}$ of all facilities,
			our forest will contain all facilities of~${\F}$.
			Recall that each~${\vSc\in\starCenterSet}$ defines the star
			instance~${S_{\vSc}}$. 
			By Lemma~\ref{lem:compute_strict_solution}, we compute a strict solution~${\ys}$ to~${S_{\vSc}}$ with~${\vol{\ys}\le\fvol{\yo}{\F_{\vSc}}}$.
			If~${\vol{\ys}> 1}$, we apply Lemma~\ref{lem:almostIntegral} to obtain an almost integral strict solution~${\ys'}$ with~${\vol{\ys'}=\vol{\ys}}$. 
			Thus,~${(S_{\vSc},\ys')}$ is a big star and we associate it with~${\vSc}$. 
			Otherwise, if~${\vol{\ys}\le 1}$, then we have~${\vol{\ys}\le\min\{1,\fvol{\yo}{\F_{\vSc}}\}}$.
			We apply Lemma~\ref{lem:relaxedSolution} to obtain \aRelaxedSolution{}~${\ys'}$ with volume~${\vol{\ys'}=\min\{1,\fvol{\yo}{\F_{\vSc}}\}}$ where the only \supporting facility~${\iraised\in\F_{\vSc}}$ satisfies~${\dist{\iraised}{\vSc}\le\ell\dav{\vSc}}$. 
			By Corollary~\ref{cor:minvolume-bundle},~${\fvol{\yo}{\F_{\vSc}}\ge1-1/\ell}$ and so~${\vol{\ys'}\ge1-1/\ell}$. Consequently~${(S_\vSc,\ys')}$ is a small star. We associate it with~${\vSc}$ and conclude that Property~\ref{prop:small-or-big-star} holds.
			
			Note that every star~${(S_\vSc,\ys)}$ constructed above has a volume at most~${\fvol{\yo}{\F_\vSc}}$. 
			Thus, the total volume in the star forest that we are constructing is at most~${k}$, in particular,~${\ceil{\vol{H}}\le k}$. 
			In the following, 
			we describe how to connect the nodes together and how to compute a metric in each obtained star tree so that the remaining properties
			are satisfied.

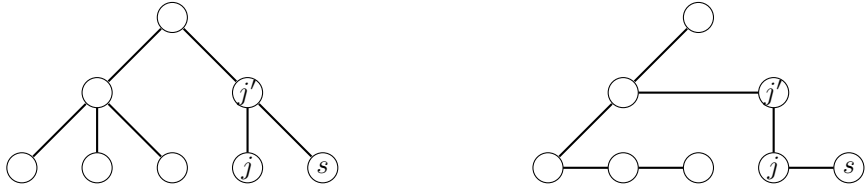
\begin{figure}
	\centering
	\hfill
	\begin{tikzpicture}[scale=1]
	\draw [fill=white] (-3.15,3.15) circle (0.2);
	
	\draw [thick] (-4,2.3) -- (-3.3,3); 
	\draw [thick] (-2.3,2.3) -- (-3,3); 
	
	\draw [fill=white] (-4.15,2.15) circle (0.2);
	\draw [fill=white] (-2.15,2.15) circle (0.2);
	
	\draw [thick] (-5,1.3) -- (-4.3,2); 
	\draw [thick] (-4.15,1.35) -- (-4.15,1.95); 
	\draw [thick] (-3.3,1.3) -- (-4,2); 
	
	\draw [thick] (-2.15,1.35) -- (-2.15,1.95); 
	\draw [thick] (-1.3,1.3) -- (-2,2); 
	
	\draw [fill=white] (-5.15,1.15) circle (0.2);
	\draw [fill=white] (-4.15,1.15) circle (0.2);
	\draw [fill=white] (-3.15,1.15) circle (0.2);
	\draw [fill=white] (-2.15,1.15) circle (0.2);
	\draw [fill=white] (-1.15,1.15) circle (0.2);
	
	\path (-2.15,2.15) node {$\vC'$};
	\path (-2.15,1.15) node {$\vC$};
	\path (-1.15,1.15) node {$\vSc$};
	
	\end{tikzpicture}
	\hfill
	\begin{tikzpicture}[scale=1]
	\draw [fill=white] (2.45,3.15) circle (0.2);
	
	\draw [thick] (1.6,2.3) -- (2.3,3); 
	\draw [thick] (3.25,2.15) -- (1.65,2.15); 
	
	\draw [fill=white] (1.45,2.15) circle (0.2);
	\draw [fill=white] (3.45,2.15) circle (0.2);
	
	\draw [thick] (0.6,1.3) -- (1.3,2); 
	\draw [thick] (1.25,1.15) -- (0.65,1.15); 
	\draw [thick] (2.25,1.15) -- (1.65,1.15); 
	
	\draw [thick] (3.45,1.35) -- (3.45,1.95); 
	\draw [thick] (4.25,1.15) -- (3.65,1.15); 
	
	\draw [fill=white] (0.45,1.15) circle (0.2);
	\draw [fill=white] (1.45,1.15) circle (0.2);
	\draw [fill=white] (2.45,1.15) circle (0.2);
	\draw [fill=white] (3.45,1.15) circle (0.2);
	\draw [fill=white] (4.45,1.15) circle (0.2);
	
	\path (3.45,2.15) node {$\vC'$};
	\path (3.45,1.15) node {$\vC$};
	\path (4.45,1.15) node {$\vSc$};
	\end{tikzpicture}
	\hfill\,
	\caption{Making a tree binary: The left side shows one of the trees returned by Procedure \ShortTrees($\starCenterSet$,~${\emptyset}$); the right side shows the corresponding tree after modification by Procedure \BinaryTrees($G$). 
		Observe that the father~${\vC'}$ of~${\vSc}$ has been replaced by the left brother~${\vC}$ of~${\vSc}$.}
	\label{fig:trees_algorithms} 
	
\end{figure}
				As a first step, we build a directed forest~${G}$ on the node set~${\starCenterSet}$
				by running Procedure \ShortTrees($\starCenterSet$,~${\emptyset}$) (see Section~\ref{sec:bundles}).
				By Lemma~\ref{lem:forest-intrees}, its components are all in-trees, which we will call \emph{short center trees}.				
			We cannot take the short center trees as our star trees, as the indegrees of their nodes may be unbounded.  Therefore,
			we change the structure of each short center tree to obtain a
			\emph{binary center tree}, in which the indegree of each node is at most~${2}$. 
			\Fig~\ref{fig:trees_algorithms} depicts this process that we describe below.
			By showing that all remaining properties of Definition~\ref{def:star_tree_definition} are fulfilled, we will prove that the constructed binary center tree is a star tree.
			
			Consider the Procedure \BinaryTrees$(G)$
			that takes as input the forest~${G}$ of
			short center trees returned by \ShortTrees($\starCenterSet$,~${\emptyset}$). 
			Each short center tree~${T}$ in~${G}$ is separately modified as follows. For each node~${\vC \in V(T)}$, we
			sort all incoming edges of~${\vC}$ from left to right by non-decreasing length.
			In the case that~${\vC}$ is the root of~${T}$, the leftmost edge is set to be the incoming edge from its closest son. (Note that there might exist multiple incoming edges with the same smallest length, therefore, we use the uniqueness of closeness; see Definition~\ref{def:closeness}.)
			Having defined the order, we remove all incoming edges of~${\vC}$
			except the leftmost one. 
			In the next step, the procedure adds
			for each son~${\vSc}$ of~${\vC}$ an edge from~${\vSc}$ to its left brother (if there
			exists one).  Note that no edge is added to the leftmost son of~${\vC}$.  The resulting forest of binary center trees is denoted by~${H}$.
		\begin{procedure}
			\caption{\BinaryTrees($G$)} $H = \emptyset$\; 
			\ForEach{short tree~${T}$ in~${G}$} {
				\ForEach{node~${\vC \in V(T)}$} {
					sort all sons of~${\vC}$ from left to right by
					non-decreasing distance to~${\vC}$\;
					\If{$\vC$ is the root of~${T}$}{place the son of~${\vC}$ closest to~${\vC}$ on the leftmost position in the ordering\;}
					remove all incoming edges of~${\vC}$ from~${T}$ except the leftmost one\;
					add a directed edge from each
					son of~${\vC}$ to its left brother (if there exists one)\;
				}
				add~${T}$ to~${H}$\;
			}
			\Return~${H}$\;	
		\end{procedure}
		
			Every binary center tree satisfies the following useful property. 
			\begin{lemma}\label{lem:root-cycles}
				The root of a binary center tree has exactly one son and its son is its closest star center.
			\end{lemma}
			\begin{proof}
				By Lemma~\ref{lem:root-son-distance}, the node closest to the root is one of its sons in the short center tree.
				During the construction of the binary center tree, this node becomes the only son of the root.
			\end{proof}
						
			Consider any binary center tree~${T'}$. Let~${\Cs{T'}}$ be its node set and let~${\Fs{T'}}$ be the set~${\bigcup_{\vSc\in\Cs{T'}} \F_\vSc}$ of all facilities in~${T'}$.
			Let~${T}$ be the short center tree from which~${T'}$ was derived.
			We show that~${T'}$ is a star tree. 
			For technical reasons, we first define a new metric~${\ds{T'}}$ on~${\Fs{T'} \cup \Cs{T'}}$ for~${T'}$, before we consider the remaining properties.

			The metric~${\ds{T'}}$ will be a tree metric where the underlying tree is the binary center tree~${T'}$ (for this purpose considered undirected) with an additional edge for each facility that connects it to its star center. 
			Recall that, in a tree metric, the distance between any two nodes is the weight of the path connecting the two nodes.
			For each additional edge between a facility~${i}$ and its star center~${\vSc}$, we set its weight to~${\dist{i}{\vSc}}$; hence~${\sdist{T'}{i}{\vSc}=\dist{i}{\vSc}}$ and we fulfill the requirement imposed by Definition~\ref{def:star_tree_definition} on~${\ds{T'}}$. 
			For each edge~${(\vSc,\vC)}$ of the binary center tree~${T'}$, we set its weight to~${2\dist{\vSc}{\vC'}}$, where~${\vC'}$ is the closest node in~${\starCenterSet\elementsetminus{\vSc}}$ to~${\vSc}$, 
			or stated equivalently, where~${\vC'}$ is the father of~${\vSc}$ in the short center~${T}$; see \fig~\ref{fig:trees_algorithms}. Hence,~${\sdist{T'}{\vSc}{\vC}=2\dist{\vSc}{\vC'}}$.
			The new metric~${\ds{T'}}$ is never smaller than the underlying metric~${\metricName}$, as the next lemma shows.
			\begin{lemma}\label{lem:ds-metric}
				The metric~${\ds{T'}}$ satisfies~${\sdist{T'}{\vSc}{\vC}\ge\dist{\vSc}{\vC}}$ for every~${\vSc,\vC\in\Cs{T'}\cup\Fs{T'}}$.
			\end{lemma}
			\begin{proof}
				It suffices to show the claim for each edge of the tree that underlies the metric.
				For each edge between a facility and its star center, the claim directly holds by definition. 
				Thus, consider any edge~${(\vSc, \vC)\in E(T')}$. 
				By definition of~${\ds{T'}}$, we have~${\sdist{T'}{\vSc}{\vC}=2\dist{\vSc}{\vC'}}$, where~${\vC'}$ is the father of~${\vSc}$ in the short center tree~${T}$.
				To prove the claim, we show~${\dist{\vSc}{\vC}\leq 2 \dist{\vSc}{\vC'}}$.
			    There are two cases: Either~${\vC}$
								is the father of~${\vSc}$ in~${T}$, or it is the left brother of~${\vSc}$ in~${T}$. 
								The first case is trivial as~${\dist{\vSc}{\vC} \leq 2 \cdot \dist{\vSc}{\vC}}$.  In the second case, the node~${\vC'}$ is the common father of~${\vSc}$ and~${\vC}$ in~${T}$; see \fig~\ref{fig:trees_algorithms}. 
								Since~${\vC}$ is the left brother of~${\vSc}$ in~${T}$, we have~${\dist{\vC}{\vC'} \leq \dist{\vSc}{\vC'}}$.  
								Hence~${\dist{\vSc}{\vC}\leq \dist{\vSc}{\vC'} + \dist{\vC'}{\vC} \leq 2 \cdot \dist{\vSc}{\vC'}}$.
			\end{proof}

			Now that the metric~${\ds{T'}}$ is defined, we prove that~${T'}$ satisfies all properties of a star tree.
			\begin{lemma}\label{lem:binary-center-tree-is-star-tree}
				The binary center tree~${T'}$ with the metric~${\ds{T'}}$ is a star tree. 
			\end{lemma}
			\begin{proof}
				Recall that our construction of the node set and their associated stars already implies Property~\ref{prop:small-or-big-star} 
				for~${T'}$. 
					
				The next two properties 
				are related to the tree structure of a binary center tree. We provide bounds on the degree of the nodes and show that edge lengths towards the root are non-increasing. 
					
				We show Property~\ref{prop:in-degree-bound}.
				Any node~${\vSc}$ in~${T'}$ has at most two incoming edges: One from
				its closest son in~${T}$ and one from its right brother in~${T}$.
				By Lemma~\ref{lem:root-cycles}, the root has exactly one son in~${T'}$.
					
				Next, we show Property~\ref{prop:dec-length}.
				Let~${(\vSc,\vSc')}$ and~${(\vSc',\vSc'')}$ be any consecutive edges in~${T'}$.  
				We have to prove~${\sdist{T'}{\vSc}{\vSc'}\geq \sdist{T'}{\vSc'}{\vSc''}}$.
				In the tree~${T}$, let~${\vC}$ be the father of~${\vSc}$, and  let~${\vC'}$ be the father of~${\vSc'}$.
				Recall that, by the definition of~${\ds{T'}}$, we have~${\sdist{T'}{\vSc}{\vSc'}=2\dist{\vSc}{\vC}}$ and~${\sdist{T'}{\vSc'}{\vSc''}=2\dist{\vSc'}{\vC'}}$. 
				Therefore, it suffices to show~${\dist{\vSc}{\vC}\geq \dist{\vSc'}{\vC'}}$.  
				If~${\vSc'}$ is the father of~${\vSc}$ in~${T}$, then~${\vSc'=\vC}$ and the claim holds by
				Lemma~\ref{decrease_edge_length}.  If~${\vSc'}$ is the left brother of~${\vSc}$ in~${T}$, then~${\vC'=\vC}$ and the claim holds by the construction of binary center trees.

				The remaining two properties 
				are related to small stars.
				We show how to bound the distance of a node to its only \supporting facility and how to bound the cost of transporting its demand to the next node.
				
			    Consider any small star~${(S_{\vSc},\ys)}$ with~${\vSc\in\Cs{T'}}$. 
				Let~${\iraised}$ be the only \supporting facility in~${(S_{\vSc},\ys)}$.
				We show Property~\ref{prop:small-dist}. 
				Let~${\vSc'\in\Cs{T'}\elementsetminus{\vSc}}$ be any node distinct to~${\vSc}$. 
				By the definition of~${\ds{T'}}$,~${\sdist{T'}{\iraised}{\vSc}=\dist{\iraised}{\vSc}}$.
				Recall that our construction of small stars guarantees that~${\iraised}$ satisfies~${\dist{\iraised}{\vSc}\le\ell\dav{\vSc}}$. 
				By Lemmas~\ref{lem:center-distance}.\ref{lem:center-distances:center-center} and~\ref{lem:ds-metric},
				we obtain~${\ell\dav{\vSc} < \dist{\vSc'}{\vSc}/2 \le \sdist{T'}{\vSc'}{\vSc}/2}$.				
				
				Next, we show Property~\ref{prop:rerouting}. 
				If~${\vol{\ys}=1}$, then we are done. Therefore, we assume~${\vol{\ys}<1}$ which implies~${\vol{\ys}=\fvol{\yo}{\F_{\vSc}}}$ by our construction of small stars.
				Let~${\vSc'\in\starCenterSet\elementsetminus{\vSc}}$ be the star center distinct from~${\vSc}$ that is closest to~${\vSc}$. 
				If~${\vSc}$ is the root of~${T'}$, then, by Lemma~\ref{lem:root-cycles},~${\vSc'}$ is the son of~${\vSc}$.
				Thus, by the definition of~${\ds{T'}}$,~${\sdist{T'}{\vSc'}{\vSc}=2\dist{\vSc'}{\vSc}}$
				and Lemma~\ref{lem:technical-transportation} implies the claim.
				If~${\vSc}$ is not the root, let~${\vC}$ be the father of~${\vSc}$.
				Then, by the definition of~${\ds{T'}}$,~${\sdist{T'}{\vSc}{\vC}=2\dist{\vSc}{\vSc'}}$.
			    Therefore, it suffices to show the inequality~${(1-\ysComp_{\iraised})w_{\vSc}\dist{\vSc}{\vSc'}\leq 4(\budDirectConnection{\vSc}+\ell\budRelativeConnection{\vSc})}$ to prove the claim. 
				The inequality holds by Lemma~\ref{lem:technical-transportation}.

			\end{proof}
			
			We have now shown that all properties of star trees as required in
			Definition~\ref{def:star_tree_definition} are actually satisfied by the binary center tree~${T'}$. 
			This fact implies that all our binary center trees are star trees, and, hence, that our constructed forest~${H}$ is in fact a star forest. 
			Thus, we are ready to prove Theorem~\ref{thm:central_theorem}.
				\begin{proof}[Proof of Theorem~\ref{thm:central_theorem}] 
					Given a capacitated~\PROBkfl instance, we compute the corresponding star forest~${H}$ and use the black box algorithm to obtain a solution to~${H}$ of cost~${c\cdot b(H)}$ and capacity violation~${\gamma}$. Note that the solution opened at most~${\ceil{\svol{H}}\le k}$ facilities.
					Then, by solving a
					minimum-cost flow problem (see Lemma~\ref{lem:min_cost_flow}), we efficiently compute an optimal assignment of
					the clients to facilities opened by the solution. 
					
					To bound the cost of the solution, we give a suboptimal
					fractional flow of demand that uses the edges in the star forest and that
					satisfies the claimed cost bound.  The flow is constructed in two
					steps.  First the demand of the clients is transported to the 
					star centers
					so that each node~${{\vSc}\in\starCenterSet}$ collects precisely~${w_{\vSc}}$ units of demand.  By Lemma~\ref{lem:move-to-bundle-centers}, this can be accomplished at cost at most~${(2+2\ell)\cdot\opt}$.  To
					transport the demand collected at the star centers to
					the actual facilities, we use the assignment provided by the
					solution to the star forest. By definition, this assignment
					transports for each~${{\vSc}\in\starCenterSet}$ precisely~${w_{\vSc}}$ units of demand to the facilities
					opened by the solution. 
					The cost of this assignment, together with the costs for opening the facilities, 
					is~${c\cdot b(H)=c\cdot\sum_{\vSc\in\starCenterSet} \bbig_\vSc \leq c\cdot (2\ell+2)\opt}$   
					by Corollary~\ref{cor:total_budget}.
					Altogether, we obtain~${(c+1)\cdot (2\ell+2)\opt}$ as an upper bound of the solution cost.

				\end{proof}

			\subsection{Solving a Star Forest.} 
			\label{sec:twoApproAlgo}
			We now show how to solve any given star forest~${H}$. We describe a deterministic rounding procedure which is tuned to minimize the capacity violation while allowing a large  
			connection and opening cost that is still bounded by a constant multiple of the budget~${b(H)}$.  
			Together with Theorem~\ref{thm:central_theorem}, this will imply, for any positive~${\varepsilon}$, an~${\bigOh(1/\varepsilon^2)}$-approximation algorithm 
			for the uniform hard-capacitated~\PROBkfl problem with capacity violation~${2+\varepsilon}$. 
						
			In the first step, the algorithm 
		    forms groups of~${\ell \geq 2}$ nodes in each star tree. In the next step, at the cost of loosing some accuracy with distances, we simplify the graph structure within each of the groups. Eventually, we use a dependent rounding routine to decide the actual openings of facilities, 
			and argue that there is sufficient capacity open up every tree to serve all demand coming from below, and, hence, on every star tree there exists a feasible routing of its total demand.
			
			To the end of this section, let~${\ys}$ denote the union of the solutions of all stars in the star forest.
			For the sake of easier presentation, we will refer to each star~${(S_\vSc,\ys')}$ just by its instance~${S_\vSc}$ and use~${\ys}$ 
			when referring to the opening values of its facilities.
			Also to the end of this section, we fix any star tree~${T}$ of our forest with a root~${r}$. Let~${\Cs{T}}$ be the set of its nodes,~${\Fs{T}}$ the set of its facilities, and~${\ds{T}}$ its metric on~${\Cs{T}\cup\Fs{T}}$.

			\paragraph{Building groups.}
			The nodes of the star tree~${T}$ will be grouped by a top-down greedy procedure starting from the root~${r}$; see \fig~\ref{fig:groups:grouping}. 
			When forming a new group, a single node~${\vSc}$ (having all its descendants yet not grouped)
			will be selected as a root of the new group. Then new nodes will be added to the group in a greedy fashion until either the group has reached the size of~${\ell}$ nodes, or all descendants of~${\vSc}$ are already included. The greedy choice of the next node to include will be to take one which is connected to the already included nodes by a cheapest tree edge. When a group is complete, we exclude the selected nodes from participating in the later formed groups. As long as not all nodes of the tree are grouped, we select a top-most one~${\vSc}$ and build a group~${G_\vSc}$ rooted at~${\vSc}$. 
			\begin{definition}
			  	A group~${G_{\vSc}}$ is a \emph{child} of a group~${G_\vC}$ and~${G_\vC}$ is a \emph{parent} of~${G_{\vSc}}$ if there is a directed edge in~${T}$ from~${\vSc}$ to a node in~${G_\vC}$. The group~${G_r}$ is the \emph{root group} and every other group is a \emph{non-root group}.
			\end{definition}
			
			\begin{observation}
				If~${G_\vC}$ has at least one child, then it contains exactly~${\ell}$ nodes, otherwise (if it has no children) $G_\vC$ may have less nodes. 
				Moreover, each group has at most~${\ell + 1}$ children.
			\end{observation}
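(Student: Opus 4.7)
The plan is to verify both claims in the observation by carefully exploiting the greedy, tree-local construction of the groups together with the in-degree bound from Property~(\ref{prop:in-degree-bound}). A useful preliminary remark is that the nodes of $G_j$ always form a connected subtree of $T$: the greedy rule adds a new node to $G_j$ only via a tree edge incident to an already-included node, so at every step $G_j$ induces a connected subgraph of $T$. In particular, every node of $G_j$ is a descendant of $j$ (or $j$ itself), and $G_j$ contains exactly $|G_j|-1$ tree edges internally.

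For the first statement, I would argue by contradiction. Suppose $G_j$ has a child group $G_{j'}$ but $|G_j|<\ell$. By definition of child group, there is a tree edge $(j',j'')$ with $j''\in G_j$, so $j'$ is a descendant of $j$ in $T$. Since the greedy procedure terminates only when $|G_j|=\ell$ or when every descendant of $j$ has been included, and $|G_j|<\ell$, all descendants of $j$ must be in $G_j$. But this contradicts $j'\notin G_j$ (it is the root of a later group). Hence $|G_j|=\ell$ whenever $G_j$ has at least one child.

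For the second statement, I would count incoming boundary edges of $G_j$ in $T$, since each child group $G_{j'}$ of $G_j$ corresponds to exactly one such edge, namely the outgoing tree edge $(j',\cdot)$ whose head lies in $G_j$ (conversely, every node outside $G_j$ having a tree edge into $G_j$ is topmost among the remaining ungrouped nodes when its group is initiated, and so becomes a group root). Summing the in-degrees in $T$ over the nodes of $G_j$ splits into internal and incoming edges:
\[
\sum_{v\in G_j}\deg^-_T(v) \;=\; (|G_j|-1) \;+\; (\text{incoming boundary edges of }G_j).
\]
By Property~(\ref{prop:in-degree-bound}), $\deg^-_T(v)\le 2$ for every $v$, so the left-hand side is at most $2|G_j|$. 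Therefore the number of incoming boundary edges, and hence the number of child groups, is at most $2|G_j|-(|G_j|-1)=|G_j|+1\le \ell+1$.

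There is no real obstacle here; the only subtlety to be careful about is the bijection between child groups and incoming boundary edges, which uses that the group-forming procedure always picks a topmost ungrouped node so that any node outside $G_j$ with a tree edge into $G_j$ is necessarily the root of its own group.
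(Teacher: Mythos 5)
Your argument is correct, and it supplies exactly the justification that the paper leaves implicit (the observation is stated without proof). The first part follows, as you say, directly from the termination condition of the greedy grouping: a group stops short of size $\ell$ only when all descendants of its root have been absorbed, in which case no node outside the group can have its outgoing tree edge landing in the group. The second part is just the right combination of the in-degree bound from Property~(\ref{prop:in-degree-bound}) and the fact that $G_j$ induces a connected (hence $(|G_j|-1)$-edge) subtree, together with the observation that every node sending a tree edge into $G_j$ from outside becomes a group root in the top-down procedure; your counting $\sum_{v\in G_j}\deg^-_T(v)\le 2|G_j|$ and the split into $|G_j|-1$ internal edges plus at most $|G_j|+1\le\ell+1$ boundary edges is the clean way to see it.
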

			The next lemma 
			is implied by Property~\ref{prop:dec-length}
			and the way in which the algorithm selects nodes to a group.
			\begin{lemma} \label{lem:group_edge_monotonicity}
				Consider any group~${G_\vC}$ that has a child group~${G_{\vSc}}$. Let~${e_{\vSc}\in E(T)}$ be the edge from~${\vSc}$ (the root of~${G_{\vSc}}$) to its father in~${G_\vC}$. For any edge~${e}$ in~${G_\vC}$ and any edge~${e'}$ in~${G_{\vSc}}$, we have~${\edgeSdist{T}{e} \leq \edgeSdist{T}{e_{\vSc}} \leq \edgeSdist{T}{e'}}$.
			\end{lemma}
			\begin{figure}[htb]
				\captionsetup[subfloat]{captionskip=5pt}
				\centering
				\subfloat[Building groups: Edges highlighted in gray form a new group~${G_{r}}$ with the root node~${{r}}$; nodes~${{\vSc}}$ and~${{\vC}}$ are roots of groups~${G_{\vSc}}$ and~${G_{\vC}}$, respectively. $G_{r}$ is parent group of~${G_{\vSc}}$ and~${G_{\vC}}$. The numbers indicate a possible order in which the nodes have been added to~${G_r}$. \label{fig:groups:grouping}]{
				\quad\quad\enspace\begin{tikzpicture}
				\draw [lightgray, line width=8pt,cap=round] (-0.85,0.65) -- (-0.15,1.35);
				\draw [lightgray, line width=8pt,cap=round] (0.85,0.65) -- (0.15,1.35);
				\draw [lightgray, line width=8pt,cap=round] (-1.45,-0.3) -- (-1.15,0.35);
				\draw [lightgray, line width=8pt,cap=round] (0.6,-0.3) -- (0.9,0.3);
				\draw [lightgray, line width=8pt,cap=round] (-1.6,-0.7) -- (-1.9,-1.3);
				\draw [lightgray, line width=8pt,cap=round] (-1.4,-0.66) -- (-1.1,-1.3);

				\draw [fill=white] (0,1.5) circle (0.2);
				
				\path (0,1.5) node {$\boldsymbol{r}$};

				\draw [thick] (-0.85,0.65) -- (-0.15,1.35);
				\draw [thick] (0.85,0.65) -- (0.15,1.35);
				
				\draw [fill=white] (-1,0.5) circle (0.2);
				\draw [fill=white] (1,0.5) circle (0.2);
				
				\draw [thick] (-1.45,-0.3) -- (-1.15,0.35);
				\draw [thick] (-0.6,-0.3) -- (-0.9,0.3); 
				
				\draw [thick] (0.6,-0.3) -- (0.9,0.3);
				\draw [thick] (1.45,-0.3) -- (1.1,0.35); 
				
				\draw [fill=white] (-1.5,-0.5) circle (0.2);
				\draw [fill=white] (-0.5,-0.5) circle (0.2);
				\path (-0.5,-0.5) node {$\boldsymbol{\vSc}$};
				
				\draw [fill=white] (0.5,-0.5) circle (0.2);
				\draw [fill=white] (1.5,-0.5) circle (0.2);
				\path (1.5,-0.5) node {$\mathbf{\vC}$};
				
				\draw [thick] (-1.6,-0.7) -- (-1.9,-1.3);
				\draw [thick] (-1.4,-0.66) -- (-1.1,-1.3);
				
				\draw [thick] (-0.4,-0.7) -- (-0.1,-1.3); 
				\draw [thick] (1.45,-0.7) -- (1.1,-1.3); 
				\draw [thick] (1.6,-0.7) -- (1.9,-1.3); 
				
				\draw [fill=white] (-2,-1.5) circle (0.2);
				\draw [fill=white] (-1,-1.5) circle (0.2);
				
				\draw [fill=white] (0,-1.5) circle (0.2);
				\draw [fill=white] (1,-1.5) circle (0.2);
				\draw [fill=white] (2,-1.5) circle (0.2);
				
				\path (-1,0.5) node {$1$};
				\path (-1.5,-0.5) node {$2$};
				\path (1,0.5) node {$3$};
				\path (-2,-1.5) node {$4$};
				\path (0.5,-0.5) node {$5$};
				\path (-1,-1.5) node {$6$};
				
				\end{tikzpicture}\quad\quad\enspace
			}\hfill
				\subfloat[The group~${G_r}$ after the modification step: The group is now a chain and nodes are ordered from top to bottom in the order they were added to the group, that is, with non-decreasing distances to their fathers in the star tree. \label{fig:groups:modifying}]{
					\quad\begin{tikzpicture}
				
					\foreach \nummer [evaluate=\nummer as \x using 0.75*\nummer] [evaluate=\nummer as \y using -\nummer*0.5] in {1,2,3,4,5,6}{
						\draw [lightgray, line width=8pt,cap=round] (\x-0.75,\y+0.5) -- (\x,\y);
						\draw [thick] (\x-0.75,\y+0.5) -- (\x,\y);
					}
					\draw [fill=white] (0,0) circle (0.2);
					\path (0,0) node {$\boldsymbol{r}$};
					\foreach \nummer [evaluate=\nummer as \x using 0.75*\nummer] [evaluate=\nummer as \y using -\nummer*0.5]  in {1,2,3,4,5,6}{
						\draw [fill=white] (\x,\y) circle (0.2);
						\path (\x,\y) node {$\nummer$};
					}
					
					\end{tikzpicture}\quad
				}%
				\caption{Building groups and the modification step.}  
				\label{fig:groups}
			\end{figure}
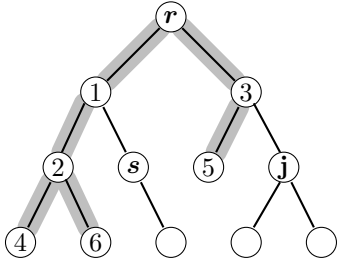
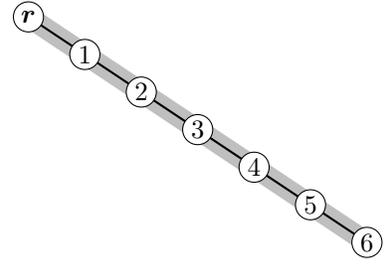
			
			\paragraph{Group modification.} 
			To facilitate rounding of facility openings within groups, we will modify the tree structure within groups to obtain a new in-tree~${T'}$ from the initial star tree~${T}$.
			The partition of nodes into groups will stay unchanged and the parent-child relation between groups will also be preserved. The modification within a single group is as follows. 
			
			Consider a group of nodes~${G_\vSc}$ and the order in which the nodes were added to the group by the greedy procedure. In the modified tree~${T'}$, the
			group~${G_\vSc}$ will form a chain graph directed towards its root~${\vSc}$, with the nodes closer to~${\vSc}$ being those selected earlier by the group forming algorithm; see \fig~\ref{fig:groups:modifying}.
			Finally, for any group~${G_{\vSc}}$ which is a child of a group~${G_\vC}$, let the edge outgoing from~${\vSc}$ point to the lowest vertex in~${G_\vC}$ in~${T'}$.
			
			Clearly, such modification of the tree structure may interfere with routing demand along edges of the used tree. Nevertheless, we will argue that we may bound
			this influence to only a constant multiplicative growth in the routing distance. 
			
			Recall that the lengths of edges of~${T}$ were monotone non-increasing on any directed path towards the root node~${r}$. We will no longer have this property in~${T'}$,
			but we will now exploit the monotonicity of~${\ds{T}}$ on edges of directed paths in~${T}$ to bound distances on~${T'}$.

			\begin{lemma}\label{lem:routing-in-group}
				Let~${\vSc}$ and~${\vC}$ be any nodes of the same group such that~${\vC}$ lies above~${\vSc}$ in~${T'}$.
				Let~${\vC'}$ be the father of~${\vSc}$ in~${T}$.	
				It holds~${\sdist{T}{\vSc}{\vC}\leq(\ell-1)\cdot\sdist{T}{\vSc}{\vC'}}$.
			\end{lemma}
			\begin{proof}
				Since~${\vC}$ lies above~${\vSc}$ 
				in the same group as~${\vSc}$ in~${T'}$, we
				have that~${\vSc}$ was added later to this group than~${\vC}$.  Hence every
				edge on the path (ignoring edge directions) from~${\vSc}$ to~${\vC}$ in~${T}$
				has length at most~${\sdist{T}{\vSc}{\vC'}}$; see \fig~\ref{fig:groups} and consider~${\vSc=5}$,~${\vC=2}$ and~${\vC'=3}$.  
				Since no more than~${\ell-1}$ edges lie
				on this path and since~${\ds{T}}$ is a metric, the claim follows.
			\end{proof}
			
			\begin{lemma}\label{lem:routing-between-group}
					Let~${\vSc}$ be the root of any non-root group, let~${\vC'}$ be its father in~${T}$, and let~${\vC}$ be any node 
					in the parent group of~${\vSc}$.
					It holds~${\sdist{T}{\vSc}{\vC}\leq\ell\cdot\sdist{T}{\vSc}{\vC'}}$.
			\end{lemma}
			\begin{proof}
				By triangle inequality,
				\[{\sdist{T}{\vSc}{\vC}\le\sdist{T}{\vSc}{\vC'}+\sdist{T}{\vC'}{\vC}}\formulaPunctuationSpace.\]
				If~${\vC'=\vC}$, we are done. Otherwise, there is an edge~${(\vC',\vSc')}$ in~${T}$. 
				By Lemma~\ref{lem:routing-in-group}, \[{\sdist{T}{\vC'}{\vC}\leq(\ell-1)\cdot\sdist{T}{\vC'}{\vSc'}}\] and,
				by Lemma~\ref{lem:group_edge_monotonicity},~${\sdist{T}{\vC'}{\vSc'}\le\sdist{T}{\vSc}{\vC'}}$. 
			\end{proof}
			
			\begin{lemma}\label{lem:dec-distance-modified-tree}
				Let~${(\vSc,\vSc')}$ and~${(\vC,\vC')}$ be any two edges in~${T}$.
				If~${\vC}$ lies above~${\vSc}$ in~${T'}$, 
				 then~${\sdist{T}{\vSc}{\vSc'}\ge\sdist{T}{\vC}{\vC'}}$.
			\end{lemma}
			\begin{proof}
				If~${\vSc}$ and~${\vC}$ do not belong to the same group in~${T}$, then the claim follows by Lemma~\ref{lem:group_edge_monotonicity}.
				Otherwise, our greedy choice in the construction of the group implies the claim.
			\end{proof}
			
			\paragraph{Rounding the facility openings.}
			In the previous step of our algorithm, we have computed, for~${T}$ as well as every other star tree of our star forest~${H}$, a
			new in-tree with modified groups.
			Now, to decide the eventual openings of facilities, we use the dependent rounding procedure 
			described in Section~\ref{app:dep-round} 
			that we apply on all facilities of the star forest together. We will refer to the randomized rounding algorithm by \emph{rounding procedure}.
			Later, we show how to derandomize it.

			The rounding procedure starts with the opening vector~${\ys}$. In each step, it computes a new opening vector that is used as the input for the next step. 
				In the first iterative phase (Type II iteration), the procedure considers, step by step, pairs of still fractional facilities.  
				In such a pair of facilities, the procedure pumps one of the openings up and the other one down randomly choosing the one to increase. As a result of this step, at least one of the two facilities  
				becomes either closed or open. 
				The Type II iteration phase ends when at most one fractional facility is left in~${H}$. Based on its current opening value, we randomly decide whether to close or to open it (Type I iteration).

				In the first phase, the procedure preserves the sum of facility openings (hence, their volume).
				Therefore, we will open either~${\floor{ \svol{H}}}$ or~${\ceil{ \svol{H}}}$ facilities at the end.  
			 Moreover, the probability of eventually opening the facility~${i}$ equals the opening value that it had at the beginning of the random procedure, that is, it equals~${\ysComp_i}$. 
			
			On top of these standard properties, we will also exploit that we may guarantee to almost preserve the volume in a number of chosen subsets of facilities,
			provided that the subsets form a laminar family (see Lemma~\ref{lem:dependent-rounding-sum-preservation}). 
			Here, rather than explicitly defining the family of subsets, we will directly say in which order the pairs of
			fractional facilities should be chosen.

			The rounding will proceed first within the groups until at most one fractional facility is left in each of the groups.
			Within each group, the rounding procedure will always select the top-most pair of currently fractional facilities.  
			Please note that modifying the shape of the tree inside the groups into chain graphs we have made the choice of the top-most pair unambiguous.
			When there is at most one fractional facility left in each group, the rounding may be continued in an arbitrary order. 
			At the end, there is at most one fractional facility left in the whole star forest. We open it with probability equal to its current opening.

						In more detail, consider any group~${G_\vSc}$. In each step, we select among all the facilities of~${G_\vSc}$ the two top-most fractional ones. 
						Here, we say a facility lies above another one, if the node it belongs to lies above the node of the other facility in the chain graph of~${G_\vSc}$. Then we round this pair as described in the Type~II iteration (see Section~\ref{app:dep-round}). As a result, one of both facilities gets either open or closed and their total volume remains the same. 
						Eventually, we are left with at most one fractional facility in~${G_\vSc}$.

			During the next discussions, the (unchanged) opening vector on which the rounding procedure started is still denoted by~${\ys}$. Also the volume of a star remains defined by~${\ys}$. However, when referring to the opening value of a facility without specifying the opening vector, we will refer to its opening in the opening vector returned by the rounding procedure.

			The following lemma shows that, independently of the outcome of the rounding procedure, the distribution of the volume has a  nice property: For any node, the total volume of the stars above the node is almost entirely preserved. Later, this will ensure that enough facilities are open above every node without open facilities to serve its demand.  

			\begin{lemma} \label{lem:sum-preservation}
			    Consider any group and let~${l}$ be the number of it nodes.
				For each~${m}$ with~${1\le m\le l}$, let~${\F_m}$ denote the set of all facilities of the first~${m}$ nodes from the top, and let~${v_m}$ denote the volume of~${\F_m}$ in~${\ys}$.
				
				The following is true: For each~${m}$ with~${1\le m\le l}$, the rounding procedure opens at least~${\floor{v_m}}$ facilities in~${\F_m}$.
			\end{lemma}
			\begin{proof}
			 
			  By our greedy choice of always selecting the top-most pair of fractional facilities, we exhaustingly applied Type II iteration on each set~${\F_m\in\F_1, \dots,~\F_l}$ in this order until at most one fractional facility remained in~${\F_m}$. 
			  Note that~${\F_1 \subset \dots \subset \F_l}$ forms a laminar sequence. 
			  Thus, the claim follows directly by the sum preservation shown in Lemma~\ref{lem:dependent-rounding-sum-preservation} and the fact that once opened facilities will not be closed at any step later by the rounding procedure.
			\end{proof}

			\paragraph{Routing and analysis.}
			
			Once the facilities are opened, a minimum-cost assignment of clients to facilities can be found, for example, by a minimum-cost flow computation in the original graph. Nevertheless, for the purpose of the analysis, 
			we will consider a suboptimal assignment where the demand is routed along the edges of the in-tree~${T'}$.
			
								  \begin{figure}[htb]
								  	\centering
								  	\includegraphics{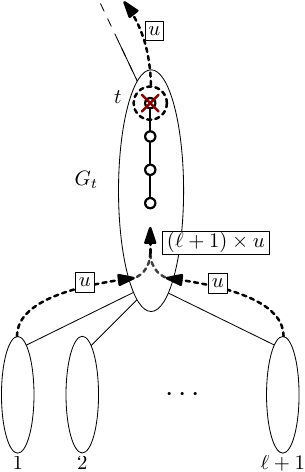}
								  	\caption{A parent group~${G_t}$ with~${\ell+1}$ children groups. 
								  		The groups are depicted as ovals with edges connecting them.
								  		Each child group sends at most~${u}$ units of demand to its parent group (dashed arrows). Thus,~${G_t}$ receives at most~${(\ell+1)u}$ units of demand from its children. 
								  		Inside~${G_t}$, the root~${t}$ is a small star whose \supporting facility has been closed.
								  		The demand of~${t}$, which is at most~${u}$, gets routed to the parent of~${G_t}$ (assuming that it exists).
								  		Any other demand of~${G_t}$ and all demand coming from its children is served within~${G_t}$.}
								  	\label{fig:demand_assignment}
								  \end{figure}
			We will make sure that the demand of any node~${\vSc\in\Cs{T}}$ will be satisfied not farther than at the root of the group of the parent of~${\vSc}$, which is not too far by Lemmas~\ref{lem:group_edge_monotonicity} and~\ref{lem:routing-in-group}. 
			We will show that after scaling up the capacity of each facility by a factor of~${2 + 3/(\ell-1)}$, each non-root group will send up at most~${u}$ units of demand to its parent group; see \fig~\ref{fig:demand_assignment}.
			Then we will argue that the excess capacity of at least~${(\ell + 1)u}$ in a group is sufficient to serve the demand sent up from its child groups.
			The tricky part is to control the demand transportation within groups.
			Note that once we let a unit of demand travel along an edge of~${T'}$, then, by paying only~${\ell}$ times more, we may let it travel further as long as it stays within the same group. 
			Therefore, it is essential to make sure that inside each group sufficient capacity is provided by the open facilities above a node to collect its demand.

									\begin{lemma}\label{lem:group-assignment}
										For any group~${G_\vSc}$,
										we can assign 
										its total demand 
										such that each open facility receives at most~${2u}$ units of demand, 
										the demand of each big star gets completely assigned to its own open facilities 
										and the following holds for every small star~${S_{\vSc'}}$:
										
										\begin{listRoman}
											\item If~${S_{\vSc'}}$ has an open facility, its demand gets completely assigned to its open facility.	
											\item Otherwise, if~${\vSc'}$ is the root of~${T'}$, its demand gets completely assigned to open facilities that belong to its son. (Note that the son exists.)			 	
											\item Otherwise, if~${\vSc'}$ is the root of~${G_{\vSc}}$, its demand remains completely unassigned.
											\item Otherwise, 
											the demand of~${S_{\vSc'}}$ 
											gets completely assigned to open facilities that belong to nodes lying above~${\vSc'}$ in~${G_{\vSc}}$.
										\end{listRoman}
										
									\end{lemma}
					\begin{proof}
						In the following, we call a small star~\emph{closed} if it contains no open facility in the opening vector returned by the rounding procedure.
									
						We set the capacity of each open facility to~${2u}$, hence, each open facility can serve a demand up to~${2u}$.
						Consider any big star. By definition, it has some volume~${v}$ larger than~${1}$ and an almost integral solution. 
						Thus, it contains~${\floor{v}}$ open facilities in~${\ys}$ and at least the same number in the output of the rounding procedure. By Constraint~\eqref{con:star_demand}, its demand is at most~${vu}$. 
						Using the fact~${2u\cdot\floor{v}\geq uv}$, we assign the demand of every big star to its own open facilities such that each facility receives at most~${2u}$ units of demand. 
						Next, consider any small star. It has some volume~${v}$ and a demand of size at most~${uv \le u}$. Hence, if a small star is not closed, then we assign its demand to its own open facility. 
						
						We now show how we assign the demand of closed small stars.
						Let~${\vSc_1,\dots,\vSc_l}$ be the nodes as they are ordered in the
						group~${G_\vSc}$ from top to bottom.
						First assume that at least one of the two statements is true: (a)~${S_{\vSc_1}}$ is not a small star, or (b)~${\vSc_1}$ is not the root of~${T'}$.
						We prove the claim of the lemma by induction for every subgroup~${\vSc_1,\dots,\vSc_m}$ where~${m=1,\dots,l}$.
						
						Consider first the base case~${m=1}$. 
						If~${S_{\vSc_1}}$ is a closed small star, then, by assumption,~${S_{\vSc_1}}$ is not the root of~${T'}$. 
						We don't route its demand and the claim holds. Otherwise the star serves its own demand as shown above and the claim also holds. 
				
						Next, consider the inductive step with~${m\geq 2}$. 
						By induction hypothesis, we compute an assignment~${\sigma}$ that satisfies the claim for the subgroup~${\vSc_1,\dots,\vSc_{m-1}}$.  We will extend this assignment to~${S_{\vSc_m}}$.
						Let~${v}$ be the total volume of the stars~${S_{\vSc_1},\dots,S_{\vSc_m}}$.
						By Lemma~\ref{lem:sum-preservation}, 
						the number of open facilities in the stars~${S_{\vSc_1},\dots,S_{\vSc_m}}$ 
						is at least~${\floor{v}}$, and, hence, their total capacity is at least~${2u\cdot\floor{v}}$.  
						By Constraint~\eqref{con:star_demand}, the total demand located at these stars is
						at most~${u v}$. Since~${m\geq 2}$ and since each star has volume at least~${1 - 1/\ell \ge 1/2}$ (Definition~\ref{def:star}), we have~${v\geq 1}$. Consequently,~${2u\cdot\floor{v}\geq uv}$.  
						Let~${v'}$ be the volume in the stars~${S_{\vSc_1},\dots,S_{\vSc_{m-1}}}$.  
						Since the assignment~${\sigma}$ distributes at most~${uv'}$ units of demand, 
						the leftover capacity in the stars~${S_{\vSc_1},\dots,S_{\vSc_m}}$ is at least~${2u\cdot\floor{v}-uv'}$. This amount is larger
						than the demand of~${S_{\vSc_m}}$, which is at most~${u(v-v')=uv-uv'}$.  
						Therefore, if~${S_{\vSc_m}}$ is a
						small star whose fractional facility has been closed, we assign the demand of~${S_{\vSc_m}}$ in an arbitrary
						manner to the leftover capacity.  
						In all other cases, the star serves its demand itself as argued above.
						Thus, we get a new assignment for the subgroup~${\vSc_1,\dots,\vSc_{m}}$ satisfying the claim.
					
						Now, consider the case that~${S_{\vSc_1}}$ is a small star and~${\vSc_1}$ is the root of~${T'}$. By Property~\ref{prop:in-degree-bound}, the root of~${T'}$ has a son and thus~${l\ge 2}$.
						Again, we prove the claim of the lemma by induction for every subgroup~${\vSc_1,\dots,\vSc_m}$ where~${m=2,\dots,l}$.
						
						For the base case~${m=2}$, the claim immediately holds if neither~${S_{\vSc_1}}$ nor~${S_{\vSc_2}}$ is a closed small star as then the stars serve their demands by themselves. Otherwise, at least one of the stars is a closed small star. 
						Let~${v}$ be the total volume of~${S_{\vSc_1}}$ and~${S_{\vSc_2}}$ in~${\ys}$. 
						Recall, by Lemma~\ref{lem:sum-preservation}, that there are at least~${\floor{ v }}$ open facilities in the two stars. 
						Thus, the open facilities in the stars provide a capacity of at least~${2u \cdot \floor{ v }}$.
						As every star has volume at least~${1-1/\ell\ge 1/2}$, we have~${\floor{v}\ge1}$.	
						This fact implies two observations.					
						First, at least one facility is open. Hence, exactly one of the two stars is a closed small star. 
						Secondly, the capacity provided by the open facilities 
						is enough to serve their total demand, which is at most~${vu}$.			
					    Thus, we obtain a feasible assignment by assigning the demand of the closed small star in an arbitrary manner to the leftover capacity of the other star.
					
						For the inductive step with~${m\ge 3}$, we apply the same arguments as in the inductive step described above. 
						Hence, we obtain an assignment for the whole group~${G_\vSc}$ that satisfies the claim.

					\end{proof}

			The lemma above shows that, within each group, we can satisfy the demand of all its nodes, 
			except perhaps the root node of the group.
			If the demand of the root of a group~${G_\vSc}$ is not satisfied, then, by Lemma~\ref{lem:group-assignment}, the root~${\vSc}$ is associated with a small star and~${\vSc}$ is not the root of~${T'}$. Consequently, the unsatisfied demand is at most~${u}$ and there exists a parent group for~${G_\vSc}$. 
			To satisfy the demand of~${\vSc}$, we will forward it to the parent group.
			Thus, 
			we have to show that the parent group~${G_{\vC}}$ has enough capacity left to serve the demand sent from~${G_\vSc}$ and from every other of its children. 
			As~${G_{\vC}}$ is a non-leaf group, it contains~${\ell}$ stars. Since each star has volume at least~${1-1/\ell}$ (see Definition~\ref{def:star}), the volume~${v}$ of~${G_{\vC}}$ is at least~${\ell-1}$. By Lemma~\ref{lem:sum-preservation}, there are~${\floor{v}}$ open facilities in~${G_{\vC}}$. After scaling the capacities with~${(2+3/(\ell-1))}$ and using~${\floor{v}\ge \ell-1}$, we can lower bound the total capacity in~${G_{\vC}}$ by 
			\begin{alignat*}{2}
			&&~&(2+3/(\ell-1))\floor{v}u
			\\&=&&\floor{v}u + (1+3/(\ell-1))\floor{v}u
			\\&\ge&&(v-1)u + (1+3/(\ell-1))(\ell-1)u
			\\&\ge&& vu + (\ell + 1)u\formulaPunctuationSpace.
			\end{alignat*}
			From this capacity, at most~${vu}$ is used for the demand from~${G_\vSc}$. Consequently, at least~${(\ell+1)u}$ capacity remains to be potentially used by demand forwarded from the child groups. Since there are at most~${\ell +1}$ child groups and each of them forwards at most~${u}$ units of demand, the remaining capacity is sufficient. Thus, we assign the demand coming from the children arbitrarily on the facilities in~${G_{\vC}}$ that still have some capacity left; see \fig~\ref{fig:demand_assignment}.

			We summarize the properties of our assignment as follows.
			\begin{lemma}\label{lem:complete-assignment}
				We can assign
				the total demand in~${T'}$ 
				such that each open facility receives at most~${2+3/(\ell-1)u}$ units of demand, 
				the demand of each big star gets completely assigned to its own open facilities 
				and the following holds for every small star~${S_\vSc}$:
				
				\begin{listRoman}
					\item If~${S_\vSc}$ has an open facility, its demand gets completely assigned to its open facility.	
					\item Otherwise, if~${\vSc}$ is the root of~${T'}$, its demand gets completely assigned to open facilities that belong to its son.
					\item Otherwise, if~${\vSc}$ is the root of a group, its demand gets completely assigned to open facilities that belong to  nodes lying in the parent group of~${\vSc}$. 
					\item Otherwise, the demand of~${S_\vSc}$ gets completely assigned to open facilities that belong to nodes lying above~${\vSc}$ in the same group as~${\vSc}$.
				\end{listRoman}
				
			\end{lemma}
		
		\paragraph{Expected Cost.}	
		Now, we examine the expected cost of the assignment above, that is, the cost of placing facilities in~${T'}$ and routing demand within~${T'}$. 
			
			Consider any outcome of the rounding procedure together with our assignment of Lemma~\ref{lem:complete-assignment}.
			We will split the cost over all facilities such that each facility is charged with some cost portion: 			
			Every facility pays its own opening cost
			as well as the connection cost of the demand that it receives directly from its star center.
			If the single facility~${i}$ of a small star is closed, it pays the connection cost of distributing the demand of its star~${S_\vSc}$ on other star centers. 
			If a receiving star center~${\vC}$ belongs to a small star~${S_{\vC}}$, then~${i}$ even pays for forwarding the demand to the single open facility in~${S_{\vC}}$.
			If~${S_{\vC}}$ is big, then each facility~${i'}$ receiving a demand portion from~${\vSc}$ pays for the last stretch of moving the demand from~${\vC}$ to~${i'}$.
			Note that by this charging schema, we have completely split the cost over all facilities.

			In what follows, we upper bound the cost contribution of each facility~${i}$ depending on whether~${i}$ is open or closed in the outcome of the rounding procedure.
			For the case that~${i}$ is open, we use~${\copen{i}}$ to denote its upper bound, otherwise, we use~${\cclosed{i}}$.			
			Thus, for any opening vector~${\rys}$ returned by the rounding procedure, the total cost of the solution to~${T}$ 
			will be bounded from above by \emph{the gross cost of~${T}$} that we define as follows. 	
			\begin{definition} 
				The gross cost of~${T}$ is~${\sum_{i\in\Fs{T}} \rysComp_i \copen{i} + (1-\rysComp_i)\cclosed{i}}$.
			\end{definition}
			
			Consider any facility~${i}$ and assume that it belongs to a big star~${S_\vSc}$.
			For the case that~${i}$ is closed,~${i}$ is not charged and we set~${\cclosed{i}=0}$.
			For the case that~${i}$ is open, it has to pay for its opening cost~${f_i}$ as well as for moving from its star center all the demand that it receives.
			By Lemma~\ref{lem:complete-assignment},~${i}$ receives at most~${(2+3/(\ell-1))u}$ units of demand.  
			Thus, we set~${\copen{i} = f_i + (2+3/(\ell-1))u \sdist{T}{i}{\vSc}}$.
			
			Now, assume that~${i}$ belongs to a small star~${S_\vSc}$.
			If~${i}$ is not the single \supporting facility~${\iraised}$ of~${S_\vSc}$ in~${\ys}$, it will remain closed and therefore not charged. We set~${\copen{i}=\cclosed{i}=0}$.
			Otherwise, consider~${i=\iraised}$.
			For the case that~${\iraised}$ is open, it has to pay for its opening cost as well as for moving the demand that it receives directly from its star center.
			Since a small star has at most one open facility,~${\iraised}$ has to pay for the full demand~${w_\vSc}$ of~${\vSc}$. Recall that it is not charged for receiving demand originating from other stars. Thus, we set~${\copen{\iraised} = f_\iraised + w_\vSc \sdist{T}{\iraised}{\vSc}}$. 
			
			For the case that~${\iraised}$ is closed, it has to pay for distributing the demand~${w_\vSc}$ on other stars. We distinguish two cases.
		
			First, assume that~${\vSc}$ is the root of~${T}$ and let~${\vSc'}$ be its single son in~${T}$ (see Property~\ref{prop:in-degree-bound}), which is the same in~${T'}$. 
			By Lemma~\ref{lem:complete-assignment}, all the demand of~${S_\vSc}$ gets assigned to facilities in~${S_{\vSc'}}$. 
			Thus, we move the demand to~${\vSc'}$ over the distance~${\sdist{T}{\vSc}{\vSc'}}$.
			If~${S_{\vSc'}}$ is a small star, we further route the demand to the single open facility in~${S_{\vSc'}}$.  
			By Property~\ref{prop:small-dist}, we traversed in total a distance at most~${(1+1/2)\sdist{T}{\vSc}{\vSc'}}$.
			Generously, we set~${\cclosed{\iraised} = w_{\vSc}(\ell+1/2)\sdist{T}{\vSc}{\vSc'}}$.
			
			Next, assume that~${\vSc}$ is not the root of~${T}$.
			Let~${\vSc'}$ be the father of~${\vSc}$ in~${T}$. 
			Fix a portion~${\hat{w_\vSc}}$ of the demand that is routed to some star~${S_{\vC}}$.
			By Lemma~\ref{lem:complete-assignment},~${\vC}$ has to lie above~${\vSc}$ in~${T'}$. If~${\vSc}$ is the root of its group, then~${\vC}$ belongs to the parent group of~${G_\vSc}$. 
			Otherwise, if~${\vSc}$ is not the root of its group,  
			then~${\vC}$ belongs to the same group as~${\vSc}$. 
			By Lemmas~\ref{lem:routing-in-group} and~\ref{lem:routing-between-group}, we know that the distance between~${\vSc}$ and~${\vC}$ is
			at most~${\ell\sdist{T}{\vSc}{\vSc'}}$.  
			If~${S_{\vC}}$ is a big star, we just route~${\hat{w_\vSc}}$ to~${\vC}$.  
			If~${S_{\vC}}$ is a small star, we further route~${\hat{w_\vSc}}$ to its single open facility~${\iraised'}$. 
			
			Assume that~${\vC}$ is not the root of~${T}$. Let~${\vC'}$ be the father of~${\vC}$ in~${T}$. 
			Since~${\vC}$ lies above~${\vSc}$ in~${T'}$, we have~${\sdist{T}{\vC}{\vC'}\le\sdist{T}{\vSc}{\vSc'}}$ by Lemma~\ref{lem:dec-distance-modified-tree}.
			Thus, by Property~\ref{prop:small-dist}, the distance from~${\vC}$ to~${\iraised'}$ is at most \[{\sdist{T}{\vC}{\vC'}/2\le \sdist{T}{\vSc}{\vSc'}/2}\formulaPunctuationSpace.\] 
	
			Now, if~${\vC}$ is the root of~${T}$, then let~${\vC'}$ be the single son of~${\vC}$ in~${T}$ and~${T'}$. If~${\vC'}$ lies above~${\vSc}$ in~${T'}$, then Lemma~\ref{lem:dec-distance-modified-tree} implies~${\sdist{T}{\vC}{\vC'}\le\sdist{T}{\vSc}{\vSc'}}$. Otherwise,~${\vC'=\vSc}$, and thus~${(\vC',\vC)=(\vSc,\vSc')}$.   
			Again, by Property~\ref{prop:small-dist}, we have \[{\sdist{T}{\iraised}{\vC}\le \sdist{T}{\vC}{\vC'}/2\le \sdist{T}{\vSc}{\vSc'}/2}\formulaPunctuationSpace.\] 
			
			Hence, the total distance for~${\hat{w_\vSc}}$ is at most~${(\ell+1/2)\sdist{T}{\vSc}{\vSc'}}$.
			Since this bound holds for any portion of the demand~${w_\vSc}$, 
			we set~${\cclosed{\iraised} = w_{\vSc}(\ell+1/2)\sdist{T}{\vSc}{\vSc'}}$.

		We summarize our bounds:
		\begin{definition}
			Let~${\Csbig{T}}$ denote the set of all star centers of big stars in~${\Cs{T}}$, 
		    and let~${\Cssmall{T}}$ denote the set of all star centers of small stars in~${\Cs{T}}$.
		\begin{itemize}
			\item For every~${\vSc\in\Csbig{T}}$ and~${i\in\F_\vSc}$, we set 
			\[\copen{i}=f_i + (2+3/(\ell-1))u \sdist{T}{i}{\vSc}
			\textrm{\enspace\enspace and \enspace\enspace}
			\cclosed{i}=0\formulaPunctuationSpace.\]
			\item For every~${\vSc\in\Cssmall{T}}$ and the only \supporting facility~${\iraised}$ of~${S_\vSc}$ in~${\ys}$, we set 
			\[\copen{\iraised} = f_\iraised + w_\vSc \sdist{T}{\iraised}{\vSc} 	\textrm{\enspace\enspace and \enspace\enspace}
			\cclosed{\iraised}=w_{\vSc}(\ell+1/2)\sdist{T}{\vSc}{\vSc'} \formulaPunctuationSpace,\]
			where~${\vSc'}$ is the son of~${\vSc}$ if~${\vSc}$ is the root, 
			and the father of~${\vSc}$ in~${T'}$ otherwise.
			\item For every other facility~${i\in\Fs{T}}$, we set
			\[\copen{i}=0
			\textrm{\enspace\enspace and \enspace\enspace}
			\cclosed{i}=0\formulaPunctuationSpace.\]
		\end{itemize}
		\end{definition}

			Now we are ready to bound the expected gross cost. 
						\begin{lemma}\label{lem:expected-cost}
							The expected gross cost of our solution to~${T}$ is at most~${(4\ell+3)b(T)}$.
						\end{lemma}
						\begin{proof} 
							
							For every~${\vSc\in\Cssmall{T}}$, we define~${\iraised(\vSc)}$ as the single \supporting facility of~${S_\vSc}$ in~${\ys}$.
							Recall that every facility~${i\in\Fs{T}}$ is opened with probability equal to~${\ysComp_i}$ by the rounding procedure.
							Using our definitions, the expected gross cost is
							\begin{alignat*}{2}
							&&~&\sum_{i\in\Fs{T}}  \ysComp_i \copen{i} + (1-\ysComp_i)\cclosed{i}
							\\&=&&\sum_{\vSc\in\Csbig{T}}\sum_{i\in\F_\vSc}  \ysComp_i \copen{i} 
							+  \sum_{\vSc\in\Cssmall{T}} \ysComp_{\iraised(\vSc)} \copen{{\iraised(\vSc)}} + (1-\ysComp_{\iraised(\vSc)})\cclosed{{\iraised(\vSc)}}\formulaPunctuationSpace.
							\end{alignat*}
						
						    Next, we separately bound the cost contribution of facilities belonging to big and small stars.
						    Thereby, we will use~${\dist{i}{\vSc}=\sdist{T}{i}{\vSc}}$ for every~${\vSc\in\Cs{T}}$ and~${i\in\F_\vSc}$, which holds by the definition of~${\ds{T}}$.
							
							For each~${\vSc\in\Csbig{T}}$, we have
							\begin{alignat*}{2}
							&&~&\sum_{i\in\F_\vSc}  \ysComp_i \copen{i}  
							\\&=&& \sum_{i\in\F_\vSc} \ysComp_i f_i + (2+3/(\ell-1))u\sum_{i\in\F_\vSc}\ysComp_i\sdist{T}{i}{\vSc} 
							\\&\le&&(2+3/(\ell-1)) \bsmall_\vSc
							\\&\le&&5 \bsmall_\vSc \formulaPunctuationSpace,
							\end{alignat*}
							where the first inequality follows from Constraint~\eqref{con:star_strict_budget} for solutions to star instances, and the second one follows from~${\ell\ge 2}$. 
							
							For each~${\vSc\in\Cssmall{T}}$ with the single \supporting facility~${\iraised}$ in~${\ys}$, we have
							\begin{alignat*}{2}
							&&~&\ysComp_{\iraised} \copen{{\iraised}}
							\\
							&=&& \ysComp_{\iraised} f_{\iraised} + \ysComp_{\iraised} w_\vSc \sdist{T}{{\iraised}}{\vSc}
							\\
							&\le&& \ysComp_{\iraised} f_{\iraised} + w_\vSc \sdist{T}{{\iraised}}{\vSc}
							\\
							&\le&&\bbig_\vSc \formulaPunctuationSpace,
							\end{alignat*}
							where the last inequality follows from Constraint~\eqref{con:star_budget}. 
							If~${\vSc}$ is the root of~${T'}$, then let~${\vSc'}$ be the son of~${\vSc}$, otherwise, let~${\vSc'}$ be the father of~${\vSc}$.
							We have
							\begin{alignat*}{2}
							&&~&(1-\ysComp_{\iraised})\cclosed{{\iraised}}
							\\
							&=&& (1-\ysComp_{\iraised})  w_{\vSc}(\ell+1/2)\sdist{T}{\vSc}{\vSc'} 
							\\
							&\le&& 8(\ell+\frac{1}{2})(\budDirectConnection{\vSc}+\ell\budRelativeConnection{\vSc}) 
							\\
							&\le&& (4\ell+2)(2\budDirectConnection{\vSc}+2\ell\budRelativeConnection{\vSc}) 
							\\
							&\le&& (4\ell+2)\bbig_\vSc \formulaPunctuationSpace,
							\end{alignat*}
							where the first inequality follows from Property~\ref{prop:rerouting}. 
							Thus,
							\[{\ysComp_{\iraised} \copen{{\iraised}} + (1-\ysComp_{\iraised})\cclosed{{\iraised}} \le (4\ell+3)\bbig_\vSc}\formulaPunctuationSpace.\]
							
							Given~${(4\ell+3)\bbig_\vSc \ge 5\bsmall_\vSc}$ for each~${\vSc \in \Cs{T}}$,
							we infer the following upper bound on the expected cost:
							\begin{alignat*}{2}
							&&~&\sum_{i\in\Fs{T}}  \ysComp_i \copen{i} + (1-\ysComp_i)\cclosed{i}
							\\&\le&&\sum_{i\in\Fs{T}}  (4\ell+3)\bbig_\vSc
							\\&=&&  (4\ell+3)b(T)\formulaPunctuationSpace.\qedHereInAlign
							\end{alignat*}
						\end{proof}
					
			The result on the trees directly extends to the forest~${H}$, as no demand is routed between any two star trees. 
			Here, we define the \emph{gross cost of~${H}$} as the sum of the gross costs of its star trees.
			\begin{corollary}\label{cor:expected-forest-cost}
				The expected gross cost 
				of our solution to the star forest~${H}$ is at most~${(4\ell+3)b(H)}$. 
			\end{corollary}
			Note that the same upper bound also applies on the expected cost of our solution since the gross cost is always an upper bound on the actual cost.
					
			\paragraph{Derandomization.}	
			Let~${\F_H}$ denote the set of all facilities in~${H}$.
			Consider any step of the rounding procedure and let~${\rys}$ be the current opening vector at the beginning of the step.
			From this step on, the rounding procedure guarantees that each facility~${i\in\F_H}$ will be opened with probability~${\rysComp_i}$.
			Hence, from this step on,  
			the expected value of the gross cost is exactly~${\sum_{i\in\F_H}  \rysComp_i \copen{i} + (1-\rysComp_i)\cclosed{i}}$.
			This value is an upper bound for the expected gross cost of at least one of the choices in the current step.
			This leads us directly to the following derandomization algorithm: At each step of the rounding procedure, compute the expected value of the gross cost for each of the two choices of rounding.  
			Greedily make the choice with the smaller expected value, or make an arbitrary choice if both values are equal.
			At the end, we deterministically obtain a solution to~${H}$ whose gross cost is at most 
			the expected gross cost of our rounding procedure. 
			
			Using Corollary~\ref{cor:expected-forest-cost}, we summarize our discussion in the following theorem.
		\begin{theorem}
		\label{thm:algorithm-star-forest}
		For any value at least~${2}$ for the parameter~${\ell}$, there is an efficient algorithm that computes for a given star
		forest~${H}$ a solution with capacity violation at most~${2+3/(1-\ell)}$ and cost at most~${(4\ell + 3)\cdot b(H)}$.
	\end{theorem}
						Combining this result with Theorem~\ref{thm:central_theorem}, we obtain Theorem \ref{thm:main_2_eps}.

				\section{Algorithm for Non-uniform Hard-Capacitated~\PROBkmed}
				\label{sec_ckm}
				In this section, we describe a bi-factor approximation algorithm for
				the~\PROBkmed problem with non-uniform hard capacities that will prove Theorem~\ref{thm:main_3_eps}.  
				Note that 
				this problem is 
				equivalent to~\PROBkfl with uniform opening costs. 
				Moreover, its standard LP relaxation, denoted by~\LPkMed{}, is a special case of~\LPfacLoc{} where all opening costs~${(f_i)_{i\in\F}}$ are set to~${0}$.

				During our algorithm,
				we will obtain, step by step, a series of solutions where the initially fractional openings are more and more restricted until we finally arrive at an integral solution to~\LPkMed{} with bounded capacity violation.
				We will consider the following two types of solutions.
				\begin{definition}
					A solution~${(\widetilde{\x},\widetilde{\y})}$ to~\LPkMed{} is called a \emph{$[\puthalf,1]$-solution} if
					we have~${\widetilde{\yComp}_i \in \{0\} \cup [\puthalf,1]}$ for every~${i \in \F}$. 			
					Similarly, a solution~${(\widetilde{\x},\widetilde{\y})}$ to~\LPkMed{} is called a \emph{$\{\puthalf,1\}$-solution} if we have~${\widetilde{\yComp}_i \in \{0, \puthalf,1\}}$ for every~${i \in \F}$.
				\end{definition}
				
				Recall that a solution with capacity violations has to satisfy all constraints of~\LPkMed{} with the exception of Constraint~\eqref{lp:capacity_ge_total_demand}. In this section, we will consider an even weaker type of solutions.
				\begin{definition}
					A solution~${(\widetilde{\x},\widetilde{\y})}$ 
					to the weaker version of~\LPkMed{}
					where we drop Constraints~\eqref{lp:opening_ge_demand} and~\eqref{lp:capacity_ge_total_demand}  
					is called a \emph{weak solution} to~\LPkMed{}.
				\end{definition}

				Let~${(\xo,\yo)}$ be the optimum solution to~\LPkMed{} that we fixed in Section~\ref{sec:bundles}. 
				Similarly to the algorithm of the previous section, we partition all facilities into \starTails{} and, for each corresponding star instance, we compute a strict solution of volume bounded by a function of~${\ell}$. In doing so, we set the parameter~${\ell=2}$. 
				Note that we do not consider relaxed solutions.  
				We then proceed as follows.

				In Section~\ref{[1-1/l,1]-solution},  
				we modify the solution to each star instance
				by 
				moving openings between facilities 
				such that all \supporting facilities are open or all but one of the facilities are closed.
				The union of these solutions will  
				help us to obtain a~${[\puthalf, 1]}$-solution~${(\x', \y')}$ where fractional facilities have capacity violation at most~${1+\varepsilon}$ and open facilities have capacity violation at most~${2+\varepsilon}$, for any sufficiently small constant~${\varepsilon}$.
				
				Then, in Section~\ref{bank_step}, we construct a weak~${\{\puthalf, 1\}}$-solution~${(\hat{\x}, \hat{\y})}$: 
				By some greedy rule, we either round each opening in~${\y'}$ down to~${\puthalf}$ or up to~${1}$. 
				Thereby, we might violate Constraint~\eqref{lp:opening_ge_demand} and therefore obtain only a weak solution. 
				Also the capacity violation might increase slightly, but not more than up to~${2 + 2\varepsilon}$.
				The connection cost of~${(\hat{\x}, \hat{\y})}$ remains the same as in~${(\x', \y')}$.
				
				Eventually, in Section~\ref{final_rounding}, we round~${(\hat{\x}, \hat{\y})}$ into an integral solution. We do this by
				building so called \emph{facility trees} and cutting them to smaller instances which are easier to round. 
				By this procedure, we obtain an integral solution~${(\bar{\x}, \bar{\y})}$ to~\LPkMed{} with capacity violation~${3 + 3\varepsilon}$.

				\subsection{Obtaining a~${[\puthalf,1]}$-Solution with Capacity Violations.}
				\label{[1-1/l,1]-solution}
				
				In this section, we describe how to obtain a 
				solution~${(\x',\y')}$ with capacity violations such that the opening of every \supporting facility is in~${[\puthalf,1]}$. 
				Let~${\varepsilon}$ be an arbitrary constant with~${0<\varepsilon\le 1}$.  
				
				In the following, we consider a star instance~${S_\vSc}$. 
				By Lemmas~\ref{lem:compute_strict_solution}~and~\ref{lem:at_most_two_fractional}, we compute 
				a strict solution~${\ys}$ to~${S_\vSc}$ with at most two fractional facilities and the bound~${\vol{\ys} \le \fvol{\yo}{\F_\vSc}}$.
				Since~${\ys}$ satisfies~Constraint~\eqref{con:star_demand}, we can completely distribute the demand~${w_\vSc}$ on the facilities in~${\F_\vSc}$ such that each facility~${i}$ serves a demand~${\demand_i}$ of size at most~${\ysComp_iu_i}$. 
				We fix such a distribution of~${w_\vSc}$ and define~${\demand_i}$ as the demand that the facility~${i}$ has to serve; thus~${\sum_{i\in \F_\vSc}\demand_i = w_\vSc}$.
				Note that there is no capacity violation in our distribution and that the cost of moving~${\demand_i}$ to the facility~${i}$ is at most~${\dist{i}{\vSc} \demand_i \le \dist{i}{\vSc} \ysComp_iu_i}$. Hence, by Constraint~\eqref{con:star_strict_budget},  
				the connection cost of our distribution, that is, the cost of sending~${w_\vSc}$ from the star~center~${\vSc}$ to the facilities, is upper bounded by the strict budget~${\bsmall_\vSc}$. 
				
				Next, we compute a new opening vector~${\ys'}$ for the facilities in~${\F_\vSc}$ where either all \supporting facilities are open, or there is one fractional facility and all other facilities are closed.  
				In parallel, we assign each facility a demand~${\demand'_i}$ such that the new distribution of~${w_\vSc}$ has cost in~${\bigOh(\bsmall_\vSc)}$ and capacity violation in~${\bigOh(1+\varepsilon)}$.
				Depending on the size of~${\vol{\ys}}$, we compute~${\ys'}$ in one of two different ways.

				\paragraph{Small Volume.}
				First, we consider the case when~${\vol{\ys}\le 1}$, which is always true for star instances whose \starTails{} have volume at most~${1}$,  
				and might sometimes also hold star instances with \starTails{} of volume greater than~${1}$.  
				If there is only one fractional facility~${\iraised}$,
				we just set~${\ysComp'_\iraised= \min\{1, \fvol{\yo}{\F_\vSc}\}}$ 
				and~${\demand'_\iraised = \demand_\iraised}$. Then~${\ysComp'_\iraised\ge\vol{\ys}= \ysComp_\iraised}$ and we have no capacity violation.
				If there are exactly two fractional facilities~${\iraisedA}$ and~${\iraisedB}$,  
				then we 
				close one of them and
				move 
				its demand and opening to the other one. 
				In fact, we can do so without any increase of capacity violation as the next lemma shows.
				\begin{lemma}\label{no_cap_violation_in_small_bundle}
					Let~${i, i' \in \F_\vSc}$ satisfy~${\ysComp_{i} + \ysComp_{i'} \le 1}$.
					For at least one~${i''\!\in\{i,i'\}}$, there is no capacity violation if we set the opening of~${i''}$ to~${\ysComp_{i}+\ysComp_{i'}}$ and its demand to~${\demand_i + \demand_{i'}}$.
				\end{lemma}
				\begin{proof}
					Take~${i''\!\in\{i,i'\}}$ with~${u_{i''}\! = \max\{u_i, u_{i'}\}}$ and observe~${(\ysComp_{i} + \ysComp_{i'})u_{i''} \geq \demand_i + \demand_{i'}}$.
				\end{proof}
				
				Unfortunately, the connection cost might be unbounded in the lemma above. However, if we allow a slight capacity violation, we can control the cost.
				
				\begin{lemma}\label{small_bundle_cap}
					Let~${\varepsilon'}$ satisfy~${0<\varepsilon'\le 1}$ and
					let~${i, i' \in \F_\vSc}$ satisfy~${\ysComp_{i} + \ysComp_{i'} \le 1}$.
					For at least one~${i''\in\{i,i'\}}$, the following is true:
					If we set the opening of~${i''}$ to~${\ysComp_{i}+\ysComp_{i'}}$ and its demand to~${\demand_i + \demand_{i'}}$,
					then 
					its capacity violation is at most~${1 + \varepsilon'}$ and 
					we get~${\demand_i + \demand_{i'} \le (1+\varepsilon')/\varepsilon' \cdot \demand_{i''}}$.
				\end{lemma}
				\begin{proof}
					If for both choices of~${i''}$ the resulting capacity violation is at most~${1+\varepsilon'}$,
					we choose~${i''\in\{i,i'\}}$ with maximum~${\demand_{i''}}$. 
					Then~${(\demand_i + \demand_{i'})/\demand_{i''} \le 2 \le  (1+\varepsilon')/\varepsilon'}$ and the claim holds.
					
					Now, assume that for one of the choices, say~${i'}$, 
					we have capacity violation~${\gamma}$ greater than~${1 + \varepsilon'}$.
					Then, by Lemma~\ref{no_cap_violation_in_small_bundle}, the other choice for~${i''}$ 
					has no capacity violation at all. 
					Thus, we choose~${i''=i}$.
					Note that~${\gamma}$ satisfies~${\gamma u_{i'} (\ysComp_i + \ysComp_{i'})=(\demand_i + \demand_{i'})}$. Given~${u_{i'}\ysComp_{i'} \ge \demand_{i'}}$, we obtain 
					\[\frac{\ysComp_{i'}}{\ysComp_i + \ysComp_{i'}} \cdot \frac{\demand_i + \demand_{i'}}{\demand_{i'}} \ge \gamma > 1 + \varepsilon' \formulaPunctuationSpace.\]
					This leads to~${(\demand_i + \demand_{i'})/{\demand_{i'}} > 1 + \varepsilon'}$. 
					Together with~${1 - \demand_{i'}/(\demand_i + \demand_{i'}) = {\demand_i}/(\demand_i + \demand_{i'})}$, we obtain \[{\frac{\demand_i + \demand_{i'}}{\demand_i} < \frac{1+\varepsilon'}{\varepsilon'}}\] for the demand increase of~${i}$. 
					
				\end{proof}
				
				By Lemma~\ref{small_bundle_cap} and choosing~${\varepsilon'=\varepsilon}$, we select an appropriate facility~${\iraised\in\{\iraisedA,\iraisedB\}}$. We set its opening~${\ysComp'_{\iraised}=\min\{1, \fvol{\yo}{\F_\vSc}\}}$, 
				its demand~${\rest{\demand}_{\iraised}=w_\vSc}$ and close the other facility. 
				Thus,~${\ysComp'_{\iraised}\ge \vol{\ys}}$ and the capacity bound still holds.
				Given~${\demand'_{\iraised} \le(1+\varepsilon)/{\varepsilon} \cdot \demand_{\iraised}}$ and given~${\demand_{\iraised} \dist{\iraised}{\vSc}\le \ysComp_{\iraised} u_{\iraised} \dist{\iraised}{\vSc} \le \bsmall_\vSc}$ by Constraint~\eqref{con:star_strict_budget}, the connection cost is at most~${(1+\varepsilon)/{\varepsilon} \cdot \bsmall_\vSc}$.

				\paragraph{Big Volume.}Next, we consider the case when~${\vol{\ys}> 1}$, which is only true for star instances whose \starTails{} have volume greater than~${1}$.  
				If there are no fractional facilities, 
				then we just set~${\ys' = \ys}$
				and~${\demand_i' = \demand_i}$ for each~${i\in \F_\vSc}$. 
				
				Otherwise, consider two \supporting facilities~${\iraisedA}$ and~${\iraisedB}$ with smallest openings. Since we have at most two fractional facilities, the remaining \supporting ones are  open. 
				
				\begin{lemma}\label{lem:bigVolumeTwoFac}
					Let~${i, i' \in \F_\vSc}$ 
					satisfy~${\ysComp_{i} + \ysComp_{i'} \geq 1}$. 
					For at least one of the two facilities the following is true: 
					If we open the facility and set its demand to~${\demand_{i} + \demand_{i'}}$,
					then its capacity violation and demand increase by a factor at most~${2}$.  
				\end{lemma}
				\begin{proof}
					Choose a facility~${i''}$ in~${\{i,i'\}}$ with the demand~${\max\{\demand_{i}, \demand_{i'}\}}$ and observe that the capacity violation is at most~${\ysComp_{i''}/(\ysComp_i + \ysComp_{i'}) \cdot (\demand_i + \demand_{i'})/\demand_{i''} \le  (\demand_i + \demand_{i'})/\demand_{i''} \le 2}$ and the connection cost is~${(\demand_i + \demand_{i'})\dist{i''}{\vSc} \le 2 \demand_{i''}\dist{i''}{\vSc}}$.
				\end{proof}
				If~${\ysComp_{\iraisedA} + \ysComp_{\iraisedB} \geq 1}$, we choose one of them by Lemma~\ref{lem:bigVolumeTwoFac} to be open in~${\ys'}$ and close the other one. Otherwise, if~${\ysComp_{\iraisedA} + \ysComp_{\iraisedB} < 1}$, then there is an open facility~${\iraised}$ in the star instance. 
				\begin{lemma}\label{lem:bigVolumeThreeFac}
					Let~${i, i' ,i'' \in \F_\vSc}$ 
					satisfy~${\ysComp_{i}=1}$ and~${\ysComp_{i'} + \ysComp_{i''} < 1}$.
					For at least one of the three facilities  
					the following is true:
					If we open the facility and set its demand to~${\demand_{i} + \demand_{i'} + \demand_{i''}}$,
					then its capacity violation increases by a factor at most~${2+\varepsilon}$ and its demand increases by a factor of a most~${2+4/\varepsilon}$. 
				\end{lemma}
				\begin{proof}
					By Lemma~\ref{small_bundle_cap} and choosing~${\varepsilon'=\varepsilon/2}$, we select one of~${\{i',i''\}}$, say~${i'}$, 
					such that 
					its capacity violation is at most~${1+\varepsilon'}$ and its demand is at most~${(1+\varepsilon')/\varepsilon' \cdot \demand_{i'}}$ 
					
					Then we apply Lemma~\ref{lem:bigVolumeTwoFac} on~${i}$ and~${i'}$ using as the opening and demand of~${i'}$ 
					the quantities~${\ysComp_{i'}+\ysComp_{i''}}$ and~${\demand_{i'}+\demand_{i''}}$, respectively.
					In the worst case, we choose to open~${i'}$ which causes a capacity violation at most~${2 (1+\varepsilon') = 2 + \varepsilon}$ and an total increase of demand by a factor at most~${2(1+\varepsilon')/\varepsilon' = 2 + 4/\varepsilon}$.
				\end{proof}
				
				By Lemma~\ref{lem:bigVolumeThreeFac}, we choose one of the three facilities in~${\{\iraised,\iraisedA,\iraisedB\}}$ to be open in~${\ys'}$ and close the other two. 
				In both cases, we route the demand of the closed facilities to the chosen one. The resulting capacity violation is at most~${2+\varepsilon}$. Since we do not change the openings and demands of any other facilities of the star instance, and only increase the demand of one facility by a factor at most~${2 + 4/\varepsilon}$,
				the total connection cost is at most~${\sum_{i \in \F_\vSc} \dist{i}{\vSc}(2 + 4/\varepsilon) \demand_{i}}$.
				By Constraint~\eqref{con:star_strict_budget}, this cost is at most~${(2 + 4/\varepsilon) \bsmall_\vSc}$.
				Note the inequality~${\vol{\ys'}\le\vol{\ys}\le\fvol{\yo}{\F_\vSc}}$.

				The discussion of the two cases of~${\vol{\ys}}$ can be summarized as follows.
				\begin{lemma}\label{lem:restricted_solution}
					We can compute an opening vector~${\ys'}$ for~${\F_\vSc}$ with~${\vol{\ys'}\le\fvol{\yo}{\F_\vSc}}$ %of volume at most~${\fvol{\yo}{\F_\vSc}}$ 
					where
					\begin{itemize}
						\item there is only one \supporting facility~${i\in\F_\vSc}$ and~${\ysComp'_i=\fvol{\yo}{\F_\vSc}}$, or
						\item all \supporting facilities are open.
					\end{itemize}
					Furthermore, we can distribute the demand~${w_\vSc}$ on the facilities \supporting in~${\ys'}$ such that 
					\begin{itemize} 
						\item each fractional facility~${i\in\F_\vSc}$ has capacity violation at most~${1+\varepsilon}$, 
						and
						\item each open facility~${i\in\F_\vSc}$ has capacity violation at most~${2+\varepsilon}$. 
					\end{itemize} 
					The connection cost of the distribution 
					is at most~${(2 + 4/\varepsilon) \bsmall_\vSc}$.
				\end{lemma}

				\begin{corollary}
					\label{cor:restr}
					For any~${\varepsilon}$ with~${0<\varepsilon\le 1}$,
					we can efficiently compute a~${[\puthalf, 1]}$-solution~${(\rest{\x}, \rest{\y})}$ with capacity violations
					such that~${\vol{\rest{\y}}\le k}$ holds,
					fractional facilities have capacity violation at most~${1+\varepsilon}$, 
					open facilities have capacity violation at most~${2+\varepsilon}$, and
					the total connection cost is at most~${20/\varepsilon \opt + 16\opt}$. 
				\end{corollary}
				\begin{proof}
					To construct a feasible solution~${(\rest{\x}, \rest{\y})}$ to~\LPkMed{} with the claimed capacity violation bounds, 
					we first apply the procedure of Lemma~\ref{lem:restricted_solution} to compute an opening vector for each star instance.
					Let~${\rest{\y}}$ be the union of all these opening vectors. 
					We have the inequality 
					\[{\vol{\rest{\y}} \le \sum_{\vSc\in \starCenterSet} \fvol{\yo}{\F_\vSc} \le \sum_{i \in \F} \yoComp_i \le k}\formulaPunctuationSpace,\] 
					where the second last inequality follows from the fact that each facility belongs to exactly one star instance. Thus, Constraint~\eqref{lp:volume_k} is fulfilled.
					
						Next, we construct a feasible assignment~${\rest{\x}}$ suitable for~${\rest{\y}}$.
						For every facility~${i\in\F}$, let~${\rest{\demand}_i}$ be its demand given by Lemma~\ref{lem:restricted_solution}.
						For every star instance~${S_\vSc}$ and every client~${\vC\in\C}$, we define~${\xComp_{\vC}^{\vSc}}$ as the total \lpDemand{}~${\sum_{i'\in \F_\vSc} \xoComp_{i'\vC}}$ of~${\vC}$ that is served by the facilities of~${S_\vSc}$ in~${\xo}$. 
						Then we send the fraction~${\rest{\demand}_i / w_\vSc}$ of~${\xComp_{\vC}^{\vSc}}$ to every facility~${i\in\F_{\vSc}}$, that is, we set~${\rest{\xComp}_{i\vC} = \rest{\demand}_i / w_\vSc\cdot \xComp_{\vC}^{\vSc}}$.
						
						Hence, Constraint~\eqref{lp:demand_of_client} holds, as each client~${\vC\in\C}$ is fully served:
						\[\sum_{i \in \F} \rest{\xComp}_{i\vC} =  \sum_{\vSc \in \starCenterSet} \xComp_{\vC}^{\vSc} \sum_{i \in \F_\vSc} \demand_i / w_\vSc  = \sum_{\vSc \in \starCenterSet} \xComp_{\vC}^{\vSc} = 1 \formulaPunctuationSpace.\]
						
						The assignment also implies the satisfaction of Constraint~\eqref{lp:opening_ge_demand}. To see this, consider 
						any client~${\vC\in\C}$, star instance~${\vSc\in\starCenterSet}$ and facility~${i\in\F_{\vSc}}$.
						If~${\rest{\yComp}_i = 1}$, then the constraint immediately holds. Otherwise, by Lemma~\ref{lem:restricted_solution}, we have 
						\[\rest{\yComp}_i = \fvol{\yo}{\F_\vSc} = \sum_{i'\in \F_\vSc} \yoComp_{i'} \ge \sum_{i'\in \F_\vSc} \xoComp_{i'\vC} = \xComp_{\vC}^{\vSc} \ge \rest{\xComp}_{i\vC} \formulaPunctuationSpace. \]
						Observe that the facility~${i}$ serves a total amount of
						\[\sum_{\vC \in \C} \rest{\xComp}_{i\vC} = \rest{\demand}_i / w_\vSc \sum_{\vC \in \C} \xComp_{\vC}^{\vSc} = \rest{\demand}_i\formulaPunctuationSpace.\]
						Hence, the bounds on capacity violation of Lemma~\ref{lem:restricted_solution} still hold. 

					Consequently,~${(\rest{\x}, \rest{\y})}$ is a feasible solution with capacity violations. 
					Recall that Lemma~\ref{lem:restricted_solution} restricts the openings of the facilities.
					For each facility~${i\in\F}$, either~${\rest{\yComp_i}\in\{0,1\}}$, or~${\rest{\yComp_i}=\fvol{\yo}{\F_\vSc}}$, where~${\F_\vSc}$ is the \starTail{} containing~${i}$. Furthermore, by Corollary~\ref{cor:minvolume-bundle} and~${\ell=2}$, we know that every \starTail{}~${\F_\vSc}$ has volume~${\fvol{\yo}{\F_\vSc}}$ at least~${1 - 1/\ell = 1/2}$. 
					Thus,~${(\rest{\x}, \rest{\y})}$ is even a~${[\puthalf, 1]}$-solution.
					
					Regarding the connection cost of~${\rest{\x}}$, we can assume that we first move the demands of the clients to the star centers and then move them from there to the \supporting facilities. 
					The cost of the first step is at most~${(2+2\ell)\opt}$ as Lemma~\ref{lem:move-to-bundle-centers} implies.
					The cost of the second step, for each star center~${\vSc}$, is at most~${(2 + 4/\varepsilon) \bsmall_\vSc}$
					by Lemma~\ref{lem:restricted_solution}. 
					Since we know that the total strict budget over all star instances is upper bounded by~${\sum_{\vSc\in\starCenterSet} \bsmall_\vSc\le(1+2\ell)\opt}$ (Corollary~\ref{cor:total_budget}), 
					the total cost is at most
					\begin{alignat*}{1}
					&(2 + 4/\varepsilon)(1+2\ell) \opt + (2+2\ell)\opt \\
					&=~ 20/\varepsilon \opt + 16\opt\formulaPunctuationSpace. 
					\qedHereInAlign
					\end{alignat*}
					
				\end{proof}
				
				\subsection{Computing a weak~${\{\puthalf,1\}}$-solution.}
				\label{bank_step}
				
				Let~${(\rest{\x}, \rest{\y})}$ be a~${[\puthalf, 1]}$-solution with capacity violations obtained by Corollary~\ref{cor:restr}.
				We will now transform it into a weak~${\{\puthalf, 1\}}$-solution~${(\hat{\x}, \hat{\y})}$. 
				\begin{definition}
					We define 
					\begin{itemize}
						\item~${N_1 = \{i\in \F \mid \rest{\yComp}_i = 1\}}$, and
						\item~${N_2 = \{i\in \F \mid \puthalf \le \rest{\yComp}_i < 1\}}$.
					\end{itemize}
					For each facility~${i \in N_2}$, let~${\rest{\demand}_i = \sum_{\vC\in\C} \rest{\xComp}_{i\vC}}$ be the demand served by~${i}$ and let~${\s{i}}$ be its closest facility in~${N_1 \cup N_2 \elementsetminus{i}}$ (recall Definition~\ref{def:closeness}). 
				\end{definition}
				
				\begin{lemma}\label{lem:gapSolution}
					We can efficiently compute a weak~${\{\puthalf,1\}}$-solution~${(\hat{\x}, \hat{\y})}$ 
					of volume~${\vol{\hat{\y}}\le k}$ and connection cost of at most~ $20/\varepsilon \opt + 16\opt$ 
					where 
					the capacity violation of every facility is at most~${2 + 2\varepsilon}$. 
					Moreover, the following inequality holds:
					
					\[\sum_{i \in N_2} \rest{\demand}_i(1 - \hat{\yComp}_i) \dist{\s{i}}{i} \leq \sum_{i \in N_2} \rest{\demand}_i (1 - \yComp_i') \dist{\s{i}}{i} \formulaPunctuationSpace.\]
				\end{lemma}
				\begin{proof}
					For each facility~${i \in N_1}$, set~${\hat{\yComp}_i = 1}$. 
					If~${k\ge|N_1|+|N_2|}$, then also set~${\hat{\yComp}_i = 1}$ for each facility~${i \in N_2}$.
					Thus, the volume~${\vol{\hat{\y}}}$ is bounded from above by~${k}$, and
					the claimed inequality holds, as the left side adds up to~${0}$.
					
					If~${k<|N_1|+|N_2|}$, then~${|N_2| > 2k - 2|N_1| - |N_2|}$.
					For each facility~${i\in N_2}$, we define~${\demand_i'\dist{\s{i}}{i}}$ as the \emph{weight of the facility~${i}$}.
					Sort all facilities in~${N_2}$ non-increasingly by their weights. 
					Then set the openings of the first~${2k - 2|N_1| - |N_2|}$ facilities of~${N_2}$ to~${1}$,
					and the openings of the remaining facilities to~${\puthalf}$. 
					By this assignment of openings, the volume of~${\hat{\y}}$ is exactly 
					\[\vol{\hat{\y}} = |N_1| + (2k - 2|N_1| - |N_2|) + \frac{1}{2}(|N_2| - (2k - 2|N_1| - |N_2|)) = k \formulaPunctuationSpace.\]
					
					Observe that among all opening vectors~${\vv}$ for~${\F}$ satisfying 
					the two conditions,~(i)~${\fvol{\vv}{N_2}\le k-|N_1|}$, and~(ii)~${\vvComp_i\ge\puthalf}$ for every~${i\in N_2}$,
					vector~${\hat{\y}}$ attains the maximum value for the \emph{objective}~${\sum_{i \in N_2} \demand_i' \hat{\yComp_i} \dist{\s{i}}{i}}$. 
					To see this, take any such opening vector and arbitrarily increase any fractional openings (each to at most~${1}$) until the total volume of the facilities in~${N_2}$ is~${k-|N_1|}$. This does not decrease the objective value. Then iteratively move portions of openings from facilities with small weight to facilities with high weight, until all facilities have openings~${1}$ or~${1/2}$. Also this iterative step does not decrease the objective value. 
					We obtain an opening vector that has the same openings for facilities in~${N_2}$ as~${\hat{\y}}$. Hence, it has the same objective value as~${\hat{\y}}$ and our observation follows. 
					
					Now, observe that~${\rest{\y}}$ fulfills the two conditions~(i) and~(ii). 
					Thus, we have 
					\[\sum_{i \in N_2} \demand_i' \yComp_i' \dist{\s{i}}{i} \leq  \sum_{i \in N_2} \demand_i' \hat{\yComp_i} \dist{\s{i}}{i}\formulaPunctuationSpace,\]
					which is equivalent to the inequality claimed in the lemma statement.

					Next, we set~${\hat{\x} = \rest{\x}}$ and thus have the same cost as~${\rest{\x}}$. 
					Consider the facility openings that decreased. 
					Since these openings 
					changed by a factor not smaller than~${\puthalf}$, 
					we can bound the capacity violation of their facilities by~${2 (1+\varepsilon)}$.
					The capacity violation of other facilities did not increase. Now, open facilities have in worst case 
					capacity violation \[\max\{1+\varepsilon, 2+\varepsilon\}\le2 + 2\varepsilon \formulaPunctuationSpace.\qedHereInAlign\]

				\end{proof}
				
					\begin{definition}
						We define 
						\begin{itemize}
							\item~${\hat{N}_1 = \{i \in \F~|~\hat{\yComp}_i = 1\}}$, and
							\item~${\hat{N}_2 = \{i \in \F~|~\hat{\yComp}_i = \puthalf\}}$.
						\end{itemize}
					\end{definition}
					
					We have~${\hat{N_1} \cup \hat{N_2} = N_1 \cup N_2}$, and in particular~${\hat{N}_2\subseteq N_2}$. 
					Thus,~${\s{i}}$ is well defined for every~${i\in\hat{N_2}}$ and we have~${\s{i}\in\hat{N_1} \cup \hat{N_2}}$.
				
				\subsection{Rounding a weak~${\{\puthalf, 1\}}$-solution~${\hat{\y}}$ to an integral solution~${\bar{\y}}$.}
				\label{final_rounding}
				In the last section, we obtained 
				a weak~${\{\puthalf, 1\}}$-solution~${(\hat{\x}, \hat{\y})}$ by Lemma~\ref{lem:gapSolution}.  
				In
				this section, we describe how to round this solution to an integral
				solution~${(\bar{\x},\bar{\y})}$.  For the sake of easier presentation, we
				assume that the demands of the clients have been moved to the facilities via
				the solution~${(\hat{\x},\hat{\y})}$ so that every facility~${i\in \hat{N}_1\cup\hat{N}_2}$ carries the demand~${\demand_i'}$.  
				We will describe how to obtain an integral opening vector~${\bar{\y}}$ and how to further reroute the demand 
				to facilities that are open in~${\bar{\y}}$. 
				We will give an upper
				bound of~${3+3\varepsilon}$ on the capacity violation and analyze the
				cost of the rerouting.  
				
				Combining the rerouting with the assignment~${\hat{\x}}$, we obtain~${\bar{\x}}$. 
				Altogether, this leads to the solution~${(\bar{\x},\bar{\y})}$ to the original instance (where the demand resides in
				the clients) with capacity violation~${3+3\varepsilon}$. The cost of
				this solution is the total cost of~${(\hat{\x},\hat{\y})}$ and the rerouting.
				
				\paragraph{Building facility trees.} In a similar way as in related works~\cite{CharikarkMedConst1999,Shanfei_Li_cap2014}, we construct a directed forest of in-trees spanning the facilities in~${\hat{N}_2}$.
				For this, we run Procedure \ShortTrees($\hat{N}_2$,~${\hat{N}_1}$) as described in Section~\ref{sec:bundles}. 
				By the construction, nodes in~${\hat{N}_1}$ may appear only as roots (Lemma~\ref{lem:root-son-distance}), and each node~${i\in\hat{N}_2}$ has either a directed edge to its closest node~${\s{i}}$ in~${\hat{N}_1 \cup \hat{N}_2 \elementsetminus{i}}$, or is a root of an in-tree. In the following, we will call the in-trees facility trees.

				\paragraph{Decomposing facility trees to rooted facility stars.} We cut each facility tree~${T}$ into \emph{facility stars} consisting of a root and a group of leafs. To this end, we greedily choose the leaf node~${i}$ that has the largest number of edges on the path to the root of its tree. Then we remove the subtree rooted at~${\s{i}}$. We call the removed subtree a facility star and use~${Q_{\s{i}}}$ to denote it. 		
				See Procedure~\FacilityStars($T$) and \fig~\ref{fig:facility_stars:decomposition}. 
				\begin{procedure}
					\caption{\FacilityStars($T$)}
					\While{there are at least two nodes in~${T}$}{
						choose a leaf node~${i}$ with the largest number of edges on the path from~${i}$ to the root\; 
						consider the subtree rooted at~${\s{i}}$ as a rooted facility star~${Q_{\s{i}}}$, and remove this subtree\;
					}
					\If{only one node~${i}$ is left and~${\hat{\yComp}_i < 1}$}{
						add~${i}$ to the star rooted at~${\s{i}}$ as a child of~${\s{i}}$\;
					}
				\end{procedure}
  \begin{figure}[htb]
  	\captionsetup[subfloat]{captionskip=5pt}
  	\centering
  	\subfloat[A facility tree~${T}$. Shaded in gray are the facility stars that have been found in the while-loop of Procedure \FacilityStars($T$). Note that the root~${i}$ has not been assigned to any star during the while-loop. 
  	If its opening~${\hat{\yComp}_i}$ is less than~${1}$, it will be added to the son~${i'}$ that satisfies~${i'=\s{i}}$. \label{fig:facility_stars:decomposition}]{\quad\quad\quad\quad\includegraphics{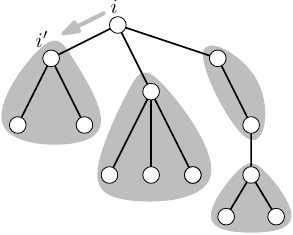}\quad\quad\quad\quad}\hfill
  	\subfloat[A facility star. Shaded in gray is a partition of the fractional facilities into pairs and triples. Here, also the root is a fractional facility. In each tuple, only one facility with the highest demand will be opened. It serves the demand of its closed partners.\label{fig:facility_stars:rounding}]{\quad\quad\quad\includegraphics[page=2]{Fig_Facility-Stars}\quad\quad\quad}\hfill
  	\caption{A facility star decomposed into facility stars, and a facility star partitioned into tuples.}
  	\label{fig:facility_stars}
  \end{figure}

				There is a special case, when only one facility~${i}$ is left in~${T}$. 
				If~${\hat{\yComp}_i = 1}$, we just ignore this node, otherwise~${\hat{\yComp}_i < 1}$ and we have~${i\in\hat{N}_2}$. 
				Note that~${i}$ can only be the root of~${T}$.
				Then, by Lemma~\ref{lem:root-son-distance},~${\s{i}}$ was a son of~${i}$. 
				Since~${\s{i}}$ was removed from~${T}$ but its father~${i}$ was not,~${\s{i}}$  
				must be the root of a facility star.
				Hence, we just add~${i}$ to~${Q_{\s{i}}}$.
				
				\paragraph{Rounding facility stars.} 
				Using the facility stars, we will round~${\hat{\y}}$ and reroute the demand to obtain our integral solution~${(\bar{\x},\bar{\y})}$.
				First, we open all facilities in~${\hat{N}_1}$ that do not belong to any facility star. Each of them is serving its own demand.
				Then we apply the following procedure on each facility star~${Q_\rFS}$ to open at most~${\floor{ \sum_{i \in Q_\rFS} \hat{\yComp}_i }}$ facilities in each of them: 
				
				If~${Q_\rFS}$ contains at least two fractional facilities, we partition all fractional facilities into pairs and triples; see \Fig.~\ref{fig:facility_stars:rounding}. 
				In each pair and triple, we open a facility~${i}$ that has the biggest demand~${\demand_i'}$, close all other facilities in its tuple and route the demand of the closed facilities to~${i}$. 
				If~${Q_\rFS}$ also contains a facility~${i}$ in~${\hat{N}_1}$, we open it and let it serve its own demand.

				If~${Q_\rFS}$ contains exactly one fractional facility~${i}$, then~${Q_\rFS}$ contains also an open facility~${i'}$ of~${\hat{N}_1}$. 
				If~${\demand_{i'}' < 2 \demand_{i}'}$, we open the facility~${i}$, close~${i'}$ and 
				move the demand of~${i'}$ to~${i}$. 
				Otherwise, if~${\demand_{i'}'/2 \geq \demand_{i}'}$, we open the facility~${i'}$, close~${i}$ and move the demand of~${i}$ to~${i'}$.

				Next, we show that our solution~${(\bar{\x},\bar{\y})}$ has at most~${k}$ open facilities and small capacity violation.
				
				\begin{lemma}\label{capacity_violation_3_eps}
					In the integral solution~${\bar{\y}}$, at most~${k}$ facilities are open and every facility has capacity violation 
					at most~${3 + 3\varepsilon}$. 
				\end{lemma}
				\begin{proof}
					Observe that every facility star~${Q_\rFS}$ contains at most~${\floor{\sum_{i \in Q_\rFS} \hat{\yComp}_i}}$ open facilities in~${\bar{\y}}$. 
					Among the facilities that do not belong to any facility star, 
					we open only those that already had opening~${1}$ in~${\hat{\y}}$. 
					So we have~${\sum_{i \in F} \bar{\yComp}_i \leq \sum_{i \in F} \hat{\yComp}_i \le k}$  which implies the first claim.

					In the solution~${(\hat{\x}, \hat{\y})}$, the capacity violation of each facility~${i \in \hat{N}_1 \cup \hat{N}_2}$ is at most~${2 + 2\varepsilon}$. 
					To bound the capacity violation in~${(\bar{\x}, \bar{\y})}$ by~${3 + 3\varepsilon}$, it suffices to bound the increase of capacity violation by the factor~${3/2}$.				
					
					Consider any facility~${i\in\hat{N}_2}$ that we opened in~${\bar{\y}}$. 
					By our choice to open it, we sent at most~${2\demand_i'}$ units of demand to it, either from closed facilities in its tuple, or from the root~${\rFS}$ of its facility star.
					Thus, the demand of~${i}$ increased by a factor at most~${3}$. Given that we simultaneously increased its opening by the sfactor~${2}$, the capacity violation of~${i}$ increased by a factor at most~${3/2}$.
					
					Next, consider any facility in~${i \in \hat{N}_1}$ that we opened. 
					If~${i}$ serves only its own demand, its capacity violation did not increase.
					Otherwise,~${i}$ 
					is also serving the demand of some facility~${i'\in\hat{N}_2}$ and we have~${\demand_{i}'/2 \geq \demand_{i'}'}$. Thus, the demand of~${i}$ increased by a factor at most~${3/2}$, and so its capacity violation.
				\end{proof}
				
				To this end, we bound the cost of our solution~${(\bar{\x},\bar{\y})}$. For this, we need to bound the rerouting cost. We will do it in two steps. First we provide an upper bound that depends on the facility demands and the distances within facility stars. In the second step, we relate this upper bound to~${\opt}$. 
				
				\begin{lemma}\label{cost_of_rounding}
					The cost of rerouting the demand from the facilities that are \supporting in~${\hat{\y}}$ to the facilities that are open in~${\bar{\y}}$ is at most~${2\sum_{i \in \hat{N}_2} \demand_i'\dist{\s{i}}{i}}$.
				\end{lemma}		
				\begin{proof}
					Consider any facility star~${Q_\rFS}$.
					To bound the rerouting cost, we assume, by triangle inequality, that demand is rerouted only along the edges of~${Q_\rFS}$.
					For each such edge~${(i,\rFS)}$, we have~${\rFS=\s{i}}$ and~${i\in\hat{N}_2}$. We don't have~${i\in\hat{N}_1}$ as nodes of~${\hat{N}_1}$ can appear only as a roots. 
					Thus, it suffices to show that each edge~${(i,\rFS)}$ carries at most~${2\demand_i'}$ units of demand.
					Consider any such edge~${(i,\rFS)}$.
					If~${i}$ is closed, it sends its demand~${\demand_i'}$ along~${(i,\rFS)}$ to~${\rFS}$ (from where the demand might be further routed) and no other demand is routed along~${(i,\rFS)}$.
					If~${i}$ is opened, it receives at most~${2\demand_i'}$ units of demand as discussed in the proof of Lemma~\ref{capacity_violation_3_eps}, and no other demand is routed along~${(i,\rFS)}$.
				\end{proof}
				
				Before we further bound the term~${\sum_{i \in \hat{N}_2} \demand_i'\dist{\s{i}}{i}}$, we first show the following helpful inequality.

				\begin{lemma}\label{lem:helpful_inequality}
					For every~${\vC\in \C}$, we have 
					\[\sum_{i \in N_2} \xComp'_{i\vC}  (1 - \yComp'_i) \dist{\s{i}}{i} \leq 2 \sum_{i \in N_1 \cup N_2}  \xComp'_{i\vC} \dist{i}{\vC} \formulaPunctuationSpace.\] 
				\end{lemma}
				\begin{proof}
					Fix any~${\vC\in\C}$.
					By Constraint~\eqref{lp:opening_ge_demand}, we have \[{1 - \yComp'_i \leq 1 - \xComp'_{i\vC} ~=\! \sum_{i' \in N_1 \cup N_2 \elementsetminus{i}} \xComp'_{i'\vC}}\] for every~${i\in N_2}$, and thus 
					\begin{alignat*}{2}
					&&~&\sum_{i \in N_2}  \xComp'_{i\vC}  (1 - \yComp'_i) \dist{\s{i}}{i}
					\\&\le&& \sum_{i \in N_2}   \xComp'_{i\vC} \sum_{i' \in N_1 \cup N_2 \elementsetminus{i}} \xComp'_{i'\vC} \dist{\s{i}}{i} \formulaPunctuationSpace. 
					\end{alignat*}
					By the definition of~${\s{i}}$,~${\dist{\s{i}}{i} \leq \dist{i'}{i}}$ for every~${i' \in N_1 \cup N_2 \elementsetminus{i}}$. Using this, we can further upper bound the expression above by
					\begin{alignat*}{3}
					&&~& \sum_{i \in N_2}   \xComp'_{i\vC} \sum_{i' \in N_1 \cup N_2 \elementsetminus{i}}\xComp'_{i'\vC}\dist{i'}{i}&&
					\\&\leq&& \sum_{i \in N_2}   \xComp'_{i\vC} \sum_{i' \in N_1 \cup N_2 \elementsetminus{i}} \xComp'_{i'\vC}(\dist{i'}{\vC} &&+ \dist{i}{\vC}) \tag{Triangle inequality}
					\\&=&&\sum_{i \in N_2}  \xComp'_{i\vC} \sum_{i' \in N_1 \cup N_2 \elementsetminus{i}} \xComp'_{i'\vC} \dist{i'}{\vC} 
					&&+  \sum_{i \in N_2} \xComp'_{i\vC} \sum_{i' \in N_1 \cup N_2 \elementsetminus{i}} \xComp'_{i'\vC} \dist{i}{\vC}
					\\&\le&&\sum_{i' \in N_1 \cup N_2}  \xComp'_{i'\vC} \dist{i'}{\vC} \sum_{i \in N_2} \xComp'_{i\vC}
					&&+ \sum_{i \in N_2}   \xComp'_{i\vC} \dist{i}{\vC} \sum_{i' \in N_1 \cup N_2 \elementsetminus{i}} \xComp'_{i'\vC} 
					\\&\le&&\sum_{i' \in N_1 \cup N_2}  \xComp'_{i'\vC} \dist{i'}{\vC} &&+ \sum_{i \in N_2} \xComp'_{i\vC} \dist{i}{\vC}
					\\&\leq&& 2  \sum_{i \in N_1 \cup N_2} \xComp'_{i\vC} \dist{i}{\vC}\formulaPunctuationSpace, && 
					\end{alignat*}
					where the second last inequality follows from~${\sum_{i \in N_1 \cup N_2} \xComp'_{i\vC} = 1}$ for each~${\vC \in \C}$, given by Constraint~\eqref{lp:demand_of_client}.
				\end{proof}
				
				We are ready to relate the rerouting cost to~${\opt}$.
				
				\begin{lemma}\label{cost_of_star}
					The sum~${\sum_{i \in \hat{N}_2} \demand_i'\dist{\s{i}}{i}}$ 
					is at most~${80/\varepsilon\opt +\, 64\opt}$.
				\end{lemma}
				\begin{proof}
					
					For each~${i \in\hat{N}_2}$, we have~${\hat{\yComp_i} = \puthalf}$, so
					\[\demand_i'\dist{\s{i}}{i} = 2 \demand_i' (1 - \hat{\yComp_i}) fvol{\s{i}}{i}\formulaPunctuationSpace.\]
					Thus, 
					\begin{alignat*}{2}
					&&~&\hphantom{2} \sum_{i \in \hat{N}_2} \demand_i'\dist{\s{i}}{i}
					\\&=&&2 \sum_{i\in\hat{N}_2} \demand_i' (1 - \hat{\yComp_i})\dist{\s{i}}{i} 
					\\&\le&&2 \sum_{i \in N_2}  \demand_i' (1 - \hat{\yComp_i}) \dist{\s{i}}{i} \tag{$\hat{N}_2 \subseteq N_2$}
					\\&\leq&~& 2 \sum_{i \in N_2} \demand_i' (1 - \yComp_i') \dist{\s{i}}{i}  \tag{Lemma~\ref{lem:gapSolution}}
					\\&=&& 2 \sum_{\vC\in \C} \sum_{i \in N_2} \xComp'_{i\vC} (1 - \yComp_i') \dist{\s{i}}{i} \tag{Definition of~${(\demand'_i)_{i\in\F}}$}
					\\&\le&& 4 \sum_{\vC \in \C}\sum_{i \in N_1 \cup N_2} \xComp_{i\vC}'\dist{i}{\vC} \tag{Lemma~\ref{lem:helpful_inequality}}
					\\&\leq&& 4 \left(20/\varepsilon \opt \,+\, 16\opt\right) \tag{Corollary \ref{cor:restr}}
					\\&=&& 80/\varepsilon \opt \,+\, 64\opt\formulaPunctuationSpace.\qedHereInAlign
					\end{alignat*} 
				\end{proof}
				
				Now we have all the ingredients to bound the cost of our solution and to prove Theorem~\ref{thm:main_3_eps}.
				\begin{proof}[Proof of Theorem~\ref{thm:main_3_eps}]
					From Lemma \ref{lem:gapSolution}, we know that the cost of solution~${(\hat{\x}, \hat{\y})}$ is at most~${20/\varepsilon \opt + \, 16\opt}$. 
					This corresponds also to the cost of sending the demand from the clients to the facilities \supporting in~${\hat{\y}}$.
					Using Lemmas \ref{cost_of_rounding} and \ref{cost_of_star}, we bound the cost of rerouting the demand to the facilities open in~${\bar{\y}}$ by
					\[{2(80/\varepsilon\opt + 64\opt)}\formulaPunctuationSpace.\] 
					Summing this up, we obtain
					\[180/\varepsilon \opt \,+\,  144\opt \]
					as an upper bound for the cost of our solution~${(\bar{\x}, \bar{\y})}$ with capacity violation~${3+3\varepsilon}$.
				\end{proof}

			\section{Concluding Remarks and Open Questions}
			In this \kapitel, we gave the first approximation algorithms for hard-capacitated~\PROBkfl problems, where we considered either non-uniform capacities or non-uniform opening costs.
			Both algorithms are based on the standard LP relaxation, a reduction to single-demand-node instances, and a tree structure to guide the distribution of demand. 
			
			Our results imply two insights on the integrality gap of the standard~LP:
			For uniform capacities, the~${2}$ barrier on capacity violation is tight up to an arbitrarily small constant. For non-uniform capacities, the barrier is located between~${2}$ and~${3}$; it would be interesting to pinpoint it tighter. 
		
			It also remains open to construct an algorithm for the generalization of the two settings above, that is, for hard-capacitated~\PROBkfl where both, the capacities and openings, are non-uniform.
			For such an algorithm, it would be appealing to base it on a new LP relaxation like the one introduced by Li~\cite{LiSODA2016}

			Eventually, the big open question is whether capacitated \PROBkmed admits a constant-factor approximation algorithm.
			
			\section*{Acknowledgments}
			We are very thankful to Neelima Gupta for pointing out some inaccuracies in the previous version~\cite{ByrkaCapKmed2015} of this paper.


\begin{thebibliography}{10}

\bibitem{AardalBGL15}
K.~Aardal, P.~L. van~den Berg, D.~Gijswijt, and S.~Li.
\newblock Approximation algorithms for hard capacitated {\(k\)}-facility
  location problems.
\newblock {\em European Journal of Operational Research}, 242(2):358--368,
  2015.

\bibitem{Ageev2004}
A.~Ageev and M.~Sviridenko.
\newblock Pipage rounding: A new method of constructing algorithms with proven
  performance guarantee.
\newblock {\em Journal of Combinatorial Optimization}, 8(3):307--328, 2004.

\bibitem{Aggarwal2013}
A.~Aggarwal, A.~Louis, M.~Bansal, N.~Garg, N.~Gupta, S.~Gupta, and S.~Jain.
\newblock A 3-approximation algorithm for the facility location problem with
  uniform capacities.
\newblock {\em Mathematical Programming}, 141(1):527--547, 2013.

\bibitem{An_Bhaskara_Svensson2015}
H.-C. An, A.~Bhaskara, C.~Chekuri, S.~Gupta, V.~Madan, and O.~Svensson.
\newblock Centrality of trees for capacitated {\(k\)}-center.
\newblock {\em Math. Program.}, 154(1-2):29--53, Dec. 2015.

\bibitem{An_LP_Approx_CFL_2014}
H.-C. An, M.~Singh, and O.~Svensson.
\newblock {LP}-based algorithms for capacitated facility location.
\newblock In {\em Proceedings of the 2014 IEEE 55th Annual Symposium on
  Foundations of Computer Science}, FOCS '14, pages 256--265, Washington, DC,
  USA, 2014. IEEE Computer Society.

\bibitem{Arya2001}
V.~Arya, N.~Garg, R.~Khandekar, A.~Meyerson, K.~Munagala, and V.~Pandit.
\newblock Local search heuristic for {\(k\)}-median and facility location
  problems.
\newblock In {\em Proceedings of the Thirty-third Annual ACM Symposium on
  Theory of Computing}, STOC '01, pages 21--29, New York, NY, USA, 2001. ACM.

\bibitem{Bansal2012}
M.~Bansal, N.~Garg, and N.~Gupta.
\newblock A 5-approximation for capacitated facility location.
\newblock In {\em Proceedings of the 20th Annual European Conference on
  Algorithms}, ESA'12, pages 133--144, Berlin, Heidelberg, 2012.
  Springer-Verlag.

\bibitem{ByrkaCapKmed2015}
J.~Byrka, K.~Fleszar, B.~Rybicki, and J.~Spoerhase.
\newblock Bi-factor approximation algorithms for hard capacitated
  {\(k\)}-median problems.
\newblock In {\em Proceedings of the Twenty-Sixth Annual ACM-SIAM Symposium on
  Discrete Algorithms}, SODA '15, pages 722--736, Philadelphia, PA, USA, 2015.
  Society for Industrial and Applied Mathematics.

\bibitem{Byrka_budgeted2015corrected}
J.~Byrka, T.~Pensyl, B.~Rybicki, A.~Srinivasan, and K.~Trinh.
\newblock An improved approximation for {\(k\)}-median, and positive
  correlation in budgeted optimization.
\newblock Technical report, 2014.
\newblock See latest revised version (2016).

\bibitem{Byrka_budgeted2015}
J.~Byrka, T.~Pensyl, B.~Rybicki, A.~Srinivasan, and K.~Trinh.
\newblock An improved approximation for {\(k\)}-median, and positive
  correlation in budgeted optimization.
\newblock In {\em Proceedings of the Twenty-Sixth Annual ACM-SIAM Symposium on
  Discrete Algorithms}, SODA '15, pages 737--756, Philadelphia, PA, USA, 2015.
  Society for Industrial and Applied Mathematics.
\newblock Contains corrigendum.

\bibitem{ByrkaIpco2016}
J.~Byrka, B.~Rybicki, and S.~Uniyal.
\newblock An approximation algorithm for uniform capacitated {\(k\)}-median
  problem with 1+{\(\varepsilon\)} capacity violation.
\newblock In {\em Proceedings of the 18th International Conference on Integer
  Programming and Combinatorial Optimization - Volume 9682}, IPCO 2016, pages
  262--274, New York, NY, USA, 2016. Springer-Verlag New York, Inc.

\bibitem{CharikarkMedConst1999}
M.~Charikar, S.~Guha, E.~Tardos, and D.~B. Shmoys.
\newblock A constant-factor approximation algorithm for the {\(k\)}-median
  problem (extended abstract).
\newblock In {\em Proceedings of the Thirty-first Annual ACM Symposium on
  Theory of Computing}, STOC '99, pages 1--10, New York, NY, USA, 1999. ACM.

\bibitem{Charikar2012}
M.~Charikar and S.~Li.
\newblock A dependent {LP}-rounding approach for the {\(k\)}-median problem.
\newblock In {\em Proceedings of the 39th International Colloquium Conference
  on Automata, Languages, and Programming - Volume Part I}, ICALP'12, pages
  194--205, Berlin, Heidelberg, 2012. Springer-Verlag.

\bibitem{Chuzhoy2005}
J.~Chuzhoy and Y.~Rabani.
\newblock Approximating {\(k\)}-median with non-uniform capacities.
\newblock In {\em Proceedings of the Sixteenth Annual ACM-SIAM Symposium on
  Discrete Algorithms}, SODA '05, pages 952--958, Philadelphia, PA, USA, 2005.
  Society for Industrial and Applied Mathematics.

\bibitem{Cygan2012}
M.~Cygan, M.~Hajiaghayi, and S.~Khuller.
\newblock {LP} rounding for {\(k\)}-centers with non-uniform hard capacities.
\newblock In {\em Proceedings of the 2012 IEEE 53rd Annual Symposium on
  Foundations of Computer Science}, FOCS '12, pages 273--282, Washington, DC,
  USA, 2012. IEEE Computer Society.

\bibitem{DemirciL16}
H.~G. Demirci and S.~Li.
\newblock Constant approximation for capacitated {\(k\)}-median with {(1} +
  {\(\varepsilon\)})-capacity violation.
\newblock Technical report, 2016.

\bibitem{Gandhi2006}
R.~Gandhi, S.~Khuller, S.~Parthasarathy, and A.~Srinivasan.
\newblock Dependent rounding and its applications to approximation algorithms.
\newblock {\em J. ACM}, 53(3):324--360, May 2006.

\bibitem{GroverPrivate17}
S.~Grover, N.~Gupta, and A.~Pancholi.
\newblock private communication, 04 2017.

\bibitem{Grover16}
S.~Grover, N.~Gupta, and A.~Pancholi.
\newblock Constant factor approximation algorithms for uniform hard capacitated
  facility location problems: Natural {LP} is not too bad.
\newblock Technical report, 2017.

\bibitem{Guha_Khuller1999}
S.~Guha and S.~Khuller.
\newblock Greedy strikes back.
\newblock {\em J. Algorithms}, 31(1):228--248, Apr. 1999.

\bibitem{Hochbaum_Shmoys1985}
D.~S. Hochbaum and D.~B. Shmoys.
\newblock A best possible heuristic for the {\(k\)}-center problem.
\newblock {\em Math. Oper. Res.}, 10(2):180--184, May 1985.

\bibitem{Jain2003}
K.~Jain, M.~Mahdian, E.~Markakis, A.~Saberi, and V.~V. Vazirani.
\newblock Greedy facility location algorithms analyzed using dual fitting with
  factor-revealing {LP}.
\newblock {\em J. ACM}, 50(6):795--824, Nov. 2003.

\bibitem{Levi2012}
R.~Levi, D.~B. Shmoys, and C.~Swamy.
\newblock {LP}-based approximation algorithms for capacitated facility
  location.
\newblock {\em Math. Program.}, 131(1-2):365--379, Feb. 2012.

\bibitem{ShiLi2013}
S.~Li.
\newblock A 1.488 approximation algorithm for the uncapacitated facility
  location problem.
\newblock {\em Inf. Comput.}, 222:45--58, Jan. 2013.

\bibitem{Shanfei_Li_cap2014}
S.~Li.
\newblock An improved approximation algorithm for the hard uniform capacitated
  {\(k\)}-median problem.
\newblock In K.~Jansen, J.~D.~P. Rolim, N.~R. Devanur, and C.~Moore, editors,
  {\em Approximation, Randomization, and Combinatorial Optimization. Algorithms
  and Techniques (APPROX/RANDOM 2014)}, volume~28 of {\em Leibniz International
  Proceedings in Informatics (LIPIcs)}, pages 325--338, Dagstuhl, Germany,
  2014. Schloss Dagstuhl--Leibniz-Zentrum fuer Informatik.

\bibitem{ShiLi2015}
S.~Li.
\newblock On uniform capacitated {\(k\)}-median beyond the natural {LP}
  relaxation.
\newblock In {\em Proceedings of the Twenty-Sixth Annual ACM-SIAM Symposium on
  Discrete Algorithms}, SODA '15, pages 696--707, Philadelphia, PA, USA, 2015.
  Society for Industrial and Applied Mathematics.

\bibitem{LiSODA2016}
S.~Li.
\newblock Approximating capacitated {\(k\)}-median with {(1 +
  \(\varepsilon\))k} open facilities.
\newblock In {\em Proceedings of the Twenty-Seventh Annual ACM-SIAM Symposium
  on Discrete Algorithms}, SODA '16, pages 786--796, Philadelphia, PA, USA,
  2016. Society for Industrial and Applied Mathematics.

\bibitem{Li_Svensson2013}
S.~Li and O.~Svensson.
\newblock Approximating {\(k\)}-median via pseudo-approximation.
\newblock In {\em Proceedings of the Forty-fifth Annual ACM Symposium on Theory
  of Computing}, STOC '13, pages 901--910, New York, NY, USA, 2013. ACM.

\bibitem{Tardos1985}
E.~Tardos.
\newblock A strongly polynomial minimum cost circulation algorithm.
\newblock {\em Combinatorica}, 5(3):247--255, July 1985.

\bibitem{Williamson_Shmoys2011}
D.~P. Williamson and D.~B. Shmoys.
\newblock {\em The Design of Approximation Algorithms}.
\newblock Cambridge University Press, New York, NY, USA, 1st edition, 2011.

\end{thebibliography}
		\end{document}